\newtheorem{theorem}{Theorem}[section]
\newtheorem{lemma}[theorem]{Lemma}
\newtheorem{claim}[theorem]{Claim}
\newtheorem{corollary}[theorem]{Corollary}
\newtheorem{definition}[theorem]{Definition}
\newcommand{\del}{\delta}
\newcommand{\polylog}{\mbox{polylog}}
\newcommand{\sj}[1]{#1}
\newcommand{\zxx}[1]{#1}
\newcommand{\zt}[1]{#1}
\newcommand{\mikel}[1]{#1}
\newcommand{\zx}[1]{#1}
\newcommand{\alfq}[1][ ]{\alpha_{q}\left( \bar{p}_{#1}\right) }
\newcommand{\alfexpq}[1][ ]{1-\frac{2q}{q-1}\left(p-\bar{p}_{#1}\right)-\frac{q}{q-1}\pstar}
\newcommand{\capopq}{\underset{{\bar{p}\in\left[ 0,p\right] }}{\min}\left[ \alfq\left( 1-H_{q}\left( \frac{\bar{p}}{\alfq}\right) \right) \right]}
\newcommand{\capq}[1][0]{\alfq[#1]\left( 1-H_{q}\left( \frac{\bar{p}_{#1}}{\alfq[#1]}\right) \right) }
\newcommand{\leftcw}{t}
\newcommand{\rightcw}{h}
\newcommand{\mt}{t}
\newcommand{\pref}[1]{{#1}_{\leftcw}}
\newcommand{\suff}[1]{{#1}_{\rightcw}}
\newcommand{\phat}{\hat{p}}
\newcommand{\ptil}{\tilde{p}}
\newcommand{\phatt}{\pref{\phat}}
\newcommand{\pbar}{\bar{p}}
\newcommand{\ppp}{p}
\newcommand{\pstar}{p^{\star}}
\newcommand{\chunknum}{1/\echunk} 
\newcommand{\chunk}{n\echunk}
\newcommand{\choiceschunk}{\left\lbrace \chunk,2\chunk,\cdots,n-\chunk\right\rbrace }
\newcommand{\setchend}{\mathcal{T}}
\newcommand{\intvera}{0,\frac{q-1}{q}}
\newcommand{\intverr}{0,\frac{q-1}{2q}}
\newcommand{\msg}{\mathcal{U}}
\newcommand{\secr}{\mathcal{S}}
\newcommand{\mlist}{\mathcal{L}}
\newcommand{\exlist}{\mathcal{L}{\left(m\right) }}
\newcommand{\clist}{{\exlist}}
\newcommand{\mlists}{{L}}
\newcommand{\clists}{{L{\left(m\right)}}}
\newcommand{\mbfx}{\mathbf{X}}
\newcommand{\mbfy}{\mathbf{Y}}
\newcommand{\mbfu}{\mathbf{U}}
\newcommand{\mx}{\mathbf{x}}
\newcommand{\my}{\mathbf{y}}
\newcommand{\cwlist}{\left\lbrace \mathbf{x}_{1},\mathbf{x}_{2},\cdots,\mathbf{x}_{\clists}\right\rbrace }
\newcommand{\echunk}{\theta}
\newcommand{\echval}{\frac{\erate^{2}}{9q^{2}}}
\newcommand{\egood}{q^{-n\echunk^4}}
\newcommand{\erate}{\epsilon}
\newcommand{\elist}{\frac{\erate}{4}}
\newcommand{\errdist}{\frac{(n-t)\erate^2}{9q^2}}
\newcommand{\epht}[1][\mt]{\frac{\erate^{2}}{9q^{2}\alpha_q^{2}\left(\bar{p}_{#1}\right)}}
\newcommand{\lat}[1][t]{\lambda_{#1}}
\newcommand{\mcode}{\mathcal{C}}
\newcommand{\megacode}[1][k]{\mathcal{C}_{1}\circ\mathcal{C}_{2} \circ\cdots\circ \mathcal{C}_{#1}}
\newcommand{\megacodw}[1][k]{\mathcal{C}_{1}\left( m,s_1\right) \circ\mathcal{C}_{2}\left( m,s_2\right) \circ\cdots\circ \mathcal{C}_{#1}\left( m,s_{#1}\right) }
\newcommand{\rmegacode}[1][k]{\mathcal{C}_{#1+1}\circ\mathcal{C}_{#1+2}\circ\cdots\circ \mathcal{C}_{\chunknum}}
\newcommand{\rmegacodw}[1][k]{\mathcal{C}_{#1+1}\left( m,s_{#1+1}\right) \circ\mathcal{C}_{#1+2}\left( m,s_{#1+2}\right) \circ\cdots\circ \mathcal{C}_{\chunknum}\left( m,s_{\chunknum}\right) }
\newcommand{\mpm}{m^{\prime}}
\newcommand{\tstar}{t^{\star}}
\newcommand{\kstar}{k^{\star}}
\newcommand{\blen}{b}
\newcommand{\epsbar}{\bar{\epsilon}}
\newcommand{\dl}{\Delta}
\newcommand{\keyspace}{\left[q^{nS}\right]}
\newcommand{\msgspace}{\left[q^{nR}\right]}
\newcommand{\qaryset}[1][ ]{\left\lbrace0,1,\cdots,q-1\right\rbrace ^{#1}}
\newcommand{\blistord}{O\left( \frac{1}{\erate}\right) }
\newcommand{\hmd}[2]{d_{H}\left(#1,#2\right)}
\newcommand{\xbab}{\mathbf{x}_{b}} 
\newcommand{\xpus}{\mathbf{x}_{p}}
\newcommand{\ybab}{\mathbf{y}_{b}}
\newcommand{\ypus}{\mathbf{y}_{p}}
\newcommand{\ebab}{\frac{\epsilon}{2}}
\theoremstyle{remark}
\newtheorem{remark}[theorem]{Remark}
\begin{document}
	\title{The Capacity of Online (Causal) $q$-ary Error-Erasure Channels}
	\author{
		Z. Chen
		\thanks{Department of Electrical and Computer Engineering, University of Maryland, College Park, \texttt{chenztan@umd.edu}}
		\and 
		S. Jaggi
		\thanks{Department of Information Engineering, The Chinese University of Hong Kong, \texttt{jaggi@ie.cuhk.edu.hk}}
		\and 
		M. Langberg
		\thanks{Department of Electrical Engineering, State University of New York at Buffalo, \texttt{mikel@buffalo.edu}}
	}
	\date{}
	\maketitle
	

\begin{abstract}
	\noindent%
	In the $q$-ary online (or ``causal'') channel coding model, a sender wishes to communicate a message to a receiver by transmitting a codeword $\mathbf{x} =(x_1,\ldots,x_n) \in \{0,1,\ldots,q-1\}^n$ symbol by symbol via a channel limited to at most $pn$ errors and/or $\pstar n$ erasures. 
	The channel is ``online'' in the sense that at the $i$th step of communication the channel decides whether to corrupt the $i$th symbol or not based on its view so far, i.e., its decision depends only
	on the transmitted symbols $(x_1,\ldots,x_i)$. 
	This is in contrast to the classical adversarial channel in which the corruption is chosen by a channel that has a full knowledge on the sent codeword $\mathbf{x}$. 
	
	In this work we study the capacity of $q$-ary online channels for a combined corruption model, in which the channel may impose at most $pn$ {\em errors} and at most $\pstar n$ {\em erasures} on the transmitted codeword. The online channel (in both the error and erasure case) has seen a number of recent studies which present both upper and lower bounds on its capacity. In this work, we give a full characterization of the capacity as a function of $q,p$, and $\pstar$. 
\end{abstract}

\newpage
\setcounter{page}{1}

\section{Introduction}
Reliable communication over different types of channels has been extensively studied in electrical engineering and computer science. One frequently used communication channel model is the binary erasure channel, in which a bit (a zero or one) is either transmitted intact or erased. Specifically, an erased bit is a visible error, denoted by a special symbol $\Lambda$, which can be identified directly by a receiver. Another frequently studied channel model is the binary bit-flip channel, where bits can be flipped to their complement. Further generalization of channel alphabet to size of $q\geq 2$ leads to general $q$-ary channels. 

There are two broad approaches to model (erasure or error) corruptions imposed by the channel. Shannon's approach is to model the channel as a stochastic process; Hamming's approach is a combinatorial approach to model the channel by an adversarial process that can manipulate parts of the transmitted codeword arbitrarily, subject only to a limit on the number of corrupted symbols. 

It is interesting to further classify the Hamming model for an adversarial channel in terms of the adversary's knowledge of the codeword.
Some examples include the standard adversarial channel (also referred to here as the {\em omniscient} adversary), e.g., \cite{gilbert1952comparison,varshamov1957estimate,mceliece1977new}, the {\em causal} (or {\em online}) adversary, e.g., \cite{dey2013codes,langberg2009binary,haviv2011beating,dey2012improved,bassily2014causal,zitan2015causal}, and the {\em oblivious} adversary, e.g.,\cite{lapidoth1998reliable,Langberg:08,guruswami2010codes}; from the strongest adversarial power to weakest. In one extreme, the omniscient adversarial model (a.k.a. the classical adversarial model) assumes that the channel has full knowledge of the entire codeword, and based on this knowledge, the channel can maliciously decide how to corrupt the codeword. In the other extreme, the oblivious adversarial model is a model in which the channel is clueless about the codeword and generates corruptions in a manner that is independent of the codeword being transmitted. The causal adversarial model is an intermediate model between the two extremes, in which the channel decides whether to tamper with a particular symbol of the codeword based only on the symbols transmitted so far. 
{There are significant differences between the different adversarial models classified above (with respect to their capacity). We elaborate on these differences shortly.}

In this work we focus on causal adversaries, and study reliable communication over \textit{$q$-ary causal adversarial channels}. Specifically, we consider the following communication scenario. A sender (Alice) wishes to transmit a message $m\in\msg$ to a receiver (Bob) over a $q$-ary causal adversarial channel by encoding $m$ into a codeword $\mathbf{x}=(x_1,x_2,\cdots,x_n)\in\qaryset[n]$ of length $n$. However, the channel is governed by a causal adversary (Calvin), who can observe $\mathbf{x}$ and impose up to a $pn$ errors and $\pstar n$ erasures. More importantly, Calvin decides whether to tamper with the $i$-th symbol of the codeword based only on the symbols $(x_1,x_2,\cdots,x_i)$ transmitted thus far. 
Roughly, if $q^{nR}$ distinct messages can be sent using codewords of length $n$, we say that a code achieves rate $R$. We are interested in the maximum achievable rate $R$, which is the capacity $C$ of the channel. (See Section~\ref{sec:model} for precise definitions.)

\subsection{Our Results}
In this work we characterize the capacity of $q$-ary causal channels as a function of alphabet size $q$, error capability $p$, and erasure capability $\pstar$. Specifically, we propose and analyze
an attack strategy similar to those for the binary cases~\cite{dey2012improved, bassily2014causal}  (to be described in detail shortly), which gives an upper bound on the capacity, and a coding scheme similar to the one given in \cite{zitan2015causal}, which implies a lower bound on the capacity matching our upper bound. 
Our main result can be summarized by the following theorem.

\begin{theorem}
	\label{thm:main:capacity}
	The capacity $C$ of $q$-ary causal adversarial channels with symbol errors and erasures is
	\begin{equation}
		C=\left \{
		\begin{array}{lc}
			\capopq, \mbox{ }& p\in\left[0,\frac{q-1}{2q}\right],\pstar\in\left[0,\frac{q-1}{q}\right],\text{and }p+\pstar\leq\frac{q-1}{q}, \\
			0,\mbox{ } & \text{otherwise,}
		\end{array}
		\right .
	\end{equation} where $\alfq=\alfexpq$.
\end{theorem}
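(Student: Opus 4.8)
The plan is to prove the two matching bounds separately: the converse $C\le\capopq$ by exhibiting an explicit causal attack, the achievability $C\ge\capopq$ by exhibiting an explicit code (both within the region where $\capopq$ is a meaningful, non-negative quantity), and then to dispose of the remaining parameters by a direct zero-rate attack. For the converse I would use a $q$-ary error-erasure ``babble-and-push'' attack, extending the binary ones of \cite{dey2012improved,bassily2014causal}. Fix the minimizing $\pbar\in[0,p]$ and a breakpoint $\mt=\tau n$. In the \emph{babble} phase ($i\le\mt$) Calvin relays the prefix of $\mx$ unchanged but records $(x_1,\dots,x_{\mt})$; at time $\mt$ he forms the list of messages whose codewords lie within Hamming distance $\approx\pbar n$ of $\mx$ on the first $\mt$ coordinates. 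A $q$-ary sphere-covering count on the $q^{Rn}$ codeword prefixes shows this ``confusable cloud'' is exponentially large whenever $R>\alfq(1-H_q(\pbar/\alfq))$ — this is where the $q$-ary entropy and the effective length $\alfq n$ enter. Calvin draws a uniformly random fake codeword $\cpm$ (message $\mpm\neq m$) from the cloud and runs the \emph{push} phase on the suffix: with his $(p-\pbar)n$ remaining symbol-errors he rewrites disagreement positions of $\mx$ and $\cpm$ to a symbol ``midway'' between the two, and with his $\pstar n$ erasures he blanks out further disagreement positions, the $q$-ary accounting being that an error neutralizes a disagreement at rate $\tfrac{q-1}{2q}$ per unit budget and an erasure at rate $\tfrac{q-1}{q}$. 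The net effect is that $\mx$ and $\cpm$ are forced to agree, outside Calvin's controlled window, on a block of length $\approx\alfq n$ while still differing in at most $\approx\pbar n$ of its positions; Bob's received word is then consistent — within both budgets — with two distinct messages, so every decoder errs with constant probability unless $R\le\alfq(1-H_q(\pbar/\alfq))$. Optimizing Calvin's $\pbar$ (and $\tau$) gives $C\le\capopq$.

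For achievability I would use a layered stochastic-encoding code in the spirit of \cite{zitan2015causal}: partition $[n]$ into $\chunknum$ chunks of length $\chunk$, and in chunk $j$ have Alice send $\mcode_j(m,s_j)$, where $m$ is the message, $s_j$ is fresh local randomness, and the component codes are chosen by the probabilistic method so that each is list-decodable up to the relevant $q$-ary radius and later chunks also carry short hashes of $(m,s_1,\dots,s_{j-1})$. Given $\my$ with its visible erasure pattern, Bob first estimates the ``profile'' of Calvin's behaviour — essentially the effective split $\pbar$ of Calvin's errors around the time at which his view of the codeword became determined, and which chunks were only lightly corrupted — then list-decodes the corresponding window of effective length $\approx\alfq n$ and uses the hashes in the lightly-corrupted chunks to prune the list to the true $m$. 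The point is that against any causal strategy there is an effective $\pbar\in[0,p]$ for which the relevant window carries at most $\approx\pbar n$ errors over an effective block of length $\approx\alfq n$, so every rate below $\alfq(1-H_q(\pbar/\alfq))$, hence below $\capopq$, is achievable with vanishing error probability as $\echunk\to0$.

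The remaining parameters are those violating one of $p\le\frac{q-1}{2q}$, $\pstar\le\frac{q-1}{q}$, $p+\pstar\le\frac{q-1}{q}$; there the minimand is non-positive for every feasible $\pbar$, and a cruder attack gives $C=0$. By a $q$-ary Plotkin argument any positive-rate code has two codewords $\mx,\cpm$ at relative distance at most $\tfrac{q-1}{q}$; Calvin, targeting such a pair, erases some of their disagreement positions and on the remaining ones alternately steers the received symbol toward each of the two codewords, which (for $q$-ary symbols one counts the error and erasure budgets exactly as in the push phase above) leaves Bob's observation consistent, within both budgets, with two distinct messages, so no code corrects it.

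The step I expect to be the main obstacle is the achievability analysis: one must show that Bob's decoding list stays of manageable size and is pruned to the correct message \emph{uniformly over all causal strategies of Calvin}, who may choose his breakpoint, his error/erasure split and his fake target adaptively and with randomization, and who moreover sees each ``clean'' chunk — hashes included — before deciding what to corrupt downstream. This requires a union bound over a fine discretization of Calvin's strategy space, together with a delicate argument that the hashes in the clean chunks still separate $m$ from every other message consistent with the corrupted window; and pinning the constants down so that achievability matches the converse's $\alfq$ and $H_q$ exactly — not merely up to $o(1)$ — is where the $q$-ary factors $\tfrac{2q}{q-1}$ and $\tfrac{q}{q-1}$ must be carried through without loss.
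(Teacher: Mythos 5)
Your high-level architecture (babble-and-push converse, chunked stochastic codes for achievability) matches the paper, but both halves as you describe them have genuine gaps. In the converse, your babble phase is wrong: you have Calvin relay the prefix \emph{unchanged} and merely record it, whereas the attack only works if Calvin actively injects $n\bar{p}$ random symbol errors in the first $b=n(\alpha_q(\bar{p})+\epsilon/2)$ positions, i.e.\ simulates a $q$-ary symmetric channel. This corruption is what (a) guarantees, via a conditional-entropy/Markov argument (Claim~\ref{cl:sufficient-h}, Lemma~\ref{le:distinct-rvs}) valid for \emph{arbitrary} codes, that many distinct messages are consistent with Bob's prefix view --- your ``sphere-covering count'' around a clean prefix fails for codes whose prefixes are well separated or whose near-collisions are rarely transmitted --- and (b) makes the posterior of Alice's $(m,\mathbf{s})$ given Bob's prefix uniform on the consistent set $B_{\ybab}$, so that Calvin's uniformly drawn $(\mpm,\mathbf{s}')$ is exchangeable with it. With a clean prefix there is no such symmetry: the true codeword matches Bob's prefix exactly while the fake differs in $\approx\bar{p}n$ places, and a decoder that favors prefix-consistent codewords defeats the attack. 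Your push step is also off: writing a third ``midway'' symbol makes the received word disagree with \emph{both} codewords at every such position (no $q$-ary midpoint exists), which destroys the symmetry and doubles the budget needed. The correct move is to overwrite each disagreement with the fake codeword's symbol independently with probability $1/2$, then erase leftover disagreements; this is what yields both the exact distributional symmetry used in Theorem~\ref{thm:converse} and the two-disagreements-per-error accounting behind the $\frac{2q}{q-1}(p-\bar{p})$ term in $\alpha_q(\bar{p})$ (combined with the $q$-ary Plotkin bound on the suffix). The same probabilistic push, not ``alternate steering,'' is needed for your zero-capacity argument.

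In the achievability direction, the step you defer is the heart of the proof, and your proposed mechanism does not supply it. Bob cannot ``estimate the profile'' of Calvin's errors, because unlike erasures they are invisible; the paper instead fixes a public decoding reference trajectory $\phatt$ (a function of $t-\lat$ only), has Bob iterate over chunk ends, list-decode with radius $(t-\lat)\phatt$, and run a consistency check on the suffix, and then proves structurally (Claims~\ref{claim-above}, \ref{claim-ingap}, \ref{claim-zero-starting}, \ref{claim:p-tilde-t}) that Calvin's true trajectory $p_t$ must meet $\phatt$, yielding a $\tstar$ at which both the list-decoding condition and the energy-bounding condition of Claim~\ref{claim:list-energy} hold simultaneously. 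Moreover, unique decoding from the list is not obtained by hashes but by exploiting causality: the suffix secrets $s_{k+1},\dots,s_{1/\echunk}$ are chosen after $\tstar$ and are therefore independent of the (adversarially influenced) list, and the $\sigma$-goodness amplification over the partition of $\secr^l$ (Claims~\ref{claim:good-1}--\ref{claim:good-3}) is what makes the failure probability small enough to union-bound over all lists, messages, and chunk ends --- the ``na\"ive union bound over a fine discretization of Calvin's strategy space'' you propose does not close. If you want to keep the hash-based pruning, you would still need an analogue of the trajectory-crossing argument to certify which window is lightly corrupted and an energy bound showing Calvin's residual budget cannot forge consistency there, i.e.\ essentially the missing machinery above.
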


In fact, as direct by-products of the analysis of our coding scheme, we can show that even if Calvin has ``small'' lookahead, the capacity is essentially unchanged. More precisely, if for any constant $\epsilon > 0$, Calvin decides whether to tamper with the $i$-th symbol of the codeword based only on the symbols $(x_1,x_2,\cdots,x_{j})$, where $j = \min\{n,i+n\epsilon\}$, then the capacity of the corresponding ``$n\epsilon$-lookahead is at most $f(\epsilon)$ less than the corresponding $C$ we show in Theorems~\ref{thm:main:capacity} above (for some continuous $f$). We provide a rough argument in support of this claim in the Remark at the end of Section~\ref{sec:code-analysis}.

\subsection{Previous Work}

We start by briefly summarizing the state-of-the-art for erasure and error adversarial channels, for both omniscient and oblivious adversaries.
The optimal rate of communication over binary omniscient adversarial channels (for both erasure and error) are long standing open problems in coding theory. The best known lower bounds for the problems derive from the Gilbert-Varshamov codes (the GV bound)~\cite{gilbert1952comparison,varshamov1957estimate}, and the tightest upper bounds (the MRRW bounds) from the work by McEliece {\it et al.}~\cite{mceliece1977new}. 

The literature on Arbitrarily Varying Channels (AVCs, e.g., \cite{lapidoth1998reliable}) implies that the capacity of the binary oblivious adversarial error channel is $1-H(p)$, and that of oblivious adversarial erasure channels is $1-p$; these match the well-known capacities of the corresponding ``random noise'' channels with bits flipped or erased Bernoulli($p$), but are attainable even for noise patterns that can be chosen (up to an overall constraint of a $p$-fraction corruptions) by an adversary with full knowledge of the codebook, but no knowledge of the actually transmitted codeword.\footnote{In fact, it can even be shown that if Alice is allowed to use {\it stochastic encoding} -- choosing one of multiple possible codewords randomly for each message she wants to transmit -- then even for a {\it maximal probability of error} metric, a vanishingly small probability of error can be attained by capacity achieving codes. That is, there exists a sequence of codes whose rates asymptotically achieve the corresponding capacity, and such that for every message transmitted by Alice and for every corruption pattern imposed by Calvin, can be decoded correctly by Bob for ``most'' codewords corresponding to that message.} An alternate proof of the capacity of the binary oblivious bit-flip channel was presented in~\cite{Langberg:08} by Langberg, and a computationally efficient scheme achieving this rate was presented in~\cite{guruswami2010codes} by Guruswami and Smith.

We now turn to the causal setting.
As a causal adversary can never do better than an omniscient adversary and does at least as well as an oblivious one, the upper bounds on capacity for oblivious adversaries specified above act as upper bounds for the causal case as well;
and the lower bounds on capacity for omniscient adversaries act as lower bounds for the causal case.
For the binary causal adversarial bit-flip channel both bounds were improved.
Specifically, the first nontrivial upper bound $\min\left\lbrace 1-H(p),(1-4p)^{+}\right\rbrace $ was given by Langberg {\it et al.}~\cite{langberg2009binary}, and later, the tightest upper bound was given by the continuing work of Dey {\it et al.}~\cite{dey2012improved,dey2013upper}.
The best lower bound was described by Haviv and Langberg~\cite{haviv2011beating} which slightly improves over the GV bound. 
For the binary causal adversarial erasure channel the trivial upper bound of $1-p$ was improved to $1-2p$ 
by Bassily and Smith~\cite{bassily2014causal} who also present improved lower bounds that separate the achievable rate for causal adversarial erasures from the rates achievable for omniscient adversarial erasures.
Recently, the capacities for binary causal adversarial erasures and errors were fully characterized by \cite{zitan2015causal} which we demonstrate equals $C$ of Theorem~\ref{thm:main:capacity} for the case where $q=2$ and $\ppp=0$, and the case where $q=2$ and $\pstar=0$, respectively.


Related results include the study of binary delayed adversaries by Dey {\it et al.}~\cite{dey2010coding} who provide a characterization of the capacity in the case of ``delays'' $d$ which are an arbitrarily small (but constant) fraction of the code block length $n$.\footnote{While not presented in that work, the techniques of~\cite{dey2010coding} can be used to show that the same capacity holds even if the delay is $\polylog(n)$ rather than $d={\cal O}(n)$.} The value $d$ here corresponds to an adversarial model in which the decision of whether or not to corrupt the $i$th codeword bit depends only on $(x_1,\ldots,x_{i-d})$ (and the overall constraint on the number of bits that can be corrupted).
It is interesting to note that, in this case as well as the oblivious one, the capacity of the bit-flip and bit-erasure channels matches the corresponding random noise capacities (of $1-H(p)$ and $1-p$). On the other hand, as mentioned, the causal and $n\epsilon$ lookahead settings have strictly lower, but approximately matching, capacities. This seems to imply that the knowledge of the present is critical for Calvin to significantly depress the capacity below the random noise capacity. 

While the above discussion relates to the problem of binary alphabets, the work of Dey {\it et al.}~\cite{dey2013codes} considered ``large alphabet channels'' (in which the alphabet size is ``significantly larger'' than the block-length $n$) with causal symbol errors.\footnote{The capacity of large alphabet causal symbol erasures is essentially the same as that of omniscient large alphabet symbol erasures, which in turn equals the capacity of random symbol erasures. Such rates can be directly attained by Reed-Solomon codes, and matching converses obtained by Calvin merely randomly erasing $pn$ symbols.} A complete capacity characterization was presented (with corresponding computationally efficient codes attaining capacity), which demonstrated that the capacity of this problem equals $1-2p$, which is the same as the capacity of an omniscient adversary (attained by Reed-Solomon codes, and impossibility of higher rates by the Singleton bound). This demonstrates that the penalty imposed by the causality condition on Calvin diminishes with increasing alphabet size.

Also related to this work is the study of Mazumdar~\cite{mazumdar2014capacity} in which the capacity of memoryless channels where the adversary makes his decisions based only on the value of the currently transmitted bit is addressed. We note that the causal model is also a variant of the AVC model~\cite{blackwell1960capacities,lapidoth1998reliable}, however previous works on AVCs with capacity characterizations do not relate directly to the study at hand on causal adversaries.

\subsection{Proof Technique}

To prove Theorem~\ref{thm:main:capacity} we demonstrate two results: a converse (by analyzing an attack strategy similar to that presented in \cite{dey2012improved, bassily2014causal,dey2013upper}) and a coding scheme (that follows the lines of that presented  in \cite{zitan2015causal}). Our major novelty lies in extending the proof techniques to hold for $q$-ary causal adversarial channels for general $q$ where the adversary is able to impose both errors and erasures on codewords.
Throughout, we denote the encoder by Alice, the decoder by Bob, and the adversarial causal jammer by Calvin.

\subsubsection{Converse}

To prove Theorem~\ref{thm:main:capacity} we must present a strategy for Calvin that does
not allow communication at rate higher than $C$ (no matter which
encoding/decoding scheme is used by Alice and Bob).
Specifically, the strategy we present will allow Calvin to enforce a constant
probability of error bounded away from zero whenever Alice and Bob communicate
at rate higher than $C$.
Calvin uses a two-phase {\em babble-and-push} strategy.

In the first phase Calvin ``babbles'' by behaving
like a $q$-ary symmetric channel in which at most $\bar{p}n$ symbols are changed.
There is an adversarial attack of Calvin for any $\bar{p} \leq p$, but it is ``strongest'' for an optimal $\bar{p}$ that depends on the setting of $q$, $p$, and $\pstar$.
This fact is what accounts for the minimization in the capacity term given in Theorem~\ref{thm:main:capacity}.
The value of $\bar{p}$ also determines the {\em length}, denoted here by $b$, of the babble phase, namely when Calvin stops behaving like a $q$-ary symmetric channel and starts his second ``push" phase.
As $\bar{p}$ is taken to be at most $p$, in this first phase, Calvin only uses his error capabilities (and does not erase any symbols).

In the second phase of $n - b$ channel uses, Calvin randomly selects a codeword 
from Alice and Bob's codebook which is {\em consistent} with what Bob has 
received so far.  Namely, a codeword that from Bob's perspective may have been transmitted (when taking into account Calvin's attack).
Calvin then ``pushes'' the remaining part of Alice's codeword towards his selected codeword.
The push phase includes both errors and erasures on Calvin's behalf.
Specifically, Calvin first imposes an error (with probability $1/2$) on every entry $x_i$ of the transmitted codeword that differs from that chosen by Calvin $x_i'$, changing $x_i$ to $x_i'$. This operation pushes the transmitted codeword towards the codeword selected by Calvin.
Once Calvin has exhausted his budget of $pn$ errors, he moves to erasures and erases any entry $x_i$ that differs from $x_i'$.
If Calvin's $\pstar n$ budget allows him to erase all such symbols, by symmetrization techniques (e.g., \cite{dey2012improved}) we show that with constant probability Bob is unable to determine whether Alice transmitted her codeword or the one chosen by Calvin, causing a decoding error with probability $1/2$ in this case.
To prove our bound, the remaining budget of Calvin (of errors and erasures) must suffice to push the codeword of Alice half the distance towards that chosen by Calvin.
Using the $q$-ary Plotkin bound~\cite{blake1976introduction} and some additional ideas, one can show that with constant probability the distance between these two codewords on the locations of the push phase is at most  $(1-1/q)(n-b)$, implying that Calvin needs a remaining budget for the last $n-b$ channel uses in which the number of erasures plus twice the number of errors is at least $(1-1/q)(n-b)$.

Roughly speaking, calculations show that for every $\bar{p} \leq p$ there is a corresponding threshold $b$ for which Calvin's budget suffices for the push phase.
However, one would like $b$ to be ``just long enough''.
Setting $b$ to be too small will shorten the babble phase of Clavin and will increase the block length of the push phase and as such will increase the budget needed by Calvin to overcome the potential distance of $(1-1/q)(n-b)$ between his and Alice's codeword. Too long of a babble phase makes Calvin's attack look more similar to the output of a random channel, resulting in a weaker outer bound.
All in all, the threshold $b$ is set to be the minimal value possible that still leaves Calvin with a sufficient ``push'' budget.

Given $p$, $\pstar$, $q$ and $\bar{p}$ the parameter $b$ is set to roughly the value  $\alfq n$ (specified in  Theorem \ref{thm:main:capacity}) which implies that the babble phase behaves like a $q$-ary symmetric channel with error parameter $\bar{p}/\alfq$ (recall that in the babble phase Calvin is changing $\bar{p}n$ randomly chosen locations out of the $b$ locations in the phase). Hence, the upper bound obtained in this case is the rate of the corresponding $q$-ary symmetric channel with block length $b=\alfq n$, which is exactly that stated in the term of Theorem~\ref{thm:main:capacity}.

As we will see shortly in our achievability scheme, setting the rate just below the upper bound (for optimal $\bar{p}$) allows us to overcome Calvin's pushing capabilities and as such allows successful communication, implying a tight characterization of the capacity for our online model. 

\subsubsection{Achievability}

In our codes the encoder Alice uses internal randomness (not known to Bob or Calvin) in the choice of the transmitted codeword, designed to allow a high probability of successful communication no matter which message Alice is sending to Bob. 
\zx{We use ``chunked random codes'' described shortly. That is, we pick our codes uniformly at random from a random ensemble specified in Section~\ref{sec:model},}
and prove that w.h.p. over the code distribution a code chosen at random allows reliable communication.
The decoder involves two major phases: a list decoding phase in which the decoder obtains a short list of messages that include the one transmitted; and a unique decoding phase in which the list is reduced to a single message. Roughly, Bob in his decoding process divides the received word into two parts -- all symbols received up to a given time $\tstar$, and all symbols received afterwards. The list decoding is done using the first part of the received word, and the process of unique decoding from the list is done using the second part. 

Consider first the special case in which there are erasures only.
In this case, given the parameter $\pstar$ (that specifies the fraction of symbols that can be erased by the adversary) and the received word, the decoder Bob can pin-point the value of $\tstar$ that will allow successful decoding. Specifically, for any adversarial behavior, we show the existence of a value $\tstar$ that on one hand allows Bob to obtain a small list of messages from the first part of the received word; and on the other guarantees that the fraction of symbols erased by the adversary in the second part of the received word cannot suffice to confuse Bob between any two messages in the list he holds. 
Notice the duality between the parameter $b$ of our upper bound and the parameter $\tstar$ here.
For our upper bound, we show that above rate $C$ no matter the code shared by Alice and Bob there exists a threshold $b$ for which Bob cannot uniquely decoding based on the first $b$ received symbols and Calvin has a sufficient remaining budget to cause a decoding error in the remaining $n-b$ symbols.
In our lower bound, for any rate below $C$ we suggest a coding scheme and show that there exists a threshold $\tstar$ for which Bob can list decode based on the first $\tstar$ received symbols and that Calvin does not have sufficient budget left to cause a decoding error in the remaining $n-\tstar$ symbols.
As the rate for list decoding (in our lower bound) resembles that of the $q$-ary symmetric channel (in our upper bound) we obtain tight results.

The ability to list decode is obtained using standard probabilistic arguments that take into account the block length $\tstar$ and the number of erasures $\lambda_{\tstar}$ in the first part of the received word.
The ability to uniquely decode from the obtained list involves a more delicate analysis which uses the stochastic nature of our encoding and the causality constraint of Calvin. In particular,  we use the fact that the secret symbols used in the encoding of the first part of the codeword (up to position $\tstar$) are independent of those used for the second part. This independence is useful in separating the two decoding phases 
in the sense that the casual adversary at time $\tstar$ is acting with no knowledge whatsoever on the secret symbols used by Alice after time $\tstar$. This lack of knowledge sets the stage for the unique decoding phase.

We accommodate different potential values of $\tstar$ by designing a stochastic encoding process in which different parts of the codewords rely on independent secret symbols of Alice. Namely, we divide the coding process into {\em chunks}. Each chunk is a random stochastic code of length $\chunk$ for a small parameter $\echunk$ that uses independent randomness from Alice. The final code of Alice is a concatenation of all its chunks. Setting $\echunk$ small enough allows enough flexibility to manage any possible value $\tstar$ chosen by Bob's decoder.

\begin{figure}[t!]
			\centering
			\includegraphics[width=1\linewidth]{./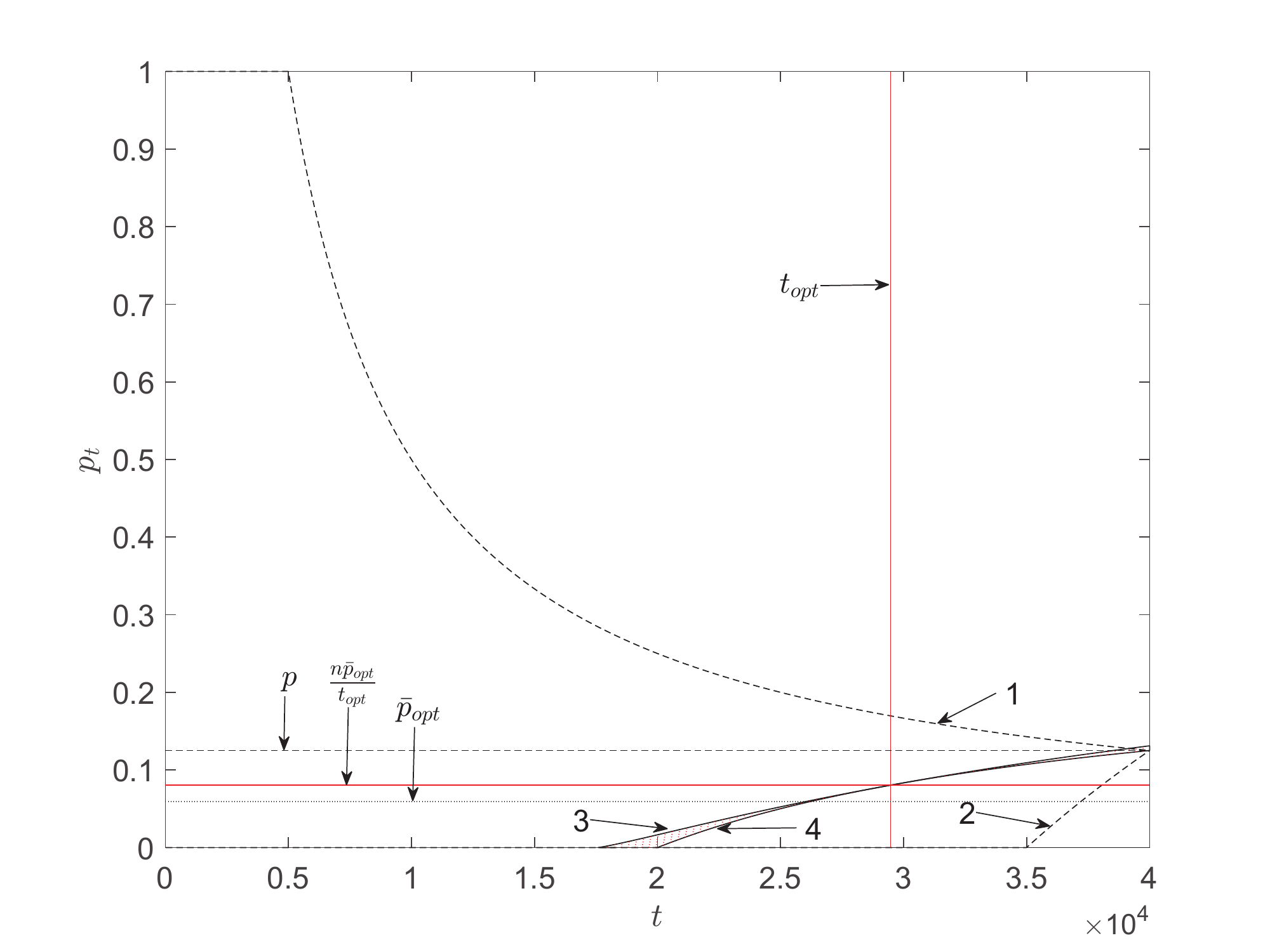}
			\caption{The range for trajectory $\phatt$ (shaded) as a function of $t$ for $q=2$, $p = 1/8$, $p^* = 0$. Our bounds are analytical, however the plot was made numerically using $n=40,000$. Curves 1 and 2 are extremal curves for Calvin's true corruption fraction $p_t$. Curves 3 and 4 bound the region for $\phatt$. Horizontal lines $p$ and $\bar{p}_{opt} $ (optimal $\bar{p}$ from upper bound) are given as references. If Calvin were to follow the attack given in our upper bound proof, then $p_t=\frac{n\bar{p}_{opt}}{t_{opt}}$ (red horizontal line) and in our decoding scheme $\phatt=p_t$ at point $t_{opt}$ (red vertical line). For other values of $p_t$, the location in which $\phatt=p_t$ will differ.}

			\label{fig:traj}
\end{figure}

The encoding and decoding process for the channel in the presence of both errors and erasures follow the same line of analysis as specified above for the erasure only case, but with one major and significant difference. Bob does not know which symbols in the transmitted codeword were in error, and thus by studying the received word, Bob is not able to identify a location $\tstar$ with the desired properties.
To overcome this difficulty, we design an iterative 
decoding process in which Bob starts with a small value of $\mt$ and performs an attempt to decode.
As before the decoding process first list decodes using the first part of the received word and then uniquely decodes. 

The list decoding is done according to a certain ``guessed'' value $\phatt$ for the fraction of symbol errors in the first part of the received word. Here, $\phatt$ is a carefully designed function of $\mt$  (also referred to as a ``trajectory'') that is fixed and known to all parties involved in the communication. 
The trajectory $\phatt$ is chosen in a way that guarantees successful decoding for any location $t$ for which $\phatt$ equals the fraction of symbols $p_t$ actually changed by Calvin up to location $t$ (with respect to unerased positions).
Specifically, $\phatt$ guarantees that Bob is able to obtain a small list of messages by list decoding up to position $t$ and to uniquely decode from this list as the remaining corruption power of Calvin is limited.
Analyzing these conditions gives a range of possible trajectories $\phatt$ depicted in Figure~\ref{fig:traj}.
If $\lat$ denotes the number of erasures Bob receives after $t$ channel uses, then for $\mt-\lat < n \left(1-\frac{2q}{q-1}\ppp-\frac{q}{q-1}\pstar\right)$, we set $\phatt=0$; otherwise we set $\phatt = \ppp+\frac{\pstar}{2}-\left(\frac{q-1}{2q}-p-\frac{\pstar}{2}\right)\left(\frac{n}{\mt-\lat}-1\right)$.  The value of $\phatt$ is $0$ for all $t-\lambda_t$ up to $n \left(1-\frac{2q}{q-1}\ppp-\frac{q}{q-1}\pstar\right)$ and then it grows up to $\frac{\ppp}{1-\frac{q}{q-1}\pstar}$ as $\mt-\lat$ increases to  $n\left(1-\frac{q}{q-1}\pstar\right)$ \zx{(note that since $\lat$ is bounded from above by $n\pstar$, therefore as $\mt$ ranges from $0$ to $n$, the quantity $\mt-\lat$ always takes all possible integer values from $0$ to (at least) $n(1-\pstar)$)}. 

Now that we have $\phatt$, we show that the iterative decoding of Bob is successful at threshold location $t$ if indeed $\hat{p}_t =p_t$, otherwise, we show that the unique decoding phase will fail in the sense that Bob will not receive any message from the decoding process. Identifying a failure in the decoding process, Bob increases $\mt$ and repeats the decoding attempt. The crux of our analysis lies in our proof that eventually, no matter what the behavior of Calvin is, there will be a value of $\mt$, denoted $\tstar$, for which $\phat_{\tstar}$ is (approximately) $p_{\tstar}$ and the decoding succeeds.
Establishing the existence of the trajectory $\phatt$ as discussed above and proving that at some point it must be close to $p_t$ is a central part of our proof.

\subsection{Structure}

In Section~\ref{sec:model} we formally present the channel model, the encoder, and the decoding process. In addition, we present a careful description of the adversarial behavior. Section~\ref{sec:code-analysis} then presents an overview of our code analysis, and the proof of the achievability of Theorem~\ref{thm:main:capacity}. 
Due to space limitations, all the technical claims and their proofs appear in the Appendix.

\section{Model}
\label{sec:model}

\paragraph{Channel Model:}
For any positive integer $i$, let $\left[ i\right] $ denote the set $\left\lbrace 1,2,\cdots,i\right\rbrace $. For a transmission duration of $n$ symbols, a $q$-ary causal adversarial error-erasure channel can be characterized by two triples $(q,p,\pstar)$ and $\left( \mathcal{X}^n,\textsf{Adv},\mathcal{Y}^n\right) $. Here, $p$ and $\pstar$ are the fractions of symbol errors and symbol erasures that Calvin can impose on a codeword, $\mathcal{X}=\left\lbrace 0,1,\cdots,q-1\right\rbrace $ and $\mathcal{Y}=\left\lbrace 0,1,\cdots,q-1\right\rbrace \cup\left\lbrace\Lambda\right\rbrace$ are the input and output alphabet of the channel, and $\textsf{Adv}=\left\lbrace \text{Adv}^i|i\in\left[ n\right] \right\rbrace $ is a sequence of mappings that represents the adversarial behavior in each time step.
More precisely, each map $\text{Adv}^i:\mathcal{X}^i\times\mathcal{Y}^{i-1}\to\mathcal{Y}$ is a function that, at the time of transmitting the $i$-th symbol, maps the sequence of channel inputs up to time $i$, $\left( x_1,x_2,\cdots,x_i\right) \in\mathcal{X}^i$, together with the sequence of all previous channel outputs up to time $i-1$, $\left( y_1,y_2,\cdots,y_{i-1}\right) \in\mathcal{Y}^{i-1}$, to an output symbol $y_i\in\mathcal{Y}$.
The functions  $\text{Adv}^i$ must satisfy the adversarial power constraint, namely that at no point in time does the total number of errors and erasures exceed $pn$ and $\pstar n$, respectively.

\paragraph{Random code distribution:}
We now define a distribution over codes. In our proof, we use this distribution to claim the existence of a fixed code that allows reliable communication between Alice and Bob over the channel model.
In our code construction $R$ denotes the code \textit{rate}, $S$ the {\em private secret rate} of the encoder (to be defined explicitly shortly), and $\echunk$ a ``quantization'' parameter  (specified below).


Let $\msg=\msgspace$ denote Alice's message set and $\secr=\keyspace$ be the set of private random secrets available only to Alice. The encoder randomness $\secr$ is neither shared with the receiver nor the adversary. Let $\Phi$ be the uniform distribution over stochastic codes $\msg \times \secr \to \mathcal{X}^{\chunk}$. Let $\mathcal{C}_1,\mathcal{C}_2,\cdots,\mathcal{C}_{\chunknum}$ be stochastic codes, which are i.i.d. according to the probability distribution $\Phi$. Specifically, $\forall i\in\left[ \chunknum\right] $, the corresponding stochastic code is a map $\mathcal{C}_i:\msg \times \secr \to \mathcal{X}^{\chunk}$ chosen from the distribution $\Phi$.  

\paragraph{Encoder:}
Given a message $m\in\msg$ and $\chunknum$ secrets, $s_1,s_2,\cdots,s_{\chunknum}$ each in $\secr$, a codeword of length $n$ with respect to the message $m$ and the $\chunknum$ secrets is defined to be the concatenation of $\chunknum$ {\em chunks} of sub-codewords,
\begin{align}
	\megacodw[\chunknum] \label{code-design}
\end{align}
where $\mathcal{C}_{i}(m,s_{i})$ is the $i$-th \textit{sub-codeword} in the entire codeword, and $\circ$ denotes the concatenation between two chunks of sub-codewords. To distinguish the concatenated code $\mcode$ from the code for a chunk, we will call $\mathcal{C}_1,\mathcal{C}_2,\cdots,\mathcal{C}_{\chunknum}$ \textit{sub-codes} hereafter. Our code analysis then focuses on two different parts of the entire code, defined as follows.

\begin{definition}
	\label{def:left-mega-code}
	Let a code $\mcode$ of block-length $n$ consist of $\chunknum$ sub-codes, i.e., $\mcode=\megacode[\chunknum]$. Let $\setchend=\choiceschunk$ and $t\in\setchend$. A code prefix of $\mcode$ with respect to $t$ is the concatenation of the first $\frac{t}{\chunk}$ sub-codes of $\mcode$.
\end{definition}

\begin{definition}
	\label{def:right-mega-code}
	Let a code $\mcode$ of block-length $n$ consist of $\chunknum$ sub-codes, i.e., $\mcode=\megacode[\chunknum]$. Let $\setchend=\choiceschunk$ and $t\in\setchend$. A code suffix of $\mcode$ with respect to $t$ is the concatenation of the last $\frac{1}{\echunk}-\frac{t}{\chunk}$ sub-codes of $\mcode$.
\end{definition}

In our analysis, it is convenient to describe the encoding scheme of Alice in a causal manner. Namely, we will assume that the secret value $s_i$ corresponding to the encoding of the $i$-th chunk is chosen by Alice immediately before the $i$-th chunk is to be transmitted and no sooner.

As mentioned above, we show that with positive probability, the code $\mathcal{C}$ chosen at random based on the distribution above has certain properties that allow reliable communication over our channel model. 

\paragraph{Decoding process:}
The decoding process of Bob is done in an iterative manner. Specifically, upon receiving the entire codeword with errors and erasures, for some fixed $\erate>0$, Bob identifies the smallest value of $\mt-\lat \geq n \left(1-\frac{2q}{q-1}\ppp-\frac{q}{q-1}\pstar-\frac{\erate^2}{4}\right)$ corresponding to the (end) location of a chunk, and attempts to correctly decode the transmitted message $m$ based on the codeword prefix and suffix with respect to position $t$. 
The decoding process is terminated if a message is decoded by Bob, otherwise the value of $t$ is increased by $\chunk$ (the chunk size) and Bob attempts to decode again. This process continues until $t$ reaches (approximately) the end of the codeword. If no decodings succeeds until then, a decoder error is declared. 

Each attempt of decoding can be divided into two phases. First, at each position $\mt$, Bob chooses an estimate $\phatt$ for the fraction of errors (with respect to the unerased positions) used by Calvin in the codeword prefix up to $\mt=k\chunk$. In our proof to come, we show that $\phatt$ satisfies two important conditions, the \textit{list-decoding condition} and the \textit{energy bounding condition} (see Claim~\ref{claim:list-energy}). The list-decoding condition allows Bob to decode the codeword prefix $\megacodw[k]$ through a list decoder with list size $\mlists$. As we will show, the list size $\mlists$ consists of at most $\blistord$ messages. 
So at this phase Bob obtains a list $\mlist$ of $\mlists$ messages.
If it is the case that $\phatt$ equals the true fraction of symbol errors $\pref{p}$ (with respect to the unerased positions) up to $\mt$, then it holds that the transmitted message is in $\mlist$. 

Next, for the second phase, the energy bounding condition states that, if $\phatt$ equals $\pref{p}$, there are no more than \zx{$\left(\frac{q-1}{2q}-\frac{\erate^2}{9q^{2}}\right) \left(n-t-n\pstar+\lat\right)-\frac{n\pstar}{2q}$} symbol errors in the codeword suffix with respect to position $\mt$. Therefore, as we will show, Bob can use a natural \textit{consistency} decoder (defined below) to determine whether to stop or continue the decoding process. More precisely, the decoding process continues if the consistency decoder fails to return a message and stops if a message $\hat{m}$ is decoded from the messages in $\mlist$. The decoder also stops when $\mt-\lat$ has reached size $n-\frac{q}{q-1}n\pstar-\chunk$, where $\lat$ is the number of erasures up to position $\mt$.

\begin{definition}
	Let $\epsilon > 0$. Let \zx{$\mathbf{y}_{t},\mathbf{y}^{\prime}_{t}\in\mathcal{Y}^{n-t}$} be two word suffixes with respect to position $t$. The word suffix \zx{$\mathbf{y}_{t}$} is \textit{consistent} with the word suffix \zx{$\mathbf{y}^{\prime}_{t}$} if and only if the \zx{fraction of the unerased positions in which $\mathbf{y}_{t}$ does not agree with $\mathbf{y}^{\prime}_{t}$ is no more than $\frac{q-1}{2q}-\frac{\erate^2}{9q^{2}}-\frac{n\pstar}{2q(n-\mt-n\pstar+\lat)}$}.
\end{definition}

\begin{definition}
	\label{def:consistency-decoder}
	A consistency decoder applied to a code suffix $\rmegacode[k]$ with respect to position $\mt=k\chunk$ and list $\mlist$ is a decoder that takes the word suffix of a received word $\mathbf{y}^{\prime}$ and returns a unique message $\hat{m}$ in the list $\mlist$, one of whose codeword suffixes is consistent with that of $\mathbf{y}^{\prime}$. If more than one such message exists, then a decoding error is declared.
\end{definition}

Formally, the decoder process of Bob can be described as follows. Essentially, we will use the following definition of $\phatt$ (the estimate to Calvin's error corruption fraction with respect to unerased positions at time $\mt$ used by Bob), which is slightly revised later in Definition~\ref{def:p-hat-t} to be more robust to slight slacknesses that appear in the analysis. Let $p\in\left(0,\frac{q-1}{2q}\right)$, then for $\mt-\lat < n \left(1-\frac{2q}{q-1}\ppp-\frac{q}{q-1}\pstar\right)$, $\phatt=0$; otherwise $\phatt = \ppp+\frac{\pstar}{2}-\left(\frac{q-1}{2q}-p-\frac{\pstar}{2}\right)\left(\frac{n}{\mt-\lat}-1\right)$.  The value of $\phatt$ is $0$ for all $t$ up to $n \left(1-\frac{2q}{q-1}\ppp-\frac{q}{q-1}\pstar\right)$ and then it grows up to $\frac{\ppp}{1-\frac{q}{q-1}\pstar}$ as $\mt-\lat$ increases to $n\left(1-\frac{q}{q-1}\pstar\right)$. For the description below, recall that $\epsilon>0$ is a constant design parameter that can be considered to be arbitrarily small.

\noindent {\bf 1.} Identify the position $t=t_{0}=k_{0}\chunk$ for some integer $k_{0}$, where $t_{0}$ is the smallest integer such that $\mt_0 - \lat[\mt_{0}]\geq n\left(1-\frac{2q}{q-1}\ppp-\frac{q}{q-1}\pstar-\frac{\erate^2}{4}\right)$.\newline
\hspace{0.1in} {\bf 2.} List-decode the code \zx{prefix} $\megacode[k]$ with respect to position $t$ to obtain a list $\mlist$ of messages of size $\mlists$, with the list-decoding radius $(\mt-\lat)\phat_{t}$. 
More precisely, a message $m$ is in the list $\mlist$ if there is a codeword corresponding to $m$ for which its unerased symbols in the codeword prefix with respect to position $t$ is of distance no more than $(\mt-\lat)\phatt$ from the corresponding unerased symbols in the received word prefix.\newline
\noindent {\bf 3.} Verify the codeword suffixes with respect to position $t$ corresponding to messages in the list $\mlist$ through a consistency decoder \zt{that compares symbols in unerased positions}. Specifically, consider the Hamming balls with radius equal to $\left(n-n\pstar-\mt+\lat\right)\left(\frac{q-1}{2q}-\frac{\erate^2}{9q^{2}}\right)\zx{-\frac{n\pstar}{2q}}$ centered at the codeword suffix of each codeword corresponding to the messages in the list $\mlist$. If the corresponding received word suffix is outside all the balls, increase $t$ by $\chunk$ and goto Step (2). If the received word suffix lies in exactly one of the balls, decode to the message $\hat{m}$ corresponding to the center of the ball. If the received word suffix lies in more than one ball a decoding error is declared.

For every message $m$, Bob decodes correctly if his estimate $\hat{m}$ equals $m$. 
That is, Bob decodes correctly if for some $\tstar$, the only codeword suffix of the codewords corresponding to messages in the list $\mlist$ consistent with that of the received word corresponds to the message $m$. We show that this indeed happens w.h.p. over the random secrets $\secr^{\frac{n-\tstar}{\chunk}}$ used by Alice for the codeword suffix with respect to position $\tstar$.
If Bob's estimate $\hat{m}$ is not equal to $m$, Bob is said to make a \textit{decoding error}. The \textit{probability of error} for a message $m$ is defined as the probability over Alice's private secrets $s\in\secr$ that Bob decodes incorrectly. The probability of error for the code $\mcode$ is defined as the maximum of the \zx{probabilities} of error for message $m$ over all messages $m\in\msg$.

A rate $R$ is said to be \textit{achievable} if for every $\xi>0,\beta>0$ and every sufficiently large $n$ there exists a code of block length $n$ that allows Alice to communicate $q^{n(R-\beta)}$ distinct messages to Bob with probability of error at most $\xi$. The supremum over $n$ of all achievable rates is the {\it capacity} $C$ of the channel.

\paragraph{Adversarial behavior:}
The behavior of Calvin is specified by the channel model above. In particular, we are more interested in how Calvin \zx{corrupts} a codeword with errors, which can be characterized by a function $\pref{p}$ defined below which specifies how many errors were ejected by Calvin up-to position $t$ normalized by the number of unerased positions. We refer to $\pref{p}$ as a trajectory, and note that the exact trajectory used by Calvin is not known to the decoder Bob.

\begin{definition}[Calvin's Trajectory $\pref{p}$]
	\label{def:p-t}
	Let a codeword $\mx$ of length $n$ consist of $\chunknum$ chunks of sub-codewords. Let $\setchend=\choiceschunk$ and $\leftcw\in\setchend$. \zt{Let $\ppp_{\mt} \in[0,1]$ be the actual fraction of symbol errors with respect to the unerased positions in the codeword prefix of $\mx$ with respect to position $t$}.
\end{definition}

In our analysis we assume that Calvin has certain capabilities that may be beyond those available to a causal adversary. This is without loss of generality as we are studying lower bounds on the achievable rate in this work.
We assume that the trajectory of $\phatt$ that Bob uses in his decoding process is known to Calvin.
This implies (as we will show) that Calvin knows the position $\tstar$ that Bob eventually stops his decoding process. 
In addition, we assume that the list of messages obtained through Bob's list decoding process can be determined explicitly by Calvin. Moreover, we assume that Calvin knows the message $m$ {\it a priori}. 

At every list-decoding position $\mt=k\chunk$, we stress that the subsequent secrets, namely, $(s_{k+1},s_{k+2},\cdots,s_{\chunknum})$ for the codeword suffix are unknown to Calvin. Indeed, given the causal nature of Alice's encoding, these secrets have not even been chosen by Alice at this point in time. The fact that the secrets are hidden from Calvin implies that $(s_{k+1},s_{k+2},\cdots,s_{\chunknum})$ are completely independent of the list (obtained through Bob's list decoding) $\mlist$ determined by Calvin. This fact is crucial to our analysis.

Also, we strengthen Calvin by allowing him to choose which symbols to corrupt after position $\tstar=\kstar\chunk$ non-causally. Namely, we assume that Calvin chooses his corruption pattern after looking ahead to all the remaining symbols of the transmitted codeword. As we show, no matter how these corruptions are chosen, the codeword suffix has \zx{at most $\left(n-n\pstar-\tstar+\lat[\tstar]\right)\left(\frac{q-1}{2q}-\frac{\erate^2}{9q^{2}}\right) -\frac{n\pstar}{2q}$ symbols in error}. The fact that the distribution of $(s_{\kstar+1},s_{\kstar+2},\cdots,s_{\chunknum})$ is independent from the list $\mlist$ will allow us to show that Bob succeeds in his decoding.

\section{Code Analysis}
\label{sec:code-analysis}
Due to space limitations, the technical details of our proof appear entirely in the Appendix. In what follows, we give a roadmap for our proof, including the major high-level arguments used in the Appendix. 
Throughout, $\epsilon>0$ is a constant design parameter that can be considered to be arbitrarily small.

\paragraph{Existence of trajectory $\phatt$:} Our analysis of Bob's decoding begins with selecting a {\it decoding reference trajectory} $\phatt$ (Definition~\ref{def:p-hat-t}) as a proxy trajectory for Calvin's trajectory $\pref{p}$. Recall that for each $t$, $\pref{p}$ is the fraction of errors (with respect to unerased positions) in the codeword prefix up to $t$, and accordingly, $\phatt$ is the fraction of symbols (with respect to unerased positions) that Bob {\em assumes} are in errors up to position $t$. In general, the trajectories $\phatt$ and $\pref{p}$ are not equal.
We show in Claim~\ref{claim:list-energy}, that for $\mt-\lat \geq n \left(1-\frac{2q}{q-1}\ppp-\frac{q}{q-1}\pstar-\frac{\erate^2}{4}\right)$ the selected decoding reference trajectory $\phatt$ satisfies two important conditions, the \textit{list-decoding condition} \eqref{eq:list-con-c-body} and the \textit{energy bounding condition} \eqref{eq:energy-con-c-body} introduced \zx{below}.
\begin{align}
\label{eq:list-con-c-body}
& (\mt-\lat)\left( 1-H_{q}(\phatt)\right)-\frac{n\erate}{4} \geq nR\\
\label{eq:energy-con-c-body}
& n\ppp-(\mt-\lat)\phatt+\errdist\leq\frac{q-1}{2q}\left(n-n\pstar-\mt+\lat\right)
\end{align}
The list decoding condition guarantees a small list size if decoding is done with radius $(\mt-\lat)\phatt$; and the energy bounding condition restricts the remaining errors that the adversary has for the codeword suffix if Bob's estimate $\phatt$ to $\pref{p}$ is approximately correct.

To prove correctness of our decoding procedure, we must introduce a new trajectory $\tilde{p}_{t}$,
which is closely related to its counterpart $\phatt$ in the sense that $\tilde{p}_{t}$ approximately equals $\phatt$.
but the former is slightly smaller than the latter.
This parameter is introduced to allow robustness in our analysis which absorbs certain slacknesses that are a result of our code construction and analysis technique (e.g., such as the fact that our chunk size $\chunk$ cannot be made too small). We here give our precise definitions, which can be at times better understood intuitively if the reader keeps the above discussion in mind. All our \zx{notation is} given in Table~\ref{tab:q-params}.

\paragraph{Existence of position $\tstar$ for which $\phat_{\tstar} \simeq p_{\tstar}$:}
Next in our analysis we chooses for some integer $k_{0}$ the position $t_{0}=k_{0}\chunk \simeq n \left(1-\frac{2q}{q-1}\ppp-\frac{q}{q-1}\pstar-\frac{\erate^2}{4}\right) + \lat[\mt_0]$ as a {\em benchmarking} position, and separate our analysis into two cases based on whether $p_{t_{0}}$ is greater than $\phat_{t_0}$ or not.
We use the following classification:

\begin{definition}[High Type Trajectory]
	For any trajectory $\pref{p}$ of Calvin, consider the values of $\pref{p}$ and $\phatt$ at position $t=t_{0}$. If $p_{t_{0}}\geq\phat_{t_0}$ then Calvin's trajectory $\pref{p}$ is a \textit{high} type trajectory.
\end{definition}

\begin{definition}[Low Type Trajectory]
	For any trajectory $\pref{p}$ of Calvin, consider the values of $\pref{p}$ and $\phatt$ at position $t=t_{0}$. If $p_{t_{0}}<\phat_{t_0}$ then Calvin's trajectory $\pref{p}$ is a \textit{low} type trajectory.
\end{definition}


For any High Type Trajectory of Calvin, we show in Claim~\ref{claim-above} that $\pref{p}$ always intersects with $\phatt$ at some point $t$ after $t_{0}$ no matter what corruption pattern is chosen by Calvin (i.e., at point $t$, Bob's estimate $\phatt$ is equal to the actual amount of errors $\pref{p}$).
Moreover, by Claim~\ref{claim-ingap} and Claim~\ref{claim:p-tilde-t}, this implies a value $\tstar$ (the chunk end which falls immediately after the intersection point $t$ above) for which it is guaranteed that the remaining error budget of Calvin is low in the sense that the number of errors that Calvin can introduce in the codeword suffix with respect to $\tstar$ is less than $\left(n-n\pstar-\tstar+\lat[\tstar]\right)\left(\frac{q-1}{2q}-\frac{\erate^2}{9q^2}\right)\zx{-\frac{n\pstar}{2q}}$. 
On the other hand, for any Low Type Trajectory of Calvin, we already know that $\pref{p}$ is approximately $\phatt$ at the point $t_{0}$ (they are both nearly 0). Thus we show in Claim~\ref{claim-zero-starting} that setting $\tstar$ to be equal to $t_0$ we are again guaranteed that the remaining error budget of Calvin is low in the sense that the number of errors that Calvin can introduce in the codeword suffix with respect to $\tstar$ is less than $\left(n-n\pstar-\tstar+\lat[\tstar]\right)\left(\frac{q-1}{2q}-\frac{\erate^2}{9q^2}\right)\zx{-\frac{n\pstar}{2q}}$. Formally:

\begin{definition}
	\label{def:t-star}
	Let $\epsilon>0$ and $\echunk=\echval$. Let $\setchend=\choiceschunk$ and $t\in\setchend$. 
	\begin{enumerate}[label=(\roman*)]
		\item if $p_{t_{0}}<\zxx{\phat_{t_0}}$, $\tstar=t_{0}=k_{0}\chunk$. 
		\item if $p_{t_{0}}\geq\zxx{\phat_{t_0}}$, $\tstar$ is the smallest value in $\setchend$ such that $p_{\tstar-\chunk}>\phat_{\tstar-\chunk}$ and $p_{\tstar}\leq\phat_{\tstar}$.
	\end{enumerate}
\end{definition}

\paragraph{Success of Bob's decoding:}
Bob starts decoding at position $t_{0}$ and continues to decode at subsequent chunk ends until a message is returned by the consistency decoder or until Bob reaches the end of the received word. Claim~\ref{thm:b-list-decodable} and Corollary~\ref{coro:b-list-decodable} (via the list decoding condition \eqref{eq:list-con-c}) guarantee that Bob in his first phase of decoding will always obtain a list of \sj{messages} of list size $L=\blistord$ from the list decoder no matter what position $t$ is currently being considered. 
The analysis in Claim~\ref{thm:b-list-decodable} and Corollary~\ref{coro:b-list-decodable} and in the claims to come is w.h.p. over our random code construction. 
Moreover, for any $t$, the energy bounding condition \eqref{eq:energy-con-c} implies that, in the case of $\pref{p} \simeq \phatt$, the 
unused errors left for Calvin are less than a $\frac{q-1}{2q}-\frac{\erate^2}{9q^2}\zx{-\frac{n\pstar}{2q(n-\mt-n\pstar+\lat)}}$ fraction of the remaining part of unerased symbols of the codeword.

We start by studying the case in which the current iteration of Bob satisfies $t=\tstar$ (which implies that $\pref{p} \simeq \phatt$).
In Claim~\ref{claim:good-1}, Claim~\ref{claim:good-2}, and Claim~\ref{claim:good-3} we show that if $t=\tstar$ Calvin's remaining error budget is not sufficient to mislead the consistency decoder, and will allow unique decoding from the list of messages Bob holds.
Namely, we show that with high probability over the secret random symbols of Alice used in the encoding process, our code design guarantees that the only message in our list that is consistent with the transmitted codeword is the one transmitted by Alice.

More precisely, consider the consistency checking phase of Bob in the iteration in which $t=\tstar$. In this iteration we know (via the energy bounding condition \eqref{eq:energy-con-c}) that the number of unused errors of Calvin is less than a $\frac{q-1}{2q}-\frac{\erate^2}{9q^2}\zx{-\frac{n\pstar}{2q(n-\mt-n\pstar+\lat)}}$ fraction of the remaining part of the unerased symbols of the codeword. At this point in time, Bob holds a small list of messages $\mlist$ that has been (implicitly) determined by Calvin, and via the consistency decoder wishes to find the unique message $m$ in the list that was transmitted. For any transmitted message $m$, as the list is small, we can guarantee that with high probability over our code design most of the codeword suffixes corresponding to $m$ are roughly of distance $\frac{(n-t)(q-1)}{q}$ from any codeword suffix of any other message in the list $\mlist$, which in turn implies, given the bound on Calvin's remaining error budget, that decoding will succeed. However, this analysis is misleading as one must overcome the adversarial choice of $\mlist$ in establishing correct decoding. (We note that a \zx{na\"ive} use of the union bound does not suffice to overcome all potential lists $\mlist$.)

For successful decoding regardless of Calvin's adversarial behavior, we use the randomness in Alice's stochastic encoding (not known {\it a priori} to Calvin) and the fact that Calvin is causal. Recall that every message $m$ can be encoded into several codewords based on the randomness of Alice. Let $s_{left}$ and $s_{right}$ be the collection of Alice's random symbols used up to and after position $\tstar$ respectively. When Calvin (perhaps partially) determines the list $\mlist$ we may assume that he has full knowledge of $s_{left}$.
However by his causal nature he has no knowledge regarding $s_{right}$. 
As the list $\mlist$ is obtained at position $\tstar$ by Bob, we may now 
take advantage of the fact that it is independent of the randomness $s_{right}$ used by Alice. 
Specifically, instead of considering a single codeword in our analysis that corresponds to $m$ we consider the family of codewords that on one hand all share a specific $s_{left}$ (which corresponds to Calvin's view up to position $\tstar$) but have different $s_{right}$. From Calvin's perspective at position $\tstar$, all codewords in this family are equivalent and completely match his view so far. Using a family of codewords that are independent of $\mlist$ in our analysis, and allowing the decoding to fail on a small fraction of them, enables us to amplify the success rate of our decoding procedure to the extent that it can be used in the needed union bound.
Our full analysis is given in Claim~\ref{claim:good-1}, Claim~\ref{claim:good-2}, and Claim~\ref{claim:good-3}.

We now address the case $t \ne \tstar$ in Claim~\ref{claim:p-tilde-t}.
In this case, by previous discussions, it holds that we are in a 
High Type Trajectory of Calvin and that $\pref{p} > \phatt > \tilde{p}_{t}$.
When $t \ne \tstar$ we show that the decoding process of Bob will not return any codewords at all (as all messages in the list will fail the consistency test). In this case, we continue with the next value of $t$ (the next chunk end). 

We summarize all the properties of our code in Claim~\ref{claim:prob-good-exist}. With those properties established, through Bob's iterative decoder we show in Claim~\ref{claim:prob-decode} that Bob is able to correctly decode the transmitted message $m$ w.h.p. over the randomness of Alice. Finally, in Theorem~\ref{thm:capacity} we show that the channel capacity $C$ claimed is indeed achievable. We depict the flow of our claims, corollaries and theorems for the proof of achievability in Figure~\ref{fig:proofs-organization-achievability}.

{{\noindent {\bf Remark:}} The scenario wherein Calvin has $n\epsilon$ lookahead can also be handled via the codes above. Roughly, if we back off in our rate by $\epsilon$ the trajectory $\phatt$ gets shifted to the left by $n\epsilon$. We then ``sacrifice'' $n\epsilon$ symbols to Calvin by demanding that a more stringent energy-bounding condition be satisfied, in which the block length of the second part (succeeding $\tstar$) is reduced by $n\epsilon$. With these tweaks, the remainder of the analysis of the $n\epsilon$-lookahead codes is identical to that of the causal codes discussed above.}

%
%

		\bibliographystyle{unsrt}
		\bibliography{./refs}

\begin{appendices}
		\section{Converse}
		\label{sec:upper-bound}
		
		\begin{figure}[p]
			\centering
			\includegraphics[scale=0.8]{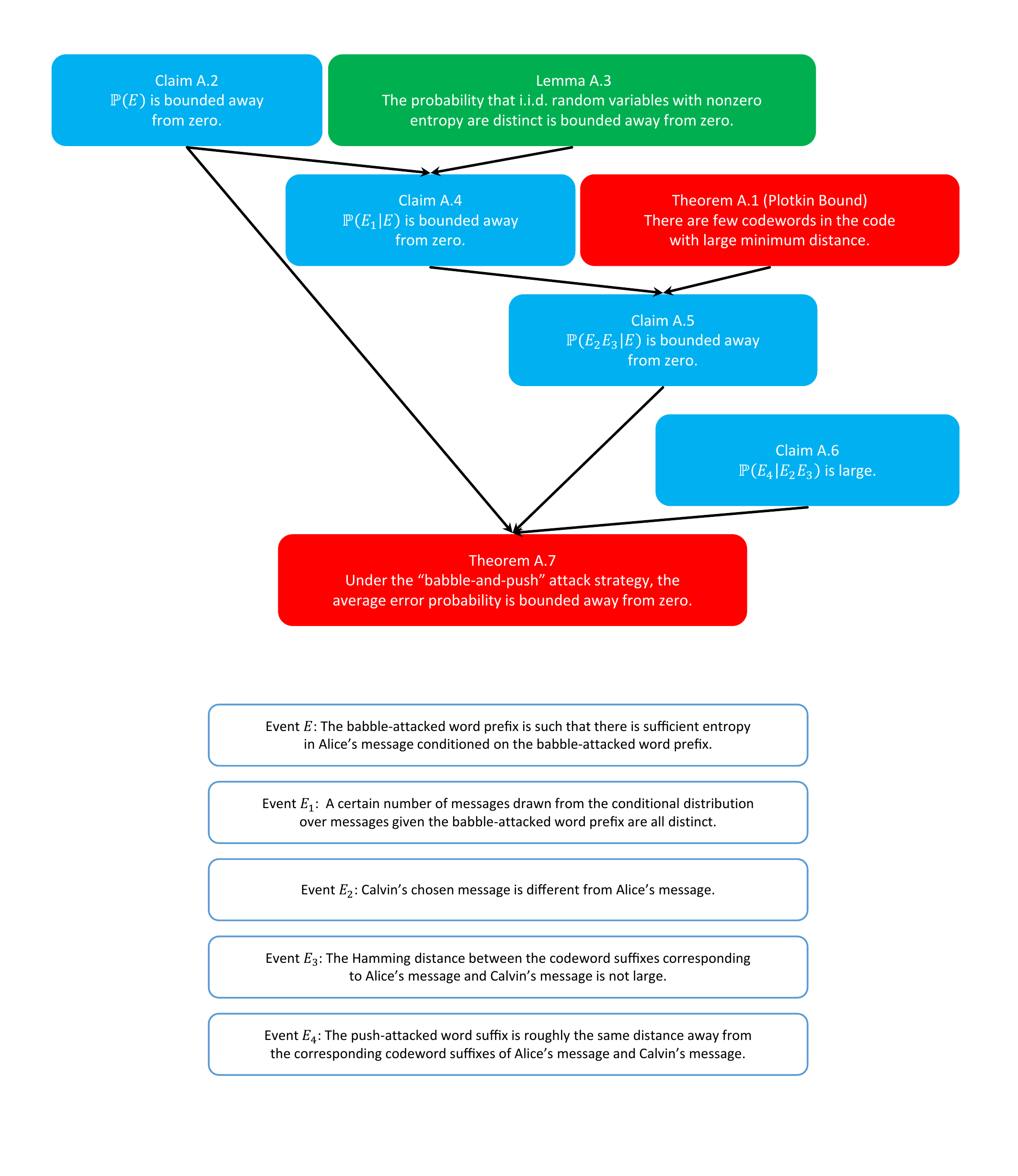}
			\caption{Organization of our claims and theorems for the converse}
			\label{fig:proofs-organization-converse}
		\end{figure}
		
		We start by summarizing several definitions and claims. The detailed presentations of the claims are followed by the summary. We depict the flow of our claims and theorems in Figure~\ref{fig:proofs-organization-converse}.
		\begin{enumerate}[label=\textbf{\arabic*}.]
			\item {\bf Summary of Event Definitions}
			\begin{itemize}
				\item Event~$E$: The babble-attacked word prefix is such that there is sufficient entropy \zx{in} Alice's message (i.e., the transmitted message) conditioned on the babble-attacked word prefix.
				\item Event~$E_{1}$: A certain number of messages drawn from the conditional distribution \zx{over} messages given the babble-attacked word prefix are all distinct.
				\item Event~$E_{2}$: Calvin's chosen message is different from Alice's message.
				\item Event~$E_{3}$: The Hamming distance between the codeword suffixes (with respect to the pushing phase of the attack) corresponding to Alice's message and Calvin's message is not large.
				\item Event~$E_{4}$: The resulting word suffix (with respect to the pushing phase of the attack) is roughly the same distance away from the codeword suffixes (with respect to the pushing phase of the attack) corresponding to Alice's message and Calvin's message.
			\end{itemize}
			\item {\bf Summary of Claims and Theorems}
			\begin{itemize}
				\item Theorem~\ref{thm:plotkin}: There are few codewords in the code with large minimum distance.			
				\item Claim~\ref{cl:sufficient-h}: The probability that $E$ happens is \zx{bounded away from zero}.
				\item Lemma~\ref{le:distinct-rvs}: The probability that i.i.d. random variables with nonzero entropy are distinct is \zx{bounded away from zero}. 
				\item Claim~\ref{cl:e1}: The probability that $E_{1}|E$ happens is \zx{bounded away from zero}.
				\item Claim~\ref{cl:distinct-m-hmd}: The probability that $E_{2}E_{3}|E$ happens is \zx{bounded away from zero}.
				\item Claim~\ref{cl:out-of-budget}: The  probability that $E_{4}|E_{2}E_{3}$ happens is large.
				\item Theorem~\ref{thm:converse}: Under the \zx{``babble-and-push''} attack strategy, the average error probability is \zx{bounded away from zero}.
			\end{itemize}
		\end{enumerate}
		
		Let $q\geq 2$. Let $\ppp\in\left(\intverr\right)$ be the fraction of symbol errors and $\pstar\in\left(\intvera\right)$ be the fraction of symbol erasures. Let $\alfq=\alfexpq$.
		
		In the following, unless otherwise specified, $H\left(\mathbf{X}\right)$ refers to {\em source entropy for symbols} (or $q$-ary entropy), which is obtained through normalizing the standard binary entropy by a factor of $\log{q}$, and $H_{q}\left(x\right)$ refers to {\em the $q$-ary entropy function}, namely, $H_{q}\left(x\right)=x\log_{q}{(q-1)}-x\log_{q}{x}-(1-x)\log_{q}{(1-x)}$.
		
		\subsection*{``Babble-and-push'' Attack}
		\begin{enumerate}
			\item ``Babble'': Let $\blen=n\left(\alfq+\ebab\right)$ be the position in the transmitted codeword, up to which Calvin adopts a ``babble'' strategy. Calvin chooses a random subset $\Gamma$ of $n\pbar$ indices uniformly from the set of all $n\pbar$-sized subset of $\left[\blen\right]$. For any $i\in\Gamma$, Calvin changes the symbol $x_{i}$. More precisely, $y_{i}$ is chosen by Calvin uniformly from $\left\lbrace0,1,\cdots,q-1\right\rbrace\setminus\left\lbrace x_{i}\right\rbrace$.
			
			\item ``Push'': Let $\xbab$ be the first $b$ symbols transmitted by Alice and $\ybab$ be the first $b$ symbols resulting from Calvin's ``babble'' attack, namely, $\xbab=(x_{1},x_{2},\cdots,x_{b})$ and $\ybab=(y_{1},y_{2},\cdots,y_{b})$. Calvin constructs the set of $(m,\mathbf{s})$ pairs that have encodings $\mathcal{C}(m,\mathbf{s})$ that are close to $\ybab$. Specifically, the set constructed by Calvin is
			\begin{align}
			B_{\ybab}=\left\lbrace(m,\mathbf{s})\colon\hmd{\ybab}{\mathcal{C}_{b}(m,\mathbf{s})}=n\pbar\right\rbrace
			\end{align}
			where $\mathcal{C}_{b}(m,\mathbf{s})$ is the first $b$ symbols of $\mathcal{C}(m,\mathbf{s})$. Next, Calvin chooses an element $(\mpm,\mathbf{s}^{\prime})\in B_{\ybab}$ uniformly at random and considers the corresponding encoding $\mathcal{C}\left(\mpm,\mathbf{s}^{\prime}\right)=\mathbf{x}^{\prime}=\left(x^{\prime}_{1},x^{\prime}_{2},\cdots,x^{\prime}_{n}\right)$. For $i>b$, if $x_{i}\neq x^{\prime}_{i}$, Calvin sets $y_{i}=x^{\prime}_{i}$ with probability half until $i=n$ or Calvin uses up $np$ errors. If Calvin uses up $np$ errors but $i<n$, then Calvin erases the subsequent symbols $x_{i}$ whenever $x_{i}\neq x^{\prime}_{i}$ until $i=n$ or Calvin uses up $n\pstar$ erasures.
		\end{enumerate}
		
		\begin{theorem}[$q$-ary Plotkin Bound\cite{blake1976introduction}]
			\label{thm:plotkin}
			There are at most $\frac{qd_{min}}{qd_{min}-(q-1)n}$ codewords in any $q$-ary code of block length $n$ with minimum distance $d_{min}>\left(1-\frac{1}{q}\right)n$.
		\end{theorem}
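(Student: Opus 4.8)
The plan is to invoke the classical pairwise-distance averaging argument that underlies the Plotkin bound. Let $\mathcal{C}$ be a $q$-ary code of block length $n$ with $M$ codewords and minimum distance $d_{min} > (1-\frac1q)n$; the goal is to show $M \le \frac{q\,d_{min}}{q\,d_{min}-(q-1)n}$. First I would consider the double sum $S = \sum_{\mathbf{u}\in\mathcal{C}}\sum_{\mathbf{v}\in\mathcal{C}} d_H(\mathbf{u},\mathbf{v})$ over all ordered pairs of codewords (including the diagonal $\mathbf{u}=\mathbf{v}$), and estimate it from below and from above. The lower bound is immediate: the $M(M-1)$ ordered pairs of distinct codewords each contribute at least $d_{min}$, while the $M$ diagonal terms contribute $0$, so $S \ge M(M-1)\,d_{min}$.

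For the upper bound I would decompose $S = \sum_{i=1}^n S_i$ coordinate by coordinate, where $S_i$ is the number of ordered pairs of codewords that disagree in position $i$. Fixing $i$ and letting $n_a$ denote the number of codewords whose $i$-th coordinate equals $a$ for each $a\in\{0,\dots,q-1\}$, we have $\sum_a n_a = M$ and $S_i = \sum_{a\ne b} n_a n_b = M^2 - \sum_a n_a^2$. By Cauchy--Schwarz (equivalently, convexity of $x\mapsto x^2$, or QM--AM) applied to the nonnegative reals $n_0,\dots,n_{q-1}$, one has $\sum_a n_a^2 \ge M^2/q$, hence $S_i \le (1-\frac1q)M^2$; summing over the $n$ coordinates gives $S \le (1-\frac1q)\,nM^2$.

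Combining the two estimates yields $M(M-1)\,d_{min} \le (1-\frac1q)\,nM^2$, and dividing by $M>0$ gives $(M-1)\,d_{min} \le (1-\frac1q)\,nM$, i.e.\ $M\bigl(d_{min}-(1-\frac1q)n\bigr) \le d_{min}$. Since the hypothesis $d_{min} > (1-\frac1q)n$ makes the bracketed quantity strictly positive, dividing by it yields $M \le \frac{d_{min}}{d_{min}-(1-\frac1q)n} = \frac{q\,d_{min}}{q\,d_{min}-(q-1)n}$, as claimed. Since this is a textbook fact, there is no real obstacle; the only step warranting a moment's care is the per-coordinate inequality $\sum_a n_a^2 \ge M^2/q$ for nonnegative reals summing to $M$, and one may in any case simply cite~\cite{blake1976introduction} for the statement.
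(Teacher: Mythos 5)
Your proof is correct: the pairwise-distance double count, the per-coordinate bound $S_i = M^2 - \sum_a n_a^2 \le \left(1-\frac{1}{q}\right)M^2$ via Cauchy--Schwarz, and the final rearrangement under the hypothesis $d_{min} > \left(1-\frac{1}{q}\right)n$ all go through exactly as written (and the case $M=1$ is trivially covered since the right-hand side is at least $1$). Note that the paper itself offers no proof of this statement --- it is quoted as a known result from \cite{blake1976introduction} --- so your argument simply supplies the standard textbook averaging proof of the $q$-ary Plotkin bound, which is entirely adequate here.
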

		
		Let $\mbfu$ be the random variable corresponding to Alice's input message, $\mbfx$ be the random variable corresponding to Alice's input codeword, and $\mbfy$ the random variable corresponding to the output of the channel. Let $\mbfx_{b}$ and $\mbfy_{b}$ be the random variables corresponding to $\xbab$ and $\ybab$, respectively.
		Let $E=\left\lbrace\mbfy_{b}\in \left\lbrace \ybab \colon  H\left(\mbfu|\mbfy_{b}=\ybab\right) \geq \frac{n\epsilon}{4}\right\rbrace \right\rbrace$. 
		
		\begin{claim}
			\label{cl:sufficient-h}
			Let \zt{$b=n\left(\alfq+\ebab\right)$}. Then for the ``babble-and-push'' attack, we have
			\begin{align}
			\mathbb{P}\left[E\right] \geq \frac{\epsilon}{4}.
			\end{align}
		\end{claim}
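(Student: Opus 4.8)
The plan is to lower-bound the \emph{expected} conditional entropy $H(\mathbf{U}\mid\mathbf{Y}_b)=\mathbb{E}_{\mathbf{y}_b}\!\left[H(\mathbf{U}\mid\mathbf{Y}_b=\mathbf{y}_b)\right]$ by a short mutual-information computation, and then convert this into the claimed bound on $\mathbb{P}[E]$ via a reverse Markov inequality. Fix an arbitrary code and work with $q$-ary (normalized) entropies, so $H(\mathbf{U})=nR$. Since the babble noise --- the random subset $\Gamma$ and the random replacement symbols --- is independent of $(\mathbf{U},\mathbf{S})$, and $\mathbf{X}_b$ is a function of $(\mathbf{U},\mathbf{S})$ and the code, the chain $\mathbf{U}\to\mathbf{X}_b\to\mathbf{Y}_b$ is Markov, so by data processing
\begin{equation}
H(\mathbf{U}\mid\mathbf{Y}_b)\;=\;H(\mathbf{U})-I(\mathbf{U};\mathbf{Y}_b)\;\ge\;nR-I(\mathbf{X}_b;\mathbf{Y}_b)\;=\;nR-H(\mathbf{Y}_b)+H(\mathbf{Y}_b\mid\mathbf{X}_b).
\end{equation}

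Next I would evaluate the last two terms. The babble phase introduces no erasures, so $\mathbf{Y}_b\in\{0,\dots,q-1\}^b$ and $H(\mathbf{Y}_b)\le b$. Moreover the babble channel is symmetric: conditioned on any input $\mathbf{x}_b$, the output $\mathbf{Y}_b$ is uniform over the words at Hamming distance exactly $n\bar{p}$ from $\mathbf{x}_b$, hence $H(\mathbf{Y}_b\mid\mathbf{X}_b)=\log_q\!\big(\binom{b}{n\bar{p}}(q-1)^{n\bar{p}}\big)$, which by Stirling equals $b\,H_q(n\bar{p}/b)-O(\log n)$. Plugging in $b=n(\alpha_q(\bar{p})+\epsilon/2)$, so that $n\bar{p}/b=\bar{p}/(\alpha_q(\bar{p})+\epsilon/2)$, gives
\begin{equation}
H(\mathbf{U}\mid\mathbf{Y}_b)\;\ge\;n\left(R-\big(\alpha_q(\bar{p})+\tfrac{\epsilon}{2}\big)\Big(1-H_q\big(\tfrac{\bar{p}}{\alpha_q(\bar{p})+\epsilon/2}\big)\Big)\right)-O(\log n).
\end{equation}

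The step I expect to be the crux is showing the bracketed quantity is bounded below by a positive multiple of $\epsilon$. Take $\bar{p}$ to be the value attaining the minimum in the capacity formula of Theorem~\ref{thm:main:capacity}, so $\alpha_q(\bar{p})\big(1-H_q(\bar{p}/\alpha_q(\bar{p}))\big)=C$, and recall we are in the regime $R>C$. The map $\alpha\mapsto\alpha\big(1-H_q(\bar{p}/\alpha)\big)$ is continuous (indeed Lipschitz) near $\alpha=\alpha_q(\bar{p})$ --- note $(\bar{p}/\alpha)\,H_q'(\bar{p}/\alpha)\to 0$ as $\bar{p}/\alpha\to 0$, so its derivative stays bounded even near $\bar{p}=0$ --- hence $\big(\alpha_q(\bar{p})+\tfrac{\epsilon}{2}\big)\big(1-H_q(\bar{p}/(\alpha_q(\bar{p})+\epsilon/2))\big)\le C+O(\epsilon)$. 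Choosing $\epsilon$ small enough relative to the gap $R-C$ (and $n$ large enough to absorb the $O(\log n)$ term) therefore yields $H(\mathbf{U}\mid\mathbf{Y}_b)\ge n\epsilon/2$.

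Finally, since $0\le H(\mathbf{U}\mid\mathbf{Y}_b=\mathbf{y}_b)\le H(\mathbf{U})=nR\le n$ for every $\mathbf{y}_b$, splitting the expectation at the threshold $n\epsilon/4$ gives $H(\mathbf{U}\mid\mathbf{Y}_b)\le nR\cdot\mathbb{P}[E]+\tfrac{n\epsilon}{4}$, whence
\begin{equation}
\mathbb{P}[E]\;\ge\;\frac{H(\mathbf{U}\mid\mathbf{Y}_b)-n\epsilon/4}{nR}\;\ge\;\frac{n\epsilon/2-n\epsilon/4}{nR}\;=\;\frac{\epsilon}{4R}\;\ge\;\frac{\epsilon}{4},
\end{equation}
using $R\le 1$. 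The entropy identities, the Stirling estimate, and the reverse Markov step are all routine; the only place requiring real care is the continuity/perturbation argument of the third paragraph, and the bookkeeping needed to land exactly on the constants $\tfrac{n\epsilon}{4}$ and $\tfrac{\epsilon}{4}$ rather than merely some constant multiples of $\epsilon$.
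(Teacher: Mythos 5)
Your overall route is the same as the paper's: data processing to reduce to $I(\mathbf{X}_b;\mathbf{Y}_b)$, an entropy count for the babble channel, and a reverse-Markov step at threshold $n\epsilon/4$ with $R\le 1$ to land on $\epsilon/4$. The last step and the channel-entropy computation are fine (the paper bounds $I(\mathbf{X}_b;\mathbf{Y}_b)\le b\left(1-H_q(n\bar p/b)\right)$ directly rather than via Stirling, so it has no $O(\log n)$ slack to absorb, but that difference is cosmetic). The genuine gap is at the step you yourself flag as the crux. In this claim the rate is fixed at $R=\alpha_q(\bar p)\left(1-H_q\!\left(\bar p/\alpha_q(\bar p)\right)\right)+\epsilon$, i.e.\ the gap $R-C$ \emph{is} $\epsilon$; so the move ``choose $\epsilon$ small enough relative to the gap $R-C$'' is circular — you cannot shrink $\epsilon$ relative to itself, and the bound $\mathbb{P}[E]\ge\epsilon/4$ must hold for the given $\epsilon$, not for a sufficiently small one. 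Likewise, an unquantified ``$\le C+O(\epsilon)$'' from a generic Lipschitz estimate does not suffice: to conclude $H(\mathbf{U}\mid\mathbf{Y}_b)\ge n\epsilon/2$ you need the increase of $\alpha\mapsto\alpha\left(1-H_q(\bar p/\alpha)\right)$ over the step from $\alpha_q(\bar p)$ to $\alpha_q(\bar p)+\epsilon/2$ to be at most $\epsilon/2$, i.e.\ a Lipschitz constant of exactly $1$, and your remark that the derivative ``stays bounded'' does not deliver that constant.

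The missing ingredient is the exact monotonicity fact the paper uses: $x\,H_q(\bar p/x)$ is nondecreasing in $x$ (its derivative is $-\log_q\!\left(1-\bar p/x\right)\ge 0$), equivalently the derivative of $\alpha\left(1-H_q(\bar p/\alpha)\right)$ is $1+\log_q\!\left(1-\bar p/\alpha\right)\le 1$. With that,
\begin{align*}
n(C+\epsilon)-n\left(\alpha_q(\bar p)+\tfrac{\epsilon}{2}\right)\left(1-H_q\!\left(\tfrac{\bar p}{\alpha_q(\bar p)+\epsilon/2}\right)\right)
=\tfrac{n\epsilon}{2}+n\left[\left(\alpha_q(\bar p)+\tfrac{\epsilon}{2}\right)H_q\!\left(\tfrac{\bar p}{\alpha_q(\bar p)+\epsilon/2}\right)-\alpha_q(\bar p)H_q\!\left(\tfrac{\bar p}{\alpha_q(\bar p)}\right)\right]\ge\tfrac{n\epsilon}{2},
\end{align*}
valid for every $\epsilon$, which is what the Markov step needs. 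If you replace your continuity paragraph with this (or an equivalent derivative bound $\le 1$), your argument matches the paper's proof.
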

		
		\begin{proof}
			\zx{Considering} the entropy $H\left(\mbfu|\mbfy_{b}\right)$, we have
			\begin{align}
			H\left(\mbfu|\mbfy_{b}\right) & = H\left(\mbfu\right) - I\left(\mbfu;\mbfy_{b}\right)
			\nonumber
			\\
			& \geq H\left(\mbfu\right) - I\left(\mbfx_{b};\mbfy_{b}\right) 
			\label{eq:data-proc-ineq}
			\\
			& \geq H\left(\mbfu\right) - b\left(1-H_{q}\left(\frac{n\pbar}{b}\right)\right) 
			\nonumber
			\\
			& = H\left(\mbfu\right) - n\left(\alfq+\ebab\right)\left(1-H_{q}\left(\frac{\pbar}{\alfq+\ebab}\right)\right) 
			\label{eq:sub-b}
			\\
			& \geq n\left(\capq[]+\epsilon\right)- n\left(\alfq+\ebab\right)\left(1-H_{q}\left(\frac{\pbar}{\alfq+\ebab}\right)\right) 
			\label{eq:rate-assumed}
			\\
			& = \frac{n\epsilon}{2}+n\left(\left(\alfq+\ebab\right)H_{q}\left(\frac{\pbar}{\alfq+\ebab}\right)-\alfq H_{q}\left(\frac{\pbar}{\alfq}\right)\right)
			\nonumber
			\\
			& \geq \frac{n\epsilon}{2}
			\label{eq:suff-entropy}
			\end{align}
			where \eqref{eq:data-proc-ineq} follows by the data-processing inequality, \eqref{eq:sub-b} follows by substituting $b=\zt{n\left(\alfq+\ebab\right)}$, \eqref{eq:rate-assumed} follows by assuming $R=\capq[]+\epsilon$, and \eqref{eq:suff-entropy} follows by the fact that $xH_{q}\left(\frac{\pbar}{x}\right)$ is a monotonic increasing function in variate $x$.
			
			Therefore, the expected value of $H\left(\mbfu|\mbfy_{b}=\ybab\right)$ over $\ybab$ is at least $\frac{n\epsilon}{2}$ and the maximum value of $H\left(\mbfu|\mbfy_{b}=\ybab\right)$ is $nR$. Applying the Markov inequality to the random variable $nR-H\left(\mbfu|\mbfy_{b}=\ybab\right)$, we have
			\begin{align*}
			\mathbb{P}\left[nR-H\left(\mbfu|\mbfy_{b}=\ybab\right)> nR-\frac{n\epsilon}{4}\right] & < \frac{nR-\frac{n\epsilon}{2}}{nR - \frac{n\epsilon}{4}}
			\\
			& = \frac{R-\frac{\epsilon}{2}}{R - \frac{\epsilon}{4}}
			\end{align*}
			Therefore,
			\begin{align}
			\mathbb{P}\left[E\right] & = \mathbb{P}\left[H\left(\mbfu|\mbfy_{b}=\ybab\right) \geq \frac{n\epsilon}{4}\right] 
			\nonumber
			\\
			& \geq 1 - \frac{R-\frac{\epsilon}{2}}{R - \frac{\epsilon}{4}}
			\nonumber
			\\
			& = \frac{\frac{\epsilon}{4}}{R-\frac{\epsilon}{4}}
			\nonumber
			\\
			& \geq \frac{\epsilon}{4}
			\label{eq:prob-e}
			\end{align}
			where \eqref{eq:prob-e} follows by the fact that $R\leq 1$.
		\end{proof}
		
		\begin{lemma}
			\label{le:distinct-rvs}
			Let $V$ be a random variable on a discrete finite set $\mathcal{V}$ with entropy $H\left(V\right)\geq\mu$, and let $V_{1},V_{2},\cdots,V_{k}$ be i.i.d. copies of V. Then
			\begin{align}
			\mathbb{P}[\lbrace V_{1},V_{2},\cdots,V_{k}\rbrace \text{ are all distinct }]\geq\left(\frac{\mu-\zt{\log_{q}{2}}-\log_{q}{k}}{\log_{q}{\lvert \mathcal{V}\rvert}}\right)^{k-1}.
			\label{eq:le-distinct}
			\end{align}
		\end{lemma}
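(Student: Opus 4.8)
The plan is to write the probability as a telescoping product over the $k$ copies and to lower bound each factor uniformly by $\frac{\mu-\log_q 2-\log_q k}{\log_q|\mathcal{V}|}$; throughout we may assume $\mu>\log_q(2k)$, since otherwise the estimate is never applied. First I would let $A_i$ be the event that $V_1,\dots,V_i$ are pairwise distinct, so that $A_1\supseteq\cdots\supseteq A_k$ and $\mathbb{P}[A_k]=\prod_{i=1}^{k-1}\mathbb{P}\!\left[V_{i+1}\notin\{V_1,\dots,V_i\}\mid A_i\right]$, using $\mathbb{P}[A_1]=1$ and $A_{i+1}\subseteq A_i$. Because $V_{i+1}$ is independent of $(V_1,\dots,V_i)$ while $A_i$ depends only on $(V_1,\dots,V_i)$, conditioning on $A_i$ leaves the law of $V_{i+1}$ unchanged; writing $S=\{V_1,\dots,V_i\}$ (a random set of size exactly $i$ on $A_i$) and $g(S)=\mathbb{P}[V\notin S]$ for a fresh copy $V$, the $i$-th factor equals $\mathbb{E}\!\left[g(S)\mid A_i\right]$ and is therefore at least $\min\{\mathbb{P}[V\notin S]\colon S\subseteq\mathcal{V},\ |S|=i\}$. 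It thus suffices to show that $\mathbb{P}[V\notin S]\ge\frac{\mu-\log_q 2-\log_q|S|}{\log_q|\mathcal{V}|}$ for every nonempty $S$ with $|S|\le k-1$: then each of the $k-1$ factors is at least $\frac{\mu-\log_q 2-\log_q k}{\log_q|\mathcal{V}|}$ (using $\log_q|S|\le\log_q k$), and taking the product gives \eqref{eq:le-distinct}.

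For the per-set bound I would use the grouping rule for $q$-ary entropy. Fix $S$, put $P=\mathbb{P}[V\in S]$ and $N=|\mathcal{V}|$, and note that $\mathbf{1}[V\in S]$ is a function of $V$, so
\[
\mu\ \le\ H_q(V)\ =\ H_q\!\left(\mathbf{1}[V\in S]\right)+P\cdot H_q(V\mid V\in S)+(1-P)\cdot H_q(V\mid V\notin S)\ \le\ \log_q 2+P\log_q|S|+(1-P)\log_q N,
\]
using $H_q(V\mid V\in S)\le\log_q|S|$, $H_q(V\mid V\notin S)\le\log_q(N-|S|)\le\log_q N$, and $H_q(\mathbf{1}[V\in S])\le\log_q 2$. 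Rearranging gives $P\log_q(N/|S|)\le\log_q 2+\log_q N-\mu$, whence
\[
1-P\ \ge\ \frac{\mu-\log_q 2-\log_q|S|}{\log_q(N/|S|)}\ \ge\ \frac{\mu-\log_q 2-\log_q|S|}{\log_q N},
\]
where the last step uses $|S|\ge 1$ together with the fact that the numerator is nonnegative, which holds because $|S|\le k-1$ forces $\log_q 2+\log_q|S|<\log_q(2k)<\mu$. This is exactly the bound needed in the previous paragraph, and multiplying the $k-1$ factors completes the proof.

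I do not expect a genuine obstacle here; the two points that need care are the independence step — observing that conditioning on ``the first $i$ copies are distinct'' cannot help $V_{i+1}$ avoid them beyond the worst case $1-\mathbb{P}[V\in S]$ over size-$i$ sets $S$, so that no gain over the uniform collision bound is available — and tracking the sign of $\mu-\log_q 2-\log_q|S|$ when weakening the denominator from $\log_q(N/|S|)$ to $\log_q N$, which is precisely why the harmless normalization $\mu>\log_q(2k)$ is imposed at the outset.
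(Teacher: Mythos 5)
Your proof is correct and follows essentially the same route as the paper's: both decompose the entropy of a fresh copy via the indicator of membership in the set of previously drawn values, bound the conditional entropies by $\log_q$ of the relevant cardinalities to lower-bound the per-step avoidance probability by $\frac{\mu-\log_q 2-\log_q k}{\log_q\lvert\mathcal{V}\rvert}$, and multiply over the $k-1$ steps. Your write-up is somewhat more careful than the paper's about the telescoping/conditioning step and the sign of the numerator, but these are refinements of the same argument rather than a different approach.
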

		
		\begin{proof}
			Fix $i\leq k$ and let $A_{i}=\lbrace v_{1},v_{2},\cdots,v_{i}\rbrace$, where $v_{1},v_{2},\cdots,v_{i}\in\mathcal{V}$. Let $W_{i}=\mathbf{1}\left(V_{i+1}\in A_{i}\right)$, where $\mathbf{1}\left(\cdot\right)$ denotes the {\em indicator function}. We write the distribution of $V$ as
			\begin{align*}
			\mathbb{P}\left[V_{i+1}=v\right]=\sum_{j\in\lbrace 0,1\rbrace}\mathbb{P}\left[W_{i}=j\right]\mathbb{P}\left[V_{i+1}=v|W_{i}=j\right]
			\end{align*}
			Then we can bound from above the entropy of $V$ as
			\begin{align*}
			H\left(V_{i+1}\right) & \leq H\left(V_{i+1}|W_{i}\right)+H\left(W_{i}\right)\\
			& = \sum_{j\in\lbrace 0,1\rbrace}\mathbb{P}\left[W_{i}=j\right]H\left(V_{i+1}=v|W_{i}=j\right)+H\left(W_{i}\right)\\
			& \leq \log_{q}{i}+\mathbb{P}\left[W_{i}=0\right]\log_{q}{\lvert\mathcal{V}\rvert}+\zt{\log_{q}{2}}
			\end{align*}
			Since $H\left(V\right)\geq\mu$, we have
			\begin{align*}
			\log_{q}{i}+\mathbb{P}\left[W_{i}=0\right]\log_{q}{\lvert\mathcal{V}\rvert}+\zt{\log_{q}{2}} \geq \mu
			\end{align*} 
			Hence, we have
			\begin{align*}
			\mathbb{P}\left[W_{i}=0\right]
			\geq \frac{\mu-\log_{q}{i}-\zt{\log_{q}{2}}}{\log_{q}{\lvert\mathcal{V}\rvert}}
			\geq
			\frac{\mu-\log_{q}{k}-\zt{\log_{q}{2}}}{\log_{q}{\lvert\mathcal{V}\rvert}}
			\end{align*}
			The event that each $V_{i}$ is distinct is equivalent to the event that for each $i\in\lbrace 2,3,\cdots,k\rbrace$, $V_{i+1}\notin A_{i}$, which implies $W_{i}=0$.
		\end{proof}

		\begin{claim}
			\label{cl:e1}
			Let $\rho_{\mbfu|\ybab}$ be the conditional distribution of $\mbfu$ given $\ybab$ under the ``babble-and-push'' attack. Let $\mbfu_{1},\mbfu_{2},\cdots,\mbfu_{k}$ be $k$ random variables drawn i.i.d. according to $\rho_{\mbfu|\ybab}$. Let $$E_{1}=\lbrace \lbrace \mbfu_{1},\mbfu_{2},\cdots,\mbfu_{k}\rbrace\text{ are all distinct}\rbrace.$$ For large enough $n$, we have
			\begin{align}
			\mathbb{P}\left[E_{1}|E\right] \geq \left(\frac{\epsilon}{5}\right)^{k-1}.
			\end{align}
		\end{claim}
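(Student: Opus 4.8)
The plan is to apply Lemma~\ref{le:distinct-rvs} directly, using the conditioning event $E$ to supply the required entropy lower bound. Recall $E$ is exactly the event that $\mbfy_b$ lands in the set of prefixes $\ybab$ for which $H(\mbfu \mid \mbfy_b = \ybab) \geq \frac{n\epsilon}{4}$. Fix any such $\ybab$: the conditional law $\rho_{\mbfu \mid \ybab}$ is a distribution supported on a subset of the message set $\msg = \msgspace$, so its support has size at most $q^{nR}$, and it has entropy at least $\mu := \frac{n\epsilon}{4}$. Since $\mbfu_1,\dots,\mbfu_k$ are drawn i.i.d.\ from $\rho_{\mbfu \mid \ybab}$, Lemma~\ref{le:distinct-rvs} gives, for every $\ybab$ in the good set defining $E$,
\begin{align*}
\mathbb{P}\!\left[E_1 \mid \mbfy_b = \ybab\right] \;\geq\; \left(\frac{\frac{n\epsilon}{4} - \log_q 2 - \log_q k}{nR}\right)^{k-1}.
\end{align*}

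Next I would absorb the lower-order terms. The list size $k$ equals $L = \blistord$ (guaranteed by the list-decoding condition), a constant independent of $n$; hence $\log_q 2 + \log_q k = O(1)$, and for all sufficiently large $n$ it is at most $\frac{n\epsilon}{20}$, so the numerator is at least $\frac{n\epsilon}{4} - \frac{n\epsilon}{20} = \frac{n\epsilon}{5}$. Also $R \leq 1$ gives $nR \leq n$, so the bracketed ratio is at least $\frac{\epsilon}{5}$; raising to the $(k-1)$-th power yields $\mathbb{P}[E_1 \mid \mbfy_b = \ybab] \geq (\epsilon/5)^{k-1}$ uniformly over all $\ybab$ in the good set.

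Since this bound holds pointwise for every $\ybab$ consistent with $E$, averaging over the conditional law of $\mbfy_b$ given $E$ yields $\mathbb{P}[E_1 \mid E] \geq (\epsilon/5)^{k-1}$. There is no real obstacle here — the statement is essentially a direct instantiation of Lemma~\ref{le:distinct-rvs} — and the only point requiring care is the order of quantifiers in the asymptotic step: $k$ must be held fixed (so the $\log_q$ corrections are genuinely lower order) and ``sufficiently large $n$'' chosen accordingly, before any limit in $\epsilon$ is taken.
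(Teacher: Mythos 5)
Your proposal is correct and follows essentially the same route as the paper: instantiate Lemma~\ref{le:distinct-rvs} with the entropy bound $H(\mbfu\mid\mbfy_b=\ybab)\geq \frac{n\epsilon}{4}$ built into the event $E$, absorb the $\log_q k+\log_q 2$ terms for large $n$, and bound the denominator by $n$ (the paper takes $|\mathcal{V}|\leq q^{n}$ while you take $q^{nR}$ and then use $R\leq 1$, which is immaterial). One small inaccuracy that does not affect the argument: in the converse, $k$ is not the list size from the achievability's list-decoding condition but simply a constant parameter later set to $2$ or $25/\epsilon$; what matters, as you correctly use, is that $k$ is fixed independently of $n$.
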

		
		\begin{proof}
			From Claim~\ref{cl:sufficient-h}, given event $E$, we have $H\left(\mbfu|\mbfy_{b}=\ybab\right) \geq \frac{n\epsilon}{4}$.
			From Lemma~\ref{le:distinct-rvs}, setting $V=\mbfu$, $\mu=\frac{n\epsilon}{4}$, and $\lvert\mathcal{V}\rvert\leq q^{n}$, we have
			\begin{align*}
			\mathbb{P}\left[E_{1}|E\right] \geq \left(\frac{\frac{n\epsilon}{4}-\log_{q}{k}-\zt{\log_{q}{2}}}{n}\right)^{k-1}
			\end{align*}
			For large enough $n$, we have
			\begin{align*}
			\frac{\frac{n\epsilon}{4}-\log_{q}{k}-\zt{\log_{q}{2}}}{n}>\frac{\epsilon}{5}
			\end{align*}
			Thus,
			\begin{align*}
			\mathbb{P}\left[E_{1}|E\right] \geq \left(\frac{\epsilon}{5}\right)^{k-1}
			\end{align*}
		\end{proof}
		
		Let $\mbfu^{\prime}$ be the random choice of Calvin's message and $\mbfx^{\prime}$ be the random variable of the codeword corresponding to $\mbfu^{\prime}$. Let $\xpus=(x_{b+1},x_{b+2},\cdots,x_{n})$ be the remaining part of the input codeword in the ``push'' phase and $\mbfx_{p}$ be the corresponding random variable. Similarly, $\xpus^{\prime}=(x^{\prime}_{b+1},x^{\prime}_{b+2},\cdots,x^{\prime}_{n})$ be the part of the codeword chosen by Calvin in the ``push'' phase and $\mbfx^{\prime}_{p}$ be the corresponding random variable.
		
		\zt{Let $\hmd{\cdot}{\cdot}$ denote the Hamming distance function between two vectors.}
		
		\begin{claim}
			\label{cl:distinct-m-hmd}
			Let $$E_{2}=\lbrace\mbfu\neq\mbfu^{\prime}\rbrace$$ $$E_{3}=\left\lbrace \hmd{\mbfx_{p}}{\mbfx^{\prime}_{p}}\leq 2n\left(p-\pbar\right)+n\pstar-\frac{n\epsilon}{8} \right\rbrace.$$ Then for the ``babble-and-push'' attack, we have
			\begin{align*}
			\mathbb{P}\left[E_{2} E_{3}|E\right] \geq \epsilon^{O\left(\frac{1}{\epsilon}\right)}.
			\end{align*}
		\end{claim}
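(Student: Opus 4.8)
The plan is to run the symmetrization underlying the attack, and then combine the $q$-ary Plotkin bound (Theorem~\ref{thm:plotkin}) with Claim~\ref{cl:e1} to control how many message-distinct codewords can survive Calvin's babble filter. First, fix the code and a prefix $\ybab$ lying in the support of $\mbfy_{b}$ restricted to $E$, so $H(\mbfu\mid\mbfy_{b}=\ybab)\geq n\epsilon/4$. Since Alice's message and secret are uniform and independent and the babble flips a uniformly random $n\pbar$-subset of $\left[\blen\right]$ to uniformly random wrong symbols, Bayes' rule shows the posterior of $(\mbfu,\mathbf{s})$ given $\mbfy_{b}=\ybab$ is exactly uniform on $B_{\ybab}$. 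As Calvin draws $(\mpm,\mathbf{s}')$ with fresh randomness, independently and uniformly from the same set $B_{\ybab}$, the two pairs $(\mbfu,\mathbf{s})$ and $(\mpm,\mathbf{s}')$ are conditionally i.i.d.\ uniform on $B_{\ybab}$; in particular their messages are two i.i.d.\ draws from $\rho_{\mbfu\mid\ybab}$.

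Next I would add auxiliary samples. Set $k=\lceil 8/\epsilon\rceil+2$ and draw $k-2$ further independent uniform samples from $B_{\ybab}$, obtaining i.i.d.\ pairs $Z_{1}=(\mbfu,\mathbf{s})$, $Z_{2}=(\mpm,\mathbf{s}')$, $Z_{3},\dots,Z_{k}$, whose messages are $k$ i.i.d.\ copies of $\rho_{\mbfu\mid\ybab}$. By Lemma~\ref{le:distinct-rvs} (exactly as in the proof of Claim~\ref{cl:e1}), for $n$ large the event $D$ that all $k$ messages are pairwise distinct satisfies $\mathbb{P}\left[D\mid\ybab\right]\geq(\epsilon/5)^{k-1}$. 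Now the Plotkin step: using $\blen=n(\alfq+\ebab)$, $\ebab=\epsilon/2$, the identity $\tfrac{q-1}{q}n\left(1-\alfq\right)=2n(p-\pbar)+n\pstar$, and $\tfrac{q-1}{2q}\geq\tfrac14$, one checks
\begin{align*}
2n(p-\pbar)+n\pstar-\tfrac{n\epsilon}{8}\ \geq\ \Bigl(1-\tfrac1q\Bigr)(n-\blen)+\tfrac{n\epsilon}{8}.
\end{align*}
Hence, on $D$, if the $k$ length-$(n-\blen)$ suffixes $\mathcal{C}_{p}(Z_{i})\in\mathcal{X}^{n-\blen}$ were pairwise at Hamming distance strictly more than $2n(p-\pbar)+n\pstar-\tfrac{n\epsilon}{8}$, they would form a $q$-ary code of block length $n-\blen$ with minimum distance exceeding $(1-1/q)(n-\blen)+n\epsilon/8$; Theorem~\ref{thm:plotkin}, together with the fact that $\frac{qd}{qd-(q-1)(n-\blen)}$ is decreasing in $d$, would then force the number of such codewords to be at most $1+\tfrac{8(q-1)}{q\epsilon}<k$, a contradiction (if two of the $\mathcal{C}_{p}(Z_{i})$ coincide the distance bound holds outright). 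Therefore, conditioned on $D$, there is at least one pair $i<j$ with distinct messages and $\hmd{\mathcal{C}_{p}(Z_{i})}{\mathcal{C}_{p}(Z_{j})}\leq 2n(p-\pbar)+n\pstar-\tfrac{n\epsilon}{8}$.

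Finally I would transfer this to the specific pair $(Z_{1},Z_{2})$ by exchangeability. For $i<j$ let $G_{ij}$ be the event that $Z_{i},Z_{j}$ have distinct messages and push-phase suffixes within distance $2n(p-\pbar)+n\pstar-\tfrac{n\epsilon}{8}$. The previous paragraph gives $D\subseteq\bigcup_{i<j}G_{ij}$, and since $Z_{1},\dots,Z_{k}$ are i.i.d.\ (hence exchangeable) all $\mathbb{P}\left[G_{ij}\mid\ybab\right]$ are equal, so a union bound yields $\mathbb{P}\left[G_{12}\mid\ybab\right]\geq\mathbb{P}\left[D\mid\ybab\right]/\binom{k}{2}\geq(\epsilon/5)^{k-1}/\binom{k}{2}$. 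But $G_{12}=E_{2}\cap E_{3}$, so averaging over $\ybab$ in the support of $\mbfy_{b}$ given $E$ gives $\mathbb{P}\left[E_{2}E_{3}\mid E\right]\geq(\epsilon/5)^{k-1}/\binom{k}{2}$, which is $\epsilon^{O(1/\epsilon)}$ since $k=O(1/\epsilon)$. The main obstacle is precisely this Plotkin-plus-exchangeability step: Theorem~\ref{thm:plotkin} only delivers the \emph{existence} of one close, message-distinct pair inside a constant-size sample, and it must be converted into a lower bound on the probability that the particular (Alice, Calvin) pair is good; passing to auxiliary i.i.d.\ draws and exploiting exchangeability is what enables this, and it is what forces $k=\Theta(1/\epsilon)$ and hence the $\epsilon^{O(1/\epsilon)}$ bound rather than $\Omega(\epsilon)$. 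One must also check that the displayed arithmetic closes with exactly the slack $n\epsilon/8$ baked into $E_{3}$ and into $\blen=n(\alfq+\tfrac{\epsilon}{2})$, with the coefficient on the right staying nonnegative for every $q\geq2$ via $\tfrac{q-1}{2q}\geq\tfrac14$.
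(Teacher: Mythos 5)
Your proposal is correct and follows essentially the same route as the paper's proof: draw $\Theta(1/\epsilon)$ i.i.d.\ samples from $B_{\ybab}$, use Lemma~\ref{le:distinct-rvs} (as in Claim~\ref{cl:e1}) to make all messages distinct with probability $(\epsilon/5)^{k-1}$, invoke the $q$-ary Plotkin bound with $b=n\left(\alfq+\frac{\epsilon}{2}\right)$ to force some close message-distinct pair within the $E_{3}$ threshold, and transfer this to the Alice--Calvin pair by a union bound over pairs of i.i.d.\ draws (your exchangeability step is exactly the paper's $k^{2}\gamma\geq(\epsilon/5)^{k-1}$ argument, with $\gamma=\mathbb{P}\left[E_{2}E_{3}\mid\ybab\right]$). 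Your constants ($k=\lceil 8/\epsilon\rceil+2$ versus $25/\epsilon$, $\binom{k}{2}$ versus $k^{2}$) differ immaterially and your arithmetic closes correctly.
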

		
		\begin{proof}
			From Claim~\ref{cl:e1}, setting $k=2$, we lower bound the probability that $E_{2}$ holds given $E$ to be
			\begin{align*}
			\mathbb{P}\left[E_{2}|E\right] \geq \frac{\epsilon}{5}
			\end{align*}
			For general $k$, Claim~\ref{cl:e1} shows that the probability that the $k$ messages drawn from the conditional distribution $\rho_{\mbfu|\ybab}$ are all distinct is at least $\left(\frac{\epsilon}{5}\right)^{k-1}$. On the other hand, Plotkin's bound (Theorem~\ref{thm:plotkin}) shows that there do not exist $q$-ary codes of block length $n-b$ and minimum distance $d$ with more than $\frac{qd}{qd-(q-1)(n-b)}$ codewords. 
			
			Let $A=\left\lbrace \left(m_{i},\mathbf{s}_{i}\right) \colon \left(m_{i},\mathbf{s}_{i}\right)\in B_{\ybab}, i\in\left[k\right]  \right\rbrace$ be a set of $k$ mutually independent pairs uniformly from $B_{\ybab}$. Setting $k=\frac{25}{\epsilon}$, Claim~\ref{cl:e1} and Theorem~\ref{thm:plotkin} together imply that with probability at least $\left(\frac{\epsilon}{5}\right)^{k-1}$ there exist codewords $\mathbf{x}$ and $\mathbf{x}^{\prime}$ corresponding to pairs $\left(m,\mathbf{s}\right)$ and $\left(\mpm,\mathbf{s}^{\prime}\right)$ in $B_{\ybab} $with a distance $d$ satisfying
			\begin{align*}
			\frac{25}{\epsilon} \leq \frac{qd}{qd-(q-1)(n-b)}
			\end{align*} 
			Solving for $d$ and using $b=n\left(\alfq+\frac{\epsilon}{2}\right)$, we have
			\begin{align*}
			d & \leq  2n(p-\pbar)\frac{25}{25-\epsilon}+n\pstar\frac{25}{25-\epsilon}-\frac{n\epsilon}{2}\frac{q-1}{q}\frac{25}{25-\epsilon}\\
			& = 2n(p-\pbar)+n\pstar-\frac{n\epsilon}{4}\left(\frac{2(q-1)}{q}\frac{25}{25-\epsilon}-\frac{8(p-\pbar)}{25-\epsilon}-\frac{4\pstar}{25-\epsilon}\right)\\
			& < 2n(p-\pbar)+n\pstar-\frac{n\epsilon}{8}
			\end{align*}
			
			Let $\Delta=2n(p-\pbar)+n\pstar-\frac{n\epsilon}{8}$. Let $\gamma$ be the fraction of pairs in $B_{\ybab}$ that satisfy $E_{2}$ and $E_{3}$. Then the probability over the selection of set $A$ that event $E_{2}$ and $E_{3}$ hold is
			\begin{align}
			\mathbb{P}\left[\bigcup_{A}\left\lbrace\hmd{\mbfx_{i}}{\mbfx_{j}}<\Delta\text{ and }\left\lbrace\mbfu_{i}\neq\mbfu_{j}\right\rbrack\right\rbrace\right] \leq k^{2}\gamma =  \left(\frac{25}{\epsilon}\right)^{2}\gamma
			\label{eq:e2e3-upper}
			\end{align}
			where $\mbfx_{i}$ and $\mbfx_{j}$ are the codewords corresponding to the pairs $\left(m_{i},\mathbf{s}_{i}\right)$ and $\left(m_{j},\mathbf{s}_{j}\right)$ in set $A$, and $\mbfu_{i}$ and $\mbfu_{j}$ are the corresponding message random variables.
			
			However, the probability that $\lbrace \mbfu_{1},\mbfu_{2},\cdots,\mbfu_{\frac{25}{\epsilon}}\rbrace$ are all distinct and that at least one pair of codewords, $\mbfx_{i}$ and $\mbfx_{j}$ has distance less than $\Delta$ is
			\begin{align}
			\mathbb{P}\left[\bigcup_{A}\left\lbrace\hmd{\mbfx_{i}}{\mbfx_{j}}<\Delta\text{ and }\lbrace \mbfu_{1},\mbfu_{2},\cdots,\mbfu_{\frac{25}{\epsilon}}\rbrace\text{ are all distinct}\right\rbrace\right] \geq \left(\frac{\epsilon}{5}\right)^{\frac{25}{\epsilon}}
			\label{eq:e2e3-lower}
			\end{align}
			
			Since the event analyzed in \eqref{eq:e2e3-upper} includes that in \eqref{eq:e2e3-lower}, we have
			\begin{align*}
			\gamma \geq \left(\frac{\epsilon}{25}\right)^{2}\left(\frac{\epsilon}{5}\right)^{\frac{25}{\epsilon}}=\epsilon^{O\left(\frac{1}{\epsilon}\right)}
			\end{align*}
			
			Hence, by the definition of $\gamma$, we have $\mathbb{P}\left[E_{2} E_{3}|E\right] \geq \epsilon^{O\left(\frac{1}{\epsilon}\right)}$.
		\end{proof}
		
		\begin{claim}
			\label{cl:out-of-budget}
			Let $d$ be the Hamming distance between $\mbfx_{p}$ chosen by Alice and $\mbfx_{p} ^{\prime}$ chosen by Calvin. Let $\mbfy_{p}$ be the corresponding part of the word received by Bob resulting from Calvin's ``push'' attack. Let $$E_{4}=\left\lbrace\hmd{\mbfx_{p}}{\mbfy_{p}}\in\left(\frac{d}{2}-\frac{n\epsilon}{16},\frac{d}{2}+\frac{n\epsilon}{16}\right) \right\rbrace.$$ Then for the ``babble-and-push'' attack, we have
			\begin{align*}
			\mathbb{P}\left[E_{4}|E_{2}E_{3}\right]>1-2^{-\Omega\left(n\epsilon^2\right)}.
			\end{align*}
		\end{claim}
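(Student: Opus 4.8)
The plan is to condition on the outcome of the babble phase together with Calvin's uniform choice of $(\mpm,\mathbf{s}')$ from $B_{\mbfy_b}$; this fixes Alice's and Calvin's push‑block codewords $\mbfx_p,\mbfx_p'$, the distance $d=\hmd{\mbfx_p}{\mbfx_p'}$, and the events $E_2,E_3$, while leaving untouched all of the randomness of the ``push'' phase itself. By the definition of the push attack, that remaining randomness consists of one independent fair coin for each of the $d$ coordinates $i>b$ with $x_i\ne x_i'$ (``heads'' meaning Calvin acts on coordinate $i$ — setting $y_i=x_i'$ while his error budget lasts, and erasing $y_i=\Lambda$ thereafter); the $n-b-d$ push‑block coordinates with $x_i=x_i'$ are never touched. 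Since $E_2$, $E_3$ and $d$ are all measurable with respect to the babble phase and the choice of $(\mpm,\mathbf{s}')$, these $d$ coins are independent of $E_2E_3$, so it suffices to prove the bound conditionally on $E_2E_3$ and on an arbitrary fixed value of $d$; uniformity of the estimate in $d$ then lets us average it back out.

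Let $W$ be the number of heads among the $d$ coins, so $W\sim\mathrm{Bin}(d,\tfrac12)$. The first step is to observe that a push‑block coordinate is counted in $\hmd{\mbfx_p}{\mbfy_p}$ exactly when Calvin acts on it — on the disagreement set both $x_i'\ne x_i$ and $\Lambda\ne x_i$, and on the agreement set $y_i=x_i$ — so $\hmd{\mbfx_p}{\mbfy_p}$ equals the number of heads that Calvin is actually able to carry out. Calvin pays for each such heads‑coordinate with one of the $n(p-\pbar)$ errors remaining after the babble phase, or, once those run out, with one of his $n\pstar$ erasures, so he can service up to $n(p-\pbar)+n\pstar$ of them. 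Here is where $E_3$ enters: it gives $d\le 2n(p-\pbar)+n\pstar-\tfrac{n\epsilon}{8}$, hence $\tfrac{d}{2}+\tfrac{n\epsilon}{16}\le n(p-\pbar)+\tfrac{n\pstar}{2}\le n(p-\pbar)+n\pstar$, so on the event $\{W<\tfrac{d}{2}+\tfrac{n\epsilon}{16}\}$ Calvin's combined budget covers every heads‑coordinate and $\hmd{\mbfx_p}{\mbfy_p}=W$ exactly.

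It then remains to apply a fair‑coin concentration bound. By Hoeffding's inequality for $W\sim\mathrm{Bin}(d,\tfrac12)$, together with $d\le n$,
\[
\mathbb{P}\!\left[\,\bigl|W-\tfrac{d}{2}\bigr|\ge \tfrac{n\epsilon}{16}\,\right]\;\le\;2\exp\!\left(-\frac{2(n\epsilon/16)^{2}}{d}\right)\;\le\;2\exp\!\left(-\frac{n\epsilon^{2}}{128}\right)\;=\;2^{-\Omega(n\epsilon^{2})}.
\]
On the complementary event we have both $W<\tfrac{d}{2}+\tfrac{n\epsilon}{16}$ (so, by the previous paragraph, no budget is exceeded and $\hmd{\mbfx_p}{\mbfy_p}=W$) and $\bigl|W-\tfrac{d}{2}\bigr|<\tfrac{n\epsilon}{16}$, i.e.\ $\hmd{\mbfx_p}{\mbfy_p}\in\bigl(\tfrac{d}{2}-\tfrac{n\epsilon}{16},\,\tfrac{d}{2}+\tfrac{n\epsilon}{16}\bigr)$, which is exactly $E_4$. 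Hence $\mathbb{P}[E_4\mid E_2E_3]\ge 1-2^{-\Omega(n\epsilon^{2})}$, and since the right‑hand side absorbs constants this is the claimed bound.

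I expect the main obstacle to be the budget bookkeeping rather than the probability estimate: one must check that the $-\tfrac{n\epsilon}{8}$ slack built into the definition of $E_3$ (which itself traces back to lengthening the babble phase by $\tfrac{n\epsilon}{2}$ in Claim~\ref{cl:distinct-m-hmd} and the $q$‑ary Plotkin bound of Theorem~\ref{thm:plotkin}) strictly dominates the $\pm\tfrac{n\epsilon}{16}$ binomial fluctuation, so that the bad event ``Calvin exhausts both his errors and his erasures'' is already swallowed by the rare tail $\{\bigl|W-\tfrac{d}{2}\bigr|\ge\tfrac{n\epsilon}{16}\}$ and needs no separate treatment. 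Everything else is a textbook fair‑coin estimate; in particular the adversarial — but causal — manner in which Calvin chose $\mbfx_p'$ is irrelevant here precisely because the push‑phase coins are drawn only afterwards and are independent of that choice.
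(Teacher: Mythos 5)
Your high-level strategy is the paper's: condition on the babble phase and on Calvin's draw of $(\mpm,\mathbf{s}^{\prime})$ so that $d$, $E_{2}$, $E_{3}$ are fixed, observe that the push-phase coins are independent of this conditioning, apply a Chernoff/Hoeffding bound to a Binomial with mean half the relevant distance, and use $E_{3}$ for the budget bookkeeping. The step that does not survive contact with the paper's attack is your modeling of the erasure phase. In the ``babble-and-push'' attack as written, once Calvin exhausts his $np$ error budget he erases \emph{every} subsequent position with $x_{i}\neq x_{i}^{\prime}$, not only those where an independent coin lands heads; so your assertion that a push-block coordinate contributes to $\hmd{\mbfx_{p}}{\mbfy_{p}}$ exactly when a ``heads'' is serviced, and hence the identity $\hmd{\mbfx_{p}}{\mbfy_{p}}=W$ with $W\sim\mathrm{Bin}(d,\tfrac12)$, does not describe the attack being analyzed. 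Under the attack as stated, whenever $d/2>n(p-\pbar)$ (e.g.\ $\pstar$ a nontrivial constant and $d$ near its $E_{3}$-cap $2n(p-\pbar)+n\pstar-\tfrac{n\epsilon}{8}$) the error budget runs out with probability close to one at a point $T\approx 2n(p-\pbar)$ among the $d$ disagreement positions, after which all remaining $d-T$ disagreements are erased; the erasure-inclusive distance then concentrates near $n(p-\pbar)+(d-T)\approx d-n(p-\pbar)$, which exceeds $\tfrac{d}{2}+\tfrac{n\epsilon}{16}$ by roughly $\tfrac{n\pstar}{2}$. So your combined-budget inequality $\tfrac{d}{2}+\tfrac{n\epsilon}{16}\le n(p-\pbar)+\tfrac{n\pstar}{2}$, while arithmetically correct, is attached to a push rule the paper does not use, and the concentration around $d/2$ does not follow for the actual attack.

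The paper's own proof uses a different accounting that you should mirror: it sets aside the $n\pstar$ erased positions up front, works with $d_{c}=d-n\pstar$, and uses $E_{3}$ in the form $\tfrac{d_{c}}{2}\le n(p-\pbar)-\tfrac{n\epsilon}{16}$, so that with probability $1-2^{-\Omega(n\epsilon^{2})}$ the \emph{error} budget alone covers the coin-flip phase on the un-erased disagreements; the Chernoff bound is applied to the number of symbol changes, which concentrates around $d_{c}/2$ rather than $d/2$. (To be fair, the paper is itself loose here: the claim centers the distance at $d/2$ while its proof concentrates the error count at $d_{c}/2$, so some reconciliation of the statement with the attack is needed either way.) Note also that your modified rule of erasing only on heads would damage the downstream symmetrization in Theorem~\ref{thm:converse}: leaving a disagreement position equal to $x_{i}$ with probability $\tfrac12$ after errors are exhausted is not symmetric between $\mbfx$ and $\mbfx^{\prime}$, whereas deterministic erasure is. So the probabilistic core of your argument (conditioning, independence, Hoeffding) is fine, but the identification of what the binomial variable counts, and which budget pays for which positions, must be redone to match the attack actually defined in the paper.
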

		
		\begin{proof}
			Assume that Calvin erases $n\pstar$ symbols in the ``push'' phase. 
			\zx{\footnote{This actually corresponds to Calvin's ``strongest'' attack -- in the babble phase he uses up a fraction of his budget $np$ symbols errors, and now in the push phase he potentially uses up the remainder of his symbol error budget, and also his $n\pstar$ erasure budget.}} 
			Let $d_{c}=d-n\pstar$ be the Hamming distance between $\mbfx_{p}$ and $\mbfx_{p}^{\prime}$ without considering the positions corresponding to erasures. Then, \zt{if there were no constraints on Calvin's error budget}, Calvin would change $\frac{d_{c}}{2}$ locations in expectation. Conditioned on event $E_{2}$ and event $E_{3}$, we have
			\begin{align*}
			\frac{d_{c}}{2} = \frac{d-n\pstar}{2} \leq n\left(p-\pbar\right)-\frac{n\epsilon}{16} 
			\end{align*}
			Assume that $\frac{d_{c}}{2}=n\left(p-\pbar\right)-\frac{n\epsilon}{16}$. In the ``push'' attack, $d_{c}$ out of $\hmd{\mbfx_{p}}{\mbfx_{p}^{\prime}}$ symbols are drawn, and with probability half, Calvin changes the original symbol in $\mbfx_{p}$ to the intended symbol in $\mbfx_{p}^{\prime}$. By Chernoff's bound, the probability that the number of changes of symbols deviates from the expectation $\frac{d_{c}}{2}$ by more than $\frac{n\epsilon}{16}$ is at most $2^{-\Omega\left(n\epsilon^2\right)}$.
		\end{proof}
		
		\begin{theorem}
			\label{thm:converse}
			For any code with stochastic encoding of rate $R=\capq[]+\epsilon$, under the ``babble-and-push'' strategy, the average error probability $\epsbar$ is lower bounded by $\epsilon^{O\left(\frac{1}{\epsilon}\right)}$.
		\end{theorem}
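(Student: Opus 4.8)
The plan is to chain the three probability estimates already in hand --- Claim~\ref{cl:sufficient-h} ($\mathbb{P}[E]\ge\epsilon/4$), Claim~\ref{cl:distinct-m-hmd} ($\mathbb{P}[E_2\cap E_3\mid E]\ge\epsilon^{O(1/\epsilon)}$), and Claim~\ref{cl:out-of-budget} ($\mathbb{P}[E_4\mid E_2\cap E_3]\ge 1-2^{-\Omega(n\epsilon^2)}$) --- and then conclude with a symmetrization (``confusability'') argument. First, fix $k=25/\epsilon$ throughout and lower bound the probability of the good event $G:=E\cap E_2\cap E_3\cap E_4$. Using $\mathbb{P}[G]=\mathbb{P}[E\cap E_2\cap E_3]-\mathbb{P}[E\cap E_2\cap E_3\cap E_4^c]$, the first term is at least $\tfrac{\epsilon}{4}\cdot\epsilon^{O(1/\epsilon)}=\epsilon^{O(1/\epsilon)}$ by Claims~\ref{cl:sufficient-h} and~\ref{cl:distinct-m-hmd}, while the subtracted term is at most $\mathbb{P}[E_4^c\mid E_2\cap E_3]\le 2^{-\Omega(n\epsilon^2)}$ by Claim~\ref{cl:out-of-budget}; hence for all sufficiently large $n$ this correction is negligible and $\mathbb{P}[G]\ge\epsilon^{O(1/\epsilon)}$.

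Second, I would show that conditioned on $G$ the word received by Bob leaves him unable to decide whether Alice transmitted the codeword $\mbfx$ corresponding to her message $\mbfu$ or the codeword $\mbfx^{\prime}$ corresponding to Calvin's chosen message $\mbfu^{\prime}$. The point is that every move in the ``babble-and-push'' attack is \emph{reversible under swapping $\mbfx\leftrightarrow\mbfx^{\prime}$}. In the babble phase Calvin turns the transmitted prefix into $\ybab$ by recolouring a uniformly random $n\pbar$-subset of $[\blen]$ with uniformly random alternative symbols, and since $(\mpm,\mathbf{s}^{\prime})\in B_{\ybab}$ and Calvin changed exactly $n\pbar$ symbols we have $\hmd{\ybab}{\mbfx^{\prime}_b}=n\pbar=\hmd{\ybab}{\mbfx_b}$, with both $(\mbfu,\mathbf{s})$ and $(\mpm,\mathbf{s}^{\prime})$ lying in $B_{\ybab}$. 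In the push phase Calvin erases a set of $n\pstar$ positions on which $\mbfx_p$ and $\mbfx^{\prime}_p$ disagree (there are $d:=\hmd{\mbfx_p}{\mbfx^{\prime}_p}$ such positions) and independently flips each of the remaining $d_c=d-n\pstar$ disagreement positions toward $\mbfx^{\prime}_p$ with probability $\tfrac12$. Hence the \emph{reversed experiment} --- Alice sends $\mbfx^{\prime}$; Calvin babbles to $\ybab$ (possible, as $\mbfx^{\prime}_b$ is at distance $n\pbar$ from $\ybab$), draws $(\mbfu,\mathbf{s})$ uniformly from $B_{\ybab}$, erases the same $n\pstar$ positions, and flips toward $\mbfx_p$ exactly those disagreement positions left unflipped in the original run --- produces the \emph{same} received word with the \emph{same} probability (the $\binom{\blen}{n\pbar}^{-1}$ subset weight, the $(q-1)^{-n\pbar}$ recolouring weight, the $|B_{\ybab}|^{-1}$ pick weight, and the $2^{-d_c}$ push weight all coincide). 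Event $E_2$ makes $\mbfu\ne\mbfu^{\prime}$; and $E_3$ together with $E_4$ bounds the actual number of push-flips (the same for both directions) by $d_c/2+n\epsilon/16\le n(p-\pbar)=np-n\pbar$, so \emph{both} the original and the reversed attack keep the error count $\le pn$ and the erasure count $\le\pstar n$ and remain causal; thus the reversed experiment is a legitimate alternative explanation of what Bob sees.

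Third, I would conclude. As Alice's message is uniform and Calvin's pick from $B_{\ybab}$ is uniform, the original experiment (transmitted message $\mbfu$, received word $\mbfy$, pick $(\mpm,\mathbf{s}^{\prime})$) and its reversal (transmitted message $\mbfu^{\prime}$, received word $\mbfy$, pick $(\mbfu,\mathbf{s})$) are equiprobable; hence, given $\mbfy$ and $G$, the posterior that the transmitted message was $\mbfu$ equals the posterior that it was $\mbfu^{\prime}$. A decoder is a function of $\mbfy$ only, so on each such pair of equiprobable configurations it errs with probability at least $\tfrac12$. Averaging over all reversal pairs inside $G$ and combining with the first step,
\[
\epsbar\;\ge\;\tfrac12\,\mathbb{P}[G]\;\ge\;\tfrac12\,\epsilon^{O(1/\epsilon)}\;=\;\epsilon^{O(1/\epsilon)} .
\]

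The step I expect to be the main obstacle is making this symmetrization rigorous: exhibiting the reversed run as a genuine probability-preserving map on the underlying sample space (matching the babble-subset, recolouring, pick, and per-position push coins), and --- most delicately --- checking that the bounds supplied by $E_3$ and $E_4$, together with the ``strongest-attack'' convention that Calvin spends all $n\pstar$ erasures on disagreement positions, really do certify that the reversed attack never exceeds the $pn$-error or $\pstar n$-erasure budget and stays causal. The remaining accounting (that conditioning on $E$ can only help, and that the $2^{-\Omega(n\epsilon^2)}$ loss from $E_4^c$ is swamped by the $\epsilon^{O(1/\epsilon)}$ main term for large $n$) is routine.
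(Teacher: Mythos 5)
Your proposal is correct and follows essentially the same route as the paper: chaining Claims~\ref{cl:sufficient-h}, \ref{cl:distinct-m-hmd}, and \ref{cl:out-of-budget}, and then symmetrizing, which the paper formalizes exactly as your ``reversal map'' via the identities $\rho\left(\ybab,m,\mathbf{s},\mpm,\mathbf{s}^{\prime}\right)=\rho\left(\ybab,\mpm,\mathbf{s}^{\prime},m,\mathbf{s}\right)$ (both pairs uniform in $B_{\ybab}$) and $\rho\left(\mathbf{y}|\ybab,m,\mathbf{s},\mpm,\mathbf{s}^{\prime}\right)=\rho\left(\mathbf{y}|\ybab,\mpm,\mathbf{s}^{\prime},m,\mathbf{s}\right)$, concluding $2\epsbar\geq\rho(\mathcal{F})\left(1-2^{-\Omega\left(n\epsilon^2\right)}\right)$. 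The budget/causality check you worry about is handled in the paper at the same level of detail you give, so no genuinely new ingredient is missing.
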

		
		\begin{proof}
			The idea behind the proof is that conditioned on events $E,E_{2},E_{3}$, and $E_{4}$, Calvin can ``symmetrize'' the channel \cite{csiszar1988capacity,dey2013upper}. That is, Calvin can corrupt symbols in a manner so that Bob is unable to distinguish between two possible codewords $\mathbf{x}$ and $\mathbf{x}^{\prime}$ corresponding to two different messages $m$ and $\mpm$. Calvin does this by ensuring (with probability bounded away from zero) that the word $\mathbf{y}$ received by Bob is equally likely to be decoded to be either $\mathbf{x}$ or $\mathbf{x}^{\prime}$ and their corresponding messages $m$ and $\mpm$.
			
			Let $\rho\left(\ybab,m,\mathbf{s},\mpm,\mathbf{s}^{\prime}\right)$ be the joint distribution of the received word $\ybab$ at the end of the ``babble'' phase, Alice's message and randomness $(m,\mathbf{s})$, and Calvin's chosen message and randomness $(\mpm,\mathbf{s}^{\prime})$, under Alice's uniform choice of $(m,\mathbf{s})$ and Calvin's attack. \zt{For each $\mathbf{y}$, let $\rho\left(\mathbf{y}|\ybab,m,\mathbf{s},\mpm,\mathbf{s}^{\prime}\right)$ be the conditional distribution of $\mathbf{y}$ under Calvin’s attack. Let $\mathcal{D}:\mathcal{Y}^{n}\to\msg$ be a probabilistic map, namely, the mapping $\mathcal{D}(\mathbf{y})$ is a random variable taking values from $\msg$.} The error probability can be written as
			\begin{align*}
				\epsbar=
				\sum_{\ybab,m,\mathbf{s},\mpm,\mathbf{s}^{\prime}}\rho\left(\ybab,m,\mathbf{s},\mpm,\mathbf{s}^{\prime}\right)
				\sum_{\ypus}\rho\left(\mathbf{y}|\ybab,m,\mathbf{s},\mpm,\mathbf{s}^{\prime}\right)\mathbb{P}\left[\mathcal{D}(\mathbf{y})\neq m\right]
			\end{align*}
			Let $\mathcal{F}$ be the set of tuples $\left(\ybab,m,\mathbf{s},\mpm,\mathbf{s}^{\prime}\right)$ satisfying events $E,E_{2}$, and $E_{3}$. Claims~\ref{cl:sufficient-h} and \ref{cl:distinct-m-hmd} show that $$\rho\left(\mathcal{F}\right)\geq\frac{\epsilon}{4}\epsilon^{O\left(\frac{1}{\epsilon}\right)}.$$
			Then for $\left(\ybab,m,\mathbf{s},\mpm,\mathbf{s}^{\prime}\right)\in\mathcal{F}$, we have that $m\neq\mpm$ and that $\hmd{\xpus}{\xpus^{\prime}}$ is sufficiently small.
			
			Assuming $E_{4}$ holds, since Calvin change each symbol in $\xpus$ that is different from that in $\xpus^{\prime}$ with probability half, the corresponding part of the received word, $\ypus$, may result from either $\xpus$ or $\xpus^{\prime}$ with equal probability. Thus, the conditional distribution is symmetric, $$\rho\left(\mathbf{y}|\ybab,m,\mathbf{s},\mpm,\mathbf{s}^{\prime}\right)=\rho\left(\mathbf{y}|\ybab,\mpm,\mathbf{s}^{\prime},m,\mathbf{s}\right).$$ Then, by Claim~\ref{cl:out-of-budget}, for $\left(\ybab,m,\mathbf{s},\mpm,\mathbf{s}^{\prime}\right)\in\mathcal{F}$, we have
			\begin{align*}
			\zt{\sum_{\ypus}\rho\left(\ypus|\ybab,m,\mathbf{s},\mpm,\mathbf{s}^{\prime}\right)
				\geq 1-2^{-\Omega\left(n\epsilon^2\right)}.}
			\end{align*}
			
			Returning to the overall error probability, let $\rho\left(\ybab\right)$ be the unconditional probability of Bob receiving $\ybab$ in the ``babble'' phase, where the probability is over Alice's uniform choice of $(m,\mathbf{s})$ and Calvin's ``babble'' attack. Since the {\em a posteriori} distributions of $(m,\mathbf{s})$ and $(\mpm,\mathbf{s}^{\prime})$ given $\ybab$ are independent and both uniform in $B_{\ybab}$, the joint distribution can be written as
			\begin{align*}
			\rho\left(\ybab,m,\mathbf{s},\mpm,\mathbf{s}^{\prime}\right)=\rho\left(\ybab\right)\frac{1}{\lvert B_{\ybab}\rvert^{2}}=\rho\left(\ybab,\mpm,\mathbf{s}^{\prime},m,\mathbf{s}\right).
			\end{align*}
			Therefore, we have $\rho\left(\ypus|\ybab,m,\mathbf{s},\mpm,\mathbf{s}^{\prime}\right)=\rho\left(\ypus|\ybab,\mpm,\mathbf{s}^{\prime},m,\mathbf{s}\right)$.
			Hence,
			\begin{align*}
			\lefteqn{2\epsbar \geq \sum_{\mathcal{F}}\rho\left(\ybab,m,\mathbf{s},\mpm,\mathbf{s}^{\prime}\right)\cdot}\\ & & &\left( \sum_{\ypus}\rho\left(\ypus|\ybab,m,\mathbf{s},\mpm,\mathbf{s}^{\prime}\right)\mathbb{P}\left[\mathcal{D}\left(\ybab,\ypus\right)\neq m\right]
			+\sum_{\ypus}\rho\left(\ypus|\ybab,\mpm,\mathbf{s}^{\prime},m,\mathbf{s}\right)\mathbb{P}\left[\mathcal{D}\left(\ybab,\ypus\right)\neq \mpm\right] \right)\\
			&  & \geq & \sum_{\mathcal{F}}\rho\left(\ybab,m,\mathbf{s},\mpm,\mathbf{s}^{\prime}\right) \sum_{\ypus}\rho\left(\ypus|\ybab,m,\mathbf{s},\mpm,\mathbf{s}^{\prime}\right) \left(\mathbb{P}\left[\mathcal{D}\left(\ybab,\ypus\right)\neq m\right] + \mathbb{P}\left[\mathcal{D}\left(\ybab,\ypus\right)\neq \mpm\right] \right)\\
			& & \geq & \sum_{\mathcal{F}}\rho\left(\ybab,m,\mathbf{s},\mpm,\mathbf{s}^{\prime}\right) \sum_{\ypus}\rho\left(\ypus|\ybab,m,\mathbf{s},\mpm,\mathbf{s}^{\prime}\right)\\
			& & \geq & \frac{\epsilon}{4}\epsilon^{O\left(\frac{1}{\epsilon}\right)} \left(1-2^{-\Omega\left(n\epsilon^2\right)}\right).
			\end{align*}
		\end{proof}
		\section{Achievability}
		\label{sec:lower-bound}
		
		\begin{figure}[p]
			\centering
			\includegraphics[scale=0.8]{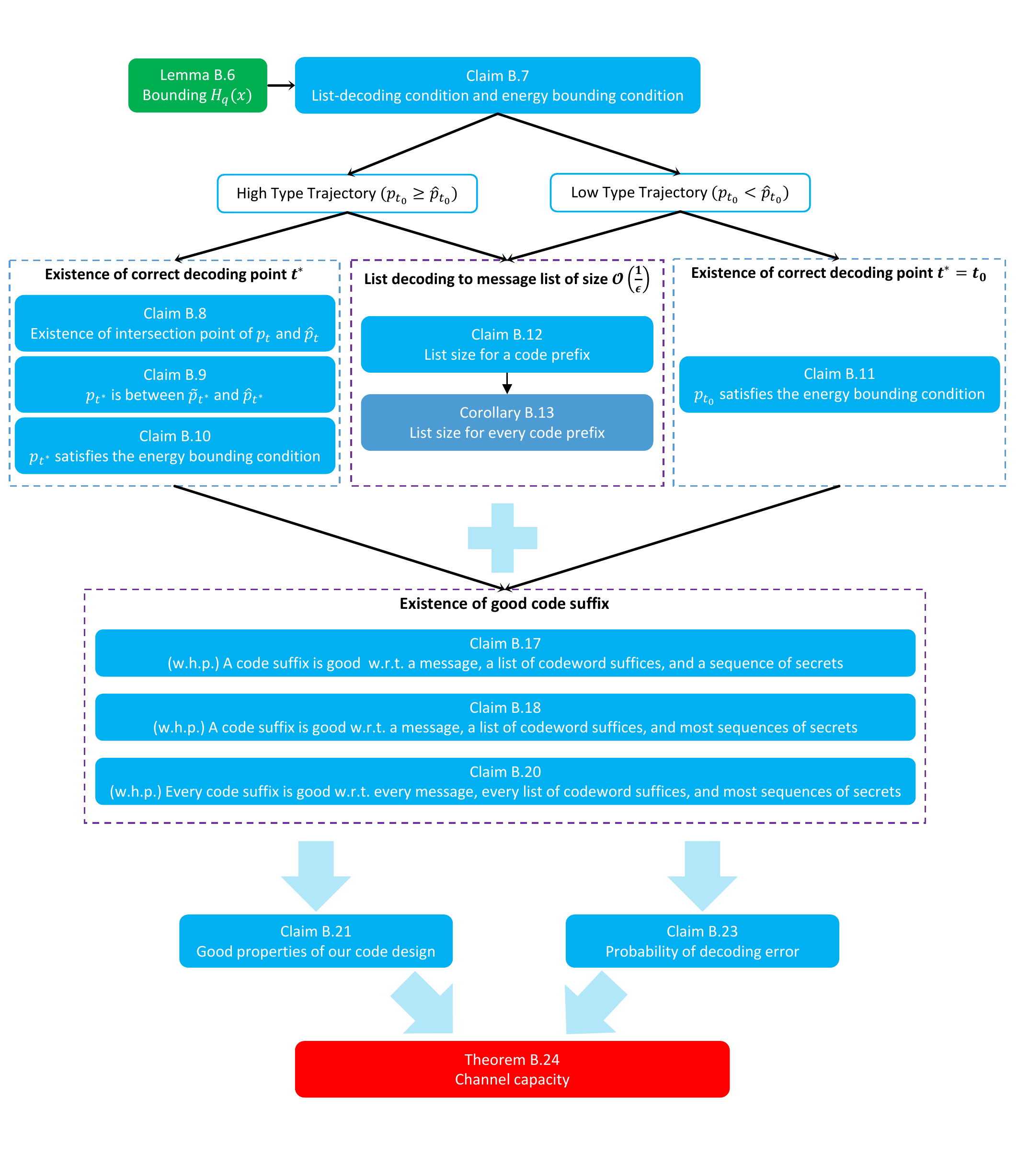}
			\caption{Organization of our claims, corollaries and theorems for the achievability}
			\label{fig:proofs-organization-achievability}
		\end{figure}
		
		We start by summarizing several definitions and claims. The detailed presentations of the definitions and claims are followed by the summary. We depict the flow of our claims, corollaries, and theorems in Figure~\ref{fig:proofs-organization-achievability}.
		
		\begin{enumerate}[label=\textbf{\arabic*}.]
			\item {\bf Preliminary definitions and technical claims}
			\begin{itemize}
				\item Definition~\ref{def:p-t-app}: Defines Calvin's trajectory $\ppp_{\mt}$ with respect to the unerased positions up to $\mt$, which is the number of symbol errors normalized by the number of unerased positions up to $\mt$.
				\item Definition~\ref{def:p-bar-t}: Defines Bob's guess of random noise $\pbar_t$ for deriving the definition of the decoding reference trajectory $\phatt$.
				\item Definition~\ref{def:p-hat-t}: Defines Bob's decoding reference trajectory $\phatt$, which is a revision of the definition given in Section~\ref{sec:model}.
				\item Definition~\ref{def:tr-type} Defines two types of trajectory of Calvin according to $\phat_{\mt_{0}}$.
				\item Definition~\ref{def:p-tilde-t} Defines the energy bounding trajectory $\ptil_{\mt}$, which delimits the smallest value of $\pref{p}$ that meets with the energy bounding condition.
				\item Lemma~\ref{lemma-1}: A technical lemma which gives a certain upper bound on the $q$-ary entropy function.
			\end{itemize}
			\item {\bf The list decoding and energy bounding properties}
			\begin{itemize}
				\item Claim~\ref{claim:list-energy}: This is a central claim which shows that the decoding reference trajectory $\phatt$ satisfies the list-decoding condition and the energy bounding condition.
			\end{itemize}
			\item {\bf Establishing the existence of correct decoding point}
			\begin{itemize}
				\item Claim~\ref{claim-above}: Calvin's trajectory $\pref{p}$ always intersects with the decoding reference trajectory $\phatt$ no later than the second to last chunk.
				\item Claim~\ref{claim-ingap}: For any High Type Trajectory $\pref{p}$, the value of $\pref{p}$ at the chunk end immediately after the intersection of the decoding reference trajectory $\phatt$ with $\pref{p}$ satisfies the energy bounding condition \zt{(Recall that both $\phatt$ and $\ppp_{\mt}$ are defined with respect to unerased positions}). 
				\item Claim~\ref{claim:p-tilde-t}: If $\pref{p}$ is larger than $\tilde{p}_{t}$ at point $t$, then $\pref{p}$ satisfies the energy bounding condition.
				\item Claim~\ref{claim-zero-starting}: At point $t_{0}$, if $p_{t_{0}}$ is approximately $\phat_{t_{0}}$ then it satisfies the energy bounding condition.
			\end{itemize}
			\item {\bf List decoding properties}
			\begin{itemize}
				\item Claim~\ref{thm:b-list-decodable}: A code prefix can be list decoded to a list of messages of size $\blistord$ with high probability.
				\item Corollary~\ref{coro:b-list-decodable}: Every code prefix can be list decoded to a list of messages of size $\blistord$ with high probability.
			\end{itemize}
			\item {\bf Utilizing the energy bounding condition}
			\begin{itemize}
				\item Definition~\ref{def:dist}: Defines the distance between a codeword suffix and a list of codeword suffixes.
				\item Definition~\ref{def:good-1}: Defines certain {\em goodness} properties of a code suffix with respect to a message, a list of codeword suffixes (of messages excluding the transmitted message), and a sequence of secrets.
				\item Definition~\ref{def:good-2}: Defines $\sigma$-goodness property of a code suffix with respect to a message, a list of codeword suffixes (of messages excluding the transmitted message), and most sequences of secrets.
				\item Claim~\ref{claim:good-1}: A code suffix is good with respect to a message, a list of codeword suffixes (of messages excluding the transmitted message), and a sequence of secrets. 			
				\item Claim~\ref{claim:good-2}: A code suffix is $\sigma$-good with respect to a message and a list of codeword suffixes (of messages excluding the transmitted message).
				\item Claim~\ref{claim:good-3}: Every code suffix is $\sigma$-good with respect to every transmitted message and every list of codeword suffixes (of messages excluding the transmitted message).
			\end{itemize}
			\item {\bf Summary and proof of Theorem~\ref{thm:main:capacity}}
			\begin{itemize}
				\item Claim~\ref{claim:prob-good-exist}: With high probability our code $\mcode$ possesses the needed properties.
				\item Claim~\ref{claim:prob-decode}: With high probability Bob succeeds in decoding.
				\item Theorem~\ref{thm:capacity}: Rephrasing of Theorem~\ref{thm:main:capacity} (channel capacity).
			\end{itemize}
		\end{enumerate}
		
		Let $\erate>0$ and $q\geq 2$. Let $\ppp\in\left(\intverr\right)$ be the fraction of symbol errors and $\pstar\in\left(\intvera\right)$ be the fraction of symbol erasures such that $2\ppp+\pstar+\erate\leq\frac{q-1}{q}$.
		
		Let $\echunk=\frac{\erate^2}{9q^{2}}$. Let $\leftcw\in\setchend=\choiceschunk$.

		Assume the received word $\my\in\mathcal{Y}^{n}$ has $np$ symbol errors and $n\pstar$ erasures. For any $t\in\setchend$, let $\lat$ be the number of erasures in $\my$ up to position $t$.
		
		\zt{Let $t_{0}=k_{0}\chunk\in\setchend$ be the smallest integer such that $\mt_0 - \lat[\mt_{0}]\geq n\left(1-\frac{2q}{q-1}\ppp-\frac{q}{q-1}\pstar-\frac{\erate^2}{4}\right)$.
		
		Let $S=\zx{\echunk^3/q^2}$ be the secret rate, namely, $q^{nS}$ is the size of the set $\secr$ of secrets available to Alice.}
		
		\subsection{Preliminaries}
		
		\begin{definition}[Calvin's Trajectory $\ppp_{\mt}$]
			\label{def:p-t-app}
			\zt{Let $\ppp_{\mt} \in[0,1]$ be the actual fraction of symbol errors with respect to the unerased positions in the codeword prefix of $\mx$ with respect to position $t$}.
		\end{definition}
		
		\begin{definition}[Bob's Guess of Random Noise $\pbar_{\mt}$] 
			
			\label{def:p-bar-t}
			\begin{align}
			\label{eq:p-bar-t}
			\pbar_{\mt}=
			\ppp+\frac{\pstar}{2} -\frac{q-1}{2q}\left(1-\frac{\mt-\lat}{n}\right). 
			\end{align} 
		\end{definition}
		
		\begin{definition}[Bob's Decoding Reference Trajectory $\phat_{\mt}$]
			\label{def:p-hat-t}
			Let $\alfq[\mt]=\alfexpq[\mt]$ where $\pbar_{\mt}$ is as in Definition \ref{def:p-bar-t}.
			Then
			\begin{align}
			\label{eq:p-hat-t}
			\phat_{\mt}=
			\begin{cases}
			\frac{\erate^2}{9q^{2}\alpha^2_q(0)}, 
			& (\mt-\lat) \in 
			\left[
			n\left(1-\frac{2q}{q-1}\ppp-\frac{q}{q-1}\pstar-\frac{\erate^2}{4}\right),
			n\left(1-\frac{2q}{q-1}\ppp-\frac{q}{q-1}\pstar\right)
			\right),
			\\
			\frac{\pbar_{\mt}}{\alfq[\mt]}+\epht, 
			& (\mt-\lat) \in 
			\left[
			n\left(1-\frac{2q}{q-1}\ppp-\frac{q}{q-1}\pstar\right),
			n\left(1-\frac{q}{q-1}\pstar\right)
			\right].
			\end{cases}
			\end{align}
		\end{definition}

		\begin{definition}[Trajectory Type]
			\label{def:tr-type}
			For any trajectory $\ppp_{t}$ of Calvin, consider the values of $\ppp_{t}$ and $\phatt$ at position $t=t_{0}$. If $\ppp_{t_{0}} \geq \phat_{t_0}$ then Calvin's trajectory $\ppp_{t}$ is a \textit{High} Type Trajectory, otherwise $\ppp_{t}$ is a \textit{Low} Type Trajectory.
		\end{definition}
		
		\begin{definition}[Energy Bounding Trajectory $\tilde{p}_{t}$]
			\label{def:p-tilde-t}
			Let $\alfq[\mt]=\alfexpq[\mt]$ where $\pbar_{\mt}$ is as in Definition \ref{def:p-bar-t}. Then
			\begin{align}
			\label{eq:p-tilde-t}
			\tilde{p}_{t}=\frac{\pbar_{t}}{\alfq[t]}+\frac{(n-t)\erate^2}{9q^{2}(\mt-\lat)}
			\end{align}
		\end{definition}
		
		\begin{lemma}
			\label{lemma-1}
			Let $q\geq 2$ and $H_q(x)=x\log_q{(q-1)}-x\log_q{x}-(1-x)\log_q{(1-x)}$ for $x\in\left[0,1-1/q\right]$. Then for any $\del\in(0,1/2)$, we have
			\begin{align*}
			H_q(x+\del)<H_q(x)+\frac{2\sqrt{\del}+{\del\ln{(q-1)}}}{\ln{q}}.
			\end{align*}
		\end{lemma}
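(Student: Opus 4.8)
The plan is to strip the inequality down to a statement about the ordinary (natural-logarithm) binary entropy function and then verify that statement by elementary calculus. Write $h(t)=-t\ln t-(1-t)\ln(1-t)$ for $t\in[0,1]$, so that $H_q(t)=\frac{1}{\ln q}\bigl(t\ln(q-1)+h(t)\bigr)$. Then
\begin{align*}
H_q(x+\del)-H_q(x)=\frac{1}{\ln q}\Bigl(\del\ln(q-1)+h(x+\del)-h(x)\Bigr),
\end{align*}
and since the term $\del\ln(q-1)/\ln q$ also appears on the right-hand side of the claimed bound, the lemma is equivalent to the single inequality $h(x+\del)-h(x)<2\sqrt{\del}$ (valid whenever $x$ and $x+\del$ lie in $[0,1]$, which is the case in every application of the lemma; the formula for $H_q$ above extends $H_q$ naturally to all of $[0,1]$).

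Next I would reduce to the case $x=0$. Setting $g(x)=h(x+\del)-h(x)$, we have $g'(x)=h'(x+\del)-h'(x)$, and since $h''(t)=-\tfrac{1}{t(1-t)}<0$ the function $h'$ is strictly decreasing; hence $g'(x)<0$, so $g$ is decreasing on its domain. Therefore $g(x)\le g(0)=h(\del)-h(0)=h(\del)$, and it suffices to prove $h(\del)<2\sqrt{\del}$ for $\del\in(0,1/2)$.

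For this last step I would use the crude bound $-\ln(1-\del)\le\del/(1-\del)$ (integrate $1/(1-t)$ over $[0,\del]$), which yields $-(1-\del)\ln(1-\del)\le\del$ and hence $h(\del)\le\del(1-\ln\del)$. Substituting $u=\sqrt{\del}\in(0,1/\sqrt{2}]$, the desired inequality $\del(1-\ln\del)<2\sqrt{\del}$ becomes $u(1-2\ln u)<2$; the left-hand side, as a function of $u$, has derivative $-1-2\ln u$, which vanishes only at $u=e^{-1/2}$, where the function attains its maximum value $2e^{-1/2}<2$. This closes the argument, in fact with a bit of slack (the true constant is $2/\sqrt{e}$ rather than $2$).

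The argument is essentially routine once the reduction to $h(x+\del)-h(x)$ is made; the only point that requires a moment's care is the domain condition $x+\del\le 1$ needed for $h(x+\del)$ to be defined, but in every invocation of the lemma the arguments are honest probabilities of size at most (roughly) $1/2$, so this never causes trouble. The conceptual step is the monotonicity of $g$, which pins the worst case at $x=0$; the remaining one-variable inequality $h(\del)<2\sqrt{\del}$ is a short calculus computation.
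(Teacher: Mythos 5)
Your proof is correct, and it shares the paper's structural skeleton while executing both halves differently. The common core is the concavity reduction: in both arguments the worst case is pinned at $x=0$, so everything comes down to bounding the entropy increment at $\delta$ by $2\sqrt{\delta}$ (plus the $\delta\ln(q-1)$ contribution). Where you differ is, first, in the decomposition: you peel off the linear term $x\log_q(q-1)$ at the outset, so the $\delta\ln(q-1)/\ln q$ contributions cancel on both sides and the lemma becomes the $q$-free statement $h(x+\delta)-h(x)<2\sqrt{\delta}$ for the natural-log binary entropy $h$, whereas the paper keeps $H_q$ throughout, uses the chord-slope form of concavity to get $H_q(x+\delta)-H_q(x)<H_q(\delta)$, and then estimates $H_q(\delta)$ via a chain of elementary logarithmic inequalities (namely $\log\frac{1}{1-\delta}\le 2\delta$, then $2(1-\delta)<\log\frac{1}{\delta}$, and finally $\ln x\le\frac{x-1}{\sqrt{x}}$ to obtain $\delta\ln\frac{1}{\delta}<\sqrt{\delta}$). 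Second, your closing step $h(\delta)\le\delta(1-\ln\delta)$ followed by maximizing $u(1-2\ln u)$ at $u=e^{-1/2}$ is a single clean calculus computation and even exposes slack (the constant $2$ could be replaced by $2/\sqrt{e}$), which the paper's route does not make visible. You also explicitly address the domain issue: as stated the lemma allows $x+\delta$ to leave $\left[0,1-1/q\right]$ (and for $q\ge 3$ even to exceed $1$ when $x$ is near $1-1/q$), a point the paper's proof passes over silently; as you note, in every invocation the arguments stay in the range where the extended formula is defined and concave, so this is a matter of bookkeeping rather than a substantive gap in either argument.
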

		
		\begin{proof}
			To prove the lemma, we first show that
			\begin{align*}
			\log (1-x)+2x\geq 0
			\end{align*}
			for $x\in\left[ 0,\frac{1}{2}\right) $ and
			\begin{align*}
			\log (1-x)+2x<0
			\end{align*} 
			for $x\in\left( \frac{1}{2},1\right] $. 
			
			Let $f(x)=\log (1-x)+2x$ where $x\in\left[ 0,1\right] $. Then $f^{\prime}(x)=2-\frac{1}{(1-x)\ln 2}$. Solving $f^{\prime}(x)=0$, we obtain $x=1-\frac{1}{2\ln 2}<\frac{1}{2}$. Then for $x\in\left( 0,1-\frac{1}{2\ln 2}\right) $, $f^{\prime}(x)>0$ and for $x\in\left( 1-\frac{1}{2\ln 2},1\right) $, $f^{\prime}(x)<0$.
			
			Since $f(0)=f\left( \frac{1}{2}\right) =0$, then for $x\in\left[ 0,\frac{1}{2}\right) $ we have $\log (1-x)+2x\geq0$, and therefore,
			\begin{align}
			\label{eq:le-1}
			\log \frac{1}{1-x}\leq 2x.
			\end{align}
			On the other hand, for $x\in\left( \frac{1}{2},1\right] $ we have $\log (1-x)+2x<f\left( \frac{1}{2}\right) =0$, and thus, replacing $(1-x)$ by $x$ we have for $x\in\left[ 0,\frac{1}{2}\right) $
			\begin{align}
			\label{eq:le-11}
			2(1-x)<\log\frac{1}{x}.
			\end{align}
			
			Since $H_q(x)$ is concave, namely, the second derivative of $H_q(x)$ is negative for $x\in\left( 0,1-1/q\right) $, then
			\begin{align*}
			\frac{H_q(x+\del)-H_q(x)}{x+\del-x}<\frac{H_q(\del)-H_q(0)}{\del-0}.
			\end{align*}
			Therefore, we have
			\begin{align}
			H_q(x+\del)-H_q(x) &<H_q(\del)-H_q(0)\nonumber\\
			&= \del\log_q{\frac{1}{\del}} +(1-\del)\log_q{\frac{1}{1-\del}}+\del\log_q{(q-1)}\nonumber\\
			&=\frac{1}{\log{q}} \left(\del\log{\frac{1}{\del}} +(1-\del)\log{\frac{1}{1-\del}}+\del\log{(q-1)}\right)\nonumber\\
			&\leq \frac{1}{\log{q}} \left(\del\log \frac{1}{\del}+(1-\del)2\del+\del\log{(q-1)}\right)\label{eq:le-2}\\
			&< \frac{1}{\log{q}} \left(\del\log \frac{1}{\del}+\del\log \frac{1}{\del}+\del\log{(q-1)}\right)\label{eq:le-3}\\
			&= \frac{1}{\log{q}} \left(2\del\log \frac{1}{\del} +\del\log{(q-1)}\right)\nonumber
			\end{align}
			where \eqref{eq:le-2} follows by \eqref{eq:le-1} and \eqref{eq:le-3} follows by \eqref{eq:le-11}.
			
			Note that $\ln{x}\leq\frac{x-1}{\sqrt{x}}$ for $x\geq 1$ as $g(x)=\frac{x-1}{\sqrt{x}}-\ln{x}$ is monotonically increasing for $x\geq 1$ and $g(1)=0$. Then for $\del\in(0,1/2)$ we have 
			\begin{align}
			\del\ln{\frac{1}{\del}} \leq \del\left(\frac{1}{\sqrt{\del}}-\sqrt{\del}\right) < \sqrt{\del}\label{eq:le-ln}.
			\end{align}
			
			Hence, we have
			\begin{align}
			H_q(x+\del)-H_q(x) & < \frac{1}{\log{q}} \left(2\del\log \frac{1}{\del} +\del\log{(q-1)}\right)\nonumber\\
			& = \frac{1}{\ln{q}} \left(2\del\ln \frac{1}{\del} +\del\ln{(q-1)}\right)\nonumber\\
			& < \frac{1}{\ln{q}} \left(2\sqrt{\del} +\del\ln{(q-1)}\right)\label{eq:le-last}
			\end{align}
			where \eqref{eq:le-last} follows by \eqref{eq:le-ln}.
		\end{proof}
		
		\subsection{The list decoding and energy bounding properties}
		\begin{claim}
			\label{claim:list-energy}
			Let $\alfq=\alfexpq$ where $\pbar\in\left[0,\ppp\right]$. Let $$C=\capopq$$ and $R=C-\erate$. Then for any $t\in\setchend$ and $(\mt-\lat) \in 
			\left[n \left(1-\frac{2q}{q-1}\ppp-\frac{q}{q-1}\pstar-\frac{\erate^2}{4}\right),
			n\left(1-\frac{q}{q-1}\pstar\right)
			\right]$ there exists $\phatt\in\left[0,1-1/q\right]$ such that the following conditions are satisfied.
			\begin{align}
			\label{eq:list-con-c}
			& (\mt-\lat)\left( 1-H_{q}(\phatt)\right)-\frac{n\erate}{4} \geq nR\\
			\label{eq:energy-con-c}
			& n\ppp-(\mt-\lat)\phatt+\errdist\leq\frac{q-1}{2q}\left(n-n\pstar-\mt+\lat\right)
			\end{align}
		\end{claim}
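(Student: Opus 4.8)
The plan is to take $\phatt$ to be exactly the function of Definition~\ref{def:p-hat-t} and to verify, separately on the two branches of \eqref{eq:p-hat-t}, that $\phatt\in[0,1-1/q]$ and that the list‑decoding condition \eqref{eq:list-con-c} and the energy‑bounding condition \eqref{eq:energy-con-c} both hold. Everything rests on one algebraic identity: when $(\mt-\lat)$ lies in the second branch, substituting $\pbar_{\mt}=\ppp+\tfrac{\pstar}{2}-\tfrac{q-1}{2q}\bigl(1-\tfrac{\mt-\lat}{n}\bigr)$ into the defining expression $\alfexpq[\mt]$ telescopes to $\alfq[\mt]=\tfrac{\mt-\lat}{n}$. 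The same substitution shows $\pbar_{\mt}$ grows linearly from $0$ at the left endpoint $\mt-\lat=n\bigl(1-\tfrac{2q}{q-1}\ppp-\tfrac{q}{q-1}\pstar\bigr)$ to $\ppp$ at the right endpoint $\mt-\lat=n\bigl(1-\tfrac{q}{q-1}\pstar\bigr)$, so $\pbar_{\mt}\in[0,\ppp]$ throughout. Hence on the second branch $\phatt=\tfrac{n\pbar_{\mt}}{\mt-\lat}+\tfrac{n^{2}\erate^{2}}{9q^{2}(\mt-\lat)^{2}}$, its value at the left endpoint agrees with the first‑branch value $\tfrac{\erate^{2}}{9q^{2}\alpha_q^{2}(0)}$ (so $\phatt$ is continuous at the junction), and $\phatt\le 1-1/q$ follows from $2\ppp+\pstar+\erate\le\tfrac{q-1}{q}$, which gives $\tfrac{\pbar_{\mt}}{\alfq[\mt]}\le\tfrac{q-1}{q}$ with slack of order $\erate$ that absorbs the $O(\erate^{2})$ additive term.

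For the energy‑bounding condition I would substitute and cancel. On the second branch, $(\mt-\lat)\phatt=n\pbar_{\mt}+\tfrac{n^{2}\erate^{2}}{9q^{2}(\mt-\lat)}$ and $n\ppp-n\pbar_{\mt}=\tfrac{q-1}{2q}\bigl(n-(\mt-\lat)\bigr)-\tfrac{n\pstar}{2}$, so after cancelling the common $\tfrac{q-1}{2q}\bigl(n-(\mt-\lat)\bigr)$, \eqref{eq:energy-con-c} collapses to $\tfrac{\erate^{2}}{9q^{2}}\bigl((n-\mt)-\tfrac{n^{2}}{\mt-\lat}\bigr)\le\tfrac{n\pstar}{2q}$, which is immediate since $\mt-\lat\le n$ forces the left side to be $\le 0$. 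On the first branch, using $n\ppp=\tfrac{q-1}{2q}\bigl(n-n\alpha_q(0)\bigr)-\tfrac{n\pstar}{2}$ and $\phatt=\tfrac{\erate^{2}}{9q^{2}\alpha_q^{2}(0)}$, \eqref{eq:energy-con-c} reduces to $\tfrac{q-1}{2q}\bigl((\mt-\lat)-n\alpha_q(0)\bigr)-(\mt-\lat)\phatt+\errdist\le\tfrac{n\pstar}{2q}$. Here the first term is negative; moreover $(\mt-\lat)\phatt\ge\tfrac{n\erate^{2}}{9q^{2}\alpha_q(0)}-O\!\bigl(\tfrac{n\erate^{4}}{\alpha_q^{2}(0)}\bigr)$ by $\mt-\lat\ge n\alpha_q(0)-\tfrac{n\erate^{2}}{4}$, and $\errdist\le\tfrac{n\erate^{2}}{9q^{2}}\bigl(1-\alpha_q(0)\bigr)+O(n\erate^{4})$ by $\mt\ge\mt-\lat$, so the left side is at most $\tfrac{n\erate^{2}}{9q^{2}}\bigl(-\tfrac{1}{\alpha_q(0)}+1-\alpha_q(0)\bigr)+O\!\bigl(n\erate^{4}/\alpha_q^{2}(0)\bigr)$; since $\tfrac1a+a\ge 2$ and $\alpha_q(0)\ge\tfrac{q}{q-1}\erate$, this is strictly negative for $\erate$ small, hence $\le\tfrac{n\pstar}{2q}$.

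For the list‑decoding condition the inputs are that $C=\capopq$ is a minimum, so $C\le\alfq[\mt]\bigl(1-H_q(\pbar_{\mt}/\alfq[\mt])\bigr)$ since $\pbar_{\mt}\in[0,\ppp]$, together with Lemma~\ref{lemma-1}. Writing $\ell=\tfrac{\mt-\lat}{n}=\alfq[\mt]$ and $\del=\tfrac{\erate^{2}}{9q^{2}\ell^{2}}<\tfrac12$ on the second branch, Lemma~\ref{lemma-1} gives $H_q(\phatt)<H_q(\pbar_{\mt}/\ell)+\tfrac{1}{\ln q}\bigl(\tfrac{2\erate}{3q\ell}+\tfrac{\erate^{2}\ln(q-1)}{9q^{2}\ell^{2}}\bigr)$, hence $(\mt-\lat)\bigl(1-H_q(\phatt)\bigr)>n\ell\bigl(1-H_q(\pbar_{\mt}/\ell)\bigr)-\tfrac{1}{\ln q}\bigl(\tfrac{2n\erate}{3q}+\tfrac{n\erate^{2}\ln(q-1)}{9q^{2}\ell}\bigr)\ge nC-\tfrac{1}{\ln q}\bigl(\tfrac{2n\erate}{3q}+\tfrac{n\erate^{2}\ln(q-1)}{9q^{2}\ell}\bigr)$. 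Since $\ell\ge\alpha_q(0)\ge\tfrac{q}{q-1}\erate$, the subtracted quantity plus $\tfrac{n\erate}{4}$ is $\le n\erate\bigl(\tfrac14+\tfrac{2}{3q\ln q}+o(1)\bigr)<n\erate$ for small $\erate$ (worst case $q=2$, where $\tfrac14+\tfrac{1}{3\ln 2}\approx 0.73$), and this is exactly \eqref{eq:list-con-c} after substituting $R=C-\erate$. The first branch is the same estimate with $\pbar_{\mt}=0$, with $\ell$ replaced by $\alpha_q(0)$, and with the extra steps $\mt-\lat\ge n\alpha_q(0)-\tfrac{n\erate^{2}}{4}$ and $C\le\alpha_q(0)$ (the $\pbar=0$ term of the minimum).

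The step I expect to be the real obstacle is the energy‑bounding condition on the first branch. Superficially it looks false, since $\errdist=\tfrac{(n-\mt)\erate^{2}}{9q^{2}}$ can be of order $n\erate^{2}$ while the right‑hand side $\tfrac{n\pstar}{2q}$ can be made arbitrarily small; the inequality survives only because the constant $\tfrac{\erate^{2}}{9q^{2}\alpha_q^{2}(0)}$ in the first‑branch definition of $\phatt$ is tuned so that $(\mt-\lat)\phatt\approx\tfrac{n\erate^{2}}{9q^{2}\alpha_q(0)}$ dominates $\errdist\le\tfrac{n\erate^{2}}{9q^{2}}(1-\alpha_q(0))$ via $\tfrac1a+a\ge 2$. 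Recognising that this ``extra'' term in $\phatt$ is present for exactly this reason, and then bookkeeping the $O(\erate)$ and $O(\erate^{2})$ slacks (and checking they leave room even at $q=2$), is the only delicate part; the rest is substitution.
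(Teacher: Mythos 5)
Your proposal is correct and follows essentially the same route as the paper's proof: the same choice of $\hat{p}_t$ from Definition~\ref{def:p-hat-t}, the same key identity $n\alpha_q\left(\bar{p}_t\right)=t-\lambda_t$ on the second branch, Lemma~\ref{lemma-1} together with the minimality defining $C$ for the list-decoding condition, and the same substitution/cancellation for the energy-bounding condition. The only difference is a detail of execution on the first branch of the energy condition, where the paper compares the constant $\hat{p}_t$ with the monotonically increasing function $\frac{\bar{p}_t}{\alpha_q\left(\bar{p}_t\right)}+\frac{\epsilon^2}{9q^2\alpha_q^2\left(\bar{p}_t\right)}$ and reduces to the second-branch computation, while you estimate directly using $\alpha_q(0)+\frac{1}{\alpha_q(0)}\geq 2$ and $\alpha_q(0)\geq\frac{q}{q-1}\epsilon$ (and you additionally check $\hat{p}_t\leq 1-1/q$, which the paper leaves implicit); both verifications are valid.
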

		\begin{proof}
			First note that there exists $\mt\in\setchend$ and $(\mt-\lat) \in 
			\left[
			n\left(1-\frac{2q}{q-1}\ppp-\frac{q}{q-1}\pstar-\frac{\erate^2}{4}\right),n\left(1-\frac{q}{q-1}\pstar\right)
			\right]$ as $\erate^2/4>\echunk$.
			
			Then for $\zt{(\mt-\lat)} \in 
			\left[
			n\left(1-\frac{2q}{q-1}\ppp-\frac{q}{q-1}\pstar\right),n\left(1-\frac{q}{q-1}\pstar\right)
			\right]$, we have $\pbar_{t}\in[0,p]$. Substituting \eqref{eq:p-bar-t} into $\pbar_{t}$ in $n\alfq[t]=n\left(\alfexpq[t]\right)$, we obtain $n\alfq[t]=\mt-\lat$.
			Next, replacing $(\mt-\lat)$ by $n\alfq[t]$ in \eqref{eq:list-con-c} and dividing both sides by $n$, we obtain
			\begin{align}
			\label{eq:con-1-reduced}
			\alfq[t]\left( 1-H_q(\phatt)\right) -\elist\geq R.
			\end{align}
			
			Then, we substitute \eqref{eq:p-hat-t} into $\phatt$ in the \textit{left hand side} (LHS) of \eqref{eq:con-1-reduced} and we get
			\begin{align}
			\lefteqn{\alfq[t] \left( 1-H_q\left( \frac{\pbar_{t}}{\alfq[t]}+\epht\right) \right) -\elist}\nonumber\\
			& &>&\alfq[t] \left( 1-H_q\left( \frac{\pbar_{t}}{\alfq[t]}\right) -\frac{2}{\ln{q}}\sqrt{\epht}-\frac{\ln{(q-1)}}{\ln{q}}\epht\right) -\elist \label{eq:claim-1-left-larger-1}\\
			& &>&\alfq[t] \left( 1-H_q\left( \frac{\pbar_{t}}{\alfq[t]}\right) -\frac{2+\ln{(q-1)}}{\ln{q}}\sqrt{\epht}\right) -\elist\nonumber\\
			& &>&\alfq[t] \left( 1-H_q\left( \frac{\pbar_{t}}{\alfq[t]}\right) - \frac{\erate}{q\alfq[t]}\right) -\elist\label{eq:claim-1-left-larger-2}\\
			& &>& \alfq[t]\left( 1-H_q\left( \frac{\pbar_{t}}{\alfq[t]}\right) \right) -\erate \nonumber\\
			& &\geq & \capopq-\erate\nonumber\\
			& &=& C-\erate\nonumber\\
			& &=& R\nonumber
			\end{align}
			where \eqref{eq:claim-1-left-larger-1} follows from Lemma~\ref{lemma-1} and \eqref{eq:claim-1-left-larger-2} follows by $\frac{2+\ln{(q-1)}}{\ln{q}}<3$ for $q\geq 2$. 
			
			
			For $(\mt-\lat)\in
			\left[
			n\left(1-\frac{2q}{q-1}\ppp-\frac{q}{q-1}\pstar-\frac{\erate^2}{4}\right),
			n\left(1-\frac{2q}{q-1}\ppp-\frac{q}{q-1}\pstar\right)
			\right)$, we have $$\frac{\mt-\lat}{n} \geq 1-\frac{2q}{q-1}\ppp-\frac{q}{q-1}\pstar-\frac{\erate^2}{4}=\alpha_q(0)-\frac{\erate^2}{4}.$$ 
			
			Then
			\begin{align}
			\frac{\mt-\lat}{n}\left( 1-H_{q}(\phatt)\right)-\elist & \geq \left(\alpha_q(0)-\frac{\erate^2}{4}\right)\left( 1-H_{q}(\phatt)\right)-\elist\nonumber\\
			& = \left(\alpha_q(0)-\frac{\erate^2}{4}\right)\left( 1-H_{q}\left(\frac{\erate^2}{9q^{2}\alpha^2_q(0)}\right)\right)-\elist\nonumber\\
			& > \left(\alpha_q(0)-\frac{\erate^2}{4}\right)\left(1-\frac{\erate}{q\alpha_q(0)}\right)-\elist \label{eq:claim-1-small-t}\\
			& = \left(\alpha_q(0)-\frac{\erate^2}{4}\right)-\left(\alpha_q(0)-\frac{\erate^2}{4}\right)\frac{\erate}{q\alpha_q(0)}-\elist\nonumber\\
			& > \alpha_q(0)-\frac{\erate^2}{4} - \frac{3\erate}{4}\nonumber\\
			& > \capopq-\erate\nonumber\\
			& = R\nonumber
			\end{align}
			where \eqref{eq:claim-1-small-t} follows from Lemma~\ref{lemma-1}, $\frac{2+\ln{(q-1)}}{\ln{q}}<3$ for $q\geq 2$. 
			
			Thus far we have satisfied condition~\eqref{eq:list-con-c} in our claim. To see condition~\eqref{eq:energy-con-c}, we substitute \eqref{eq:p-hat-t} into $\phatt$ in the LHS of \eqref{eq:energy-con-c}, and note that for $(\mt-\lat) \in
			\left[ n\left(1-\frac{2q}{q-1}\ppp-\frac{q}{q-1}\pstar\right),
			n\left(1-\frac{q}{q-1}\pstar\right)
			\right]$, we have $\alfq[t]=(\mt-\lat)/n$, and therefore,
			\begin{align}
			n\ppp-(\mt-\lat)\left( \frac{\pbar_{t}}{\alfq[t]}+\epht\right) +\errdist 
			& = n\ppp-n\pbar_{t}-\frac{n^2\erate^2}{9q^{2}(\mt-\lat)}+\errdist
			\nonumber\\
			& < n\ppp-n\pbar_{t}
			\nonumber\\
			& = \frac{q-1}{2q}(n-\mt+\lat)-\frac{n\pstar}{2}
			\label{eq:con-2-less-1}\\
			& < \frac{q-1}{2q}(n-n\pstar-\mt+\lat)
			\label{eq:claim-1-energy-large}
			\end{align}
			where \eqref{eq:con-2-less-1} follows by substituting \eqref{eq:p-bar-t} into $\pbar_{t}$.
			
			For $(\mt-\lat) \in
			\left[
			n\left(1-\frac{2q}{q-1}\ppp-\frac{q}{q-1}\pstar-\frac{\erate^2}{4}\right),
			n\left(1-\frac{2q}{q-1}\ppp-\frac{q}{q-1}\pstar\right)
			\right)$, we have $\phatt=\frac{\erate^2}{9q^{2}\alpha^2_q(0)}$. 
			
			Let $f(\mt-\lat)=\frac{\pbar_{t}}{\alfq[t]}+\epht$ for $\mt-\lat \geq n\left(1-\frac{2q}{q-1}\ppp-\frac{q}{q-1}\pstar-\frac{\erate^2}{4}\right)$. As $f(\mt-\lat)$ is a monotonically increasing in $(\mt-\lat)$ for $(\mt-\lat) \in
			\left[
			n\left(1-\frac{2q}{q-1}\ppp-\frac{q}{q-1}\pstar-\frac{\erate^2}{4}\right),
			n\left(1-\frac{2q}{q-1}\ppp-\frac{q}{q-1}\pstar\right)
			\right)$, we have 
			$\frac{\pbar_{t}}{\alfq[t]}+\epht < \frac{\erate^2}{9q^{2}\alpha^2_q(0)}$. 
			Therefore,
			\begin{align}
			n\ppp-(\mt-\lat)\cdot\frac{\erate^2}{9q^{2}\alpha^2_q(0)} +\errdist & < n\ppp-(\mt-\lat)\left( \frac{\pbar_{t}}{\alfq[t]}+\epht\right) +\errdist\nonumber\\
			& < \frac{q-1}{2q}(n-n\pstar-\mt-\lat)
			\label{eq:claim-1-energy-small}
			\end{align}
			where \eqref{eq:claim-1-energy-small} follows by \eqref{eq:claim-1-energy-large}.
		\end{proof}
		
		%
		\subsection{Establishing the existence of correct decoding point}
		First we show that $\phatt$ must eventually be greater than $\ppp_{t}$.
		
		\begin{claim}
			If $\mt-\lat = n-\frac{q}{q-1}n\pstar-\chunk$, then $(\mt-\lat)\phatt\geq n\ppp$.
			\label{claim-above}
		\end{claim}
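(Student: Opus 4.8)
The plan is to plug the hypothesized value of $\mt-\lat$ into Definitions~\ref{def:p-bar-t} and~\ref{def:p-hat-t} and to check the inequality by a short direct computation. First I would rewrite the hypothesis $\mt-\lat=n-\frac{q}{q-1}n\pstar-\chunk$: dividing by $n$ and recalling that the chunk fraction is $\echunk=\echval$, it becomes $\frac{\mt-\lat}{n}=1-\frac{q}{q-1}\pstar-\echunk$. I would then verify that this value falls in the second (increasing) branch of \eqref{eq:p-hat-t}: it lies exactly $\chunk$ below the right endpoint $n\left(1-\frac{q}{q-1}\pstar\right)$, and it is at least the left endpoint $n\left(1-\frac{2q}{q-1}\ppp-\frac{q}{q-1}\pstar\right)=n\,\alpha_q(0)$ precisely when $\echunk\le\frac{2q}{q-1}\ppp$, which holds because $\ppp>0$ is a fixed constant while $\erate$ (hence $\echunk$) is taken arbitrarily small. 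Thus in this regime $\phatt=\frac{\pbar_{\mt}}{\alfq[\mt]}+\epht$.

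Next I would reuse the algebraic identity already established inside the proof of Claim~\ref{claim:list-energy}: substituting \eqref{eq:p-bar-t} into $\alfq[\mt]$ yields $n\,\alfq[\mt]=\mt-\lat$. Evaluating \eqref{eq:p-bar-t} at $\frac{\mt-\lat}{n}=1-\frac{q}{q-1}\pstar-\echunk$ also gives the closed form $\pbar_{\mt}=\ppp-\frac{q-1}{2q}\echunk$. Using $n\,\alfq[\mt]=\mt-\lat$ in both terms of $\phatt$ then produces
\begin{align*}
(\mt-\lat)\,\phatt
&=(\mt-\lat)\frac{\pbar_{\mt}}{\alfq[\mt]}+(\mt-\lat)\cdot\frac{\erate^{2}}{9q^{2}\alpha_q^{2}\left(\pbar_{\mt}\right)}\\
&= n\pbar_{\mt}+\frac{n^{2}\erate^{2}}{9q^{2}(\mt-\lat)}.
\end{align*}

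Finally I would compare the two summands with $n\ppp$. By the closed form above, the first equals $n\ppp-\frac{n(q-1)\erate^{2}}{18q^{3}}$; and since $\mt-\lat\le n$, the second is at least $\frac{n\erate^{2}}{9q^{2}}$, which is in turn at least $\frac{n(q-1)\erate^{2}}{18q^{3}}$ since $\frac{1}{9q^{2}}\ge\frac{q-1}{18q^{3}}$ for every $q\ge 1$. Hence $(\mt-\lat)\,\phatt\ge n\ppp$, in fact with slack $\frac{n(q+1)\erate^{2}}{18q^{3}}$. I do not anticipate a genuine obstacle: once the definitions are unwound this is essentially a one-line estimate. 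The only point that needs a moment's care is the check in the first step that the prescribed $\mt-\lat$ lands in the increasing branch of $\phatt$ rather than the flat branch $\erate^{2}/(9q^{2}\alpha_q^{2}(0))$, which is where we use that $\echunk$ is small compared with the fixed constant $\ppp$.
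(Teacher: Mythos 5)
Your proposal is correct and follows essentially the same route as the paper's proof: substitute the prescribed $\mt-\lat$ into the second branch of \eqref{eq:p-hat-t}, use $n\alfq[\mt]=\mt-\lat$ to get $(\mt-\lat)\phatt=n\pbar_{\mt}+\frac{n^{2}\erate^{2}}{9q^{2}(\mt-\lat)}$, evaluate $n\pbar_{\mt}=np-\frac{(q-1)\chunk}{2q}$, and observe the correction term $\frac{n\erate^{2}}{9q^{2}}$ dominates the deficit. The only addition is your explicit check that this value of $\mt-\lat$ lies in the increasing branch (needing $\echunk\le\frac{2q}{q-1}\ppp$), which the paper asserts implicitly; that is a harmless and slightly more careful touch.
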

		
		\begin{proof}
			Since $(\mt-\lat) \in \left[
			n\left(1-\frac{2q}{q-1}\ppp-\frac{q}{q-1}\pstar\right),
			n\left(1-\frac{q}{q-1}\pstar\right)
			\right]$
			then $$\phatt=\frac{\pbar_{\mt}}{\alfq[\mt]}+\epht[\mt].$$
			Hence,
			\begin{align}
			(\mt-\lat)\phatt 
			& = n\pbar_{\mt}+\frac{n^2\erate^2}{9q^{2}\left(\mt-\lat\right)}
			\label{eq:claim-above} \\
			& >
			n\pbar_{\mt}+\frac{n\erate^2}{9q^{2}}
			\nonumber\\
			& = np-\frac{(q-1)\chunk}{2q} + \frac{n\erate^2}{9q^{2}}
			\label{eq:claim-above-2} \\
			& > np -\frac{\chunk}{2} + \frac{n\erate^2}{9q^{2}}
			\nonumber \\
			& > np
			\nonumber
			\end{align}
			where \eqref{eq:claim-above} follows by $\alfq[\mt]=\left(\mt-\lat\right)/n$ and \eqref{eq:claim-above-2} follows by substituting the expression of $\pbar_{\mt}$.
		\end{proof}
		
		\begin{claim}
			\label{claim-ingap}
			For any $t\in\setchend$ and $(\mt-\lat) \in
			\left[ n\left(1-\frac{2q}{q-1}\ppp-\frac{q}{q-1}\pstar-\frac{\erate^2}{4}\right)+\chunk,
			n\left(1-\frac{q}{q-1}\pstar\right)
			\right]$, if $\ppp_{\mt-\chunk}>\phat_{\mt-\chunk}$, 
			then $\ppp_{t}>\tilde{p}_{t}$.
		\end{claim}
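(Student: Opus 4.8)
The plan is to exploit the fact that over a single chunk the number of symbol errors cannot decrease while the number of unerased positions can grow by at most $\chunk$. Write $e_{\mt}$ for the number of symbol errors among the first $\mt$ transmitted symbols and $u_{\mt}=\mt-\lat$ for the number of unerased positions among them, so that $\ppp_{\mt}=e_{\mt}/u_{\mt}$. Then $e_{\mt}\geq e_{\mt-\chunk}$ and $0\leq u_{\mt}-u_{\mt-\chunk}\leq\chunk$, and therefore
\begin{align}
\ppp_{\mt}\ =\ \frac{e_{\mt}}{u_{\mt}}\ \geq\ \frac{e_{\mt-\chunk}}{u_{\mt}}\ =\ \frac{u_{\mt-\chunk}}{u_{\mt}}\,\ppp_{\mt-\chunk}\ >\ \frac{u_{\mt-\chunk}}{u_{\mt}}\,\phat_{\mt-\chunk},
\end{align}
where the last inequality is the hypothesis $\ppp_{\mt-\chunk}>\phat_{\mt-\chunk}$. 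Hence the claim reduces to the purely analytic inequality $\tfrac{u_{\mt-\chunk}}{u_{\mt}}\phat_{\mt-\chunk}\geq\tilde{p}_{\mt}$, which I would verify by substituting the explicit formulas of Definitions~\ref{def:p-hat-t} and~\ref{def:p-tilde-t}. The hypothesis $(\mt-\lat)\geq n\bigl(1-\tfrac{2q}{q-1}\ppp-\tfrac{q}{q-1}\pstar-\tfrac{\erate^2}{4}\bigr)+\chunk$ guarantees $u_{\mt-\chunk}\geq n\bigl(1-\tfrac{2q}{q-1}\ppp-\tfrac{q}{q-1}\pstar-\tfrac{\erate^2}{4}\bigr)$, so $\phat_{\mt-\chunk}$ and $\tilde{p}_{\mt}$ are both evaluated inside their defined ranges; and throughout I would use the identity $\alfq[\mt]=u_{\mt}/n$ already established in the proof of Claim~\ref{claim:list-energy}.

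The main case is $u_{\mt-\chunk}\geq n\bigl(1-\tfrac{2q}{q-1}\ppp-\tfrac{q}{q-1}\pstar\bigr)$, so $u_{\mt}$ lies in the same range and both $\tfrac{u_{\mt-\chunk}}{u_{\mt}}\phat_{\mt-\chunk}$ and $\tilde{p}_{\mt}$ split into a ``drift'' term $\tfrac{n\pbar}{u}$ plus a ``noise'' term. The drift term of $\tfrac{u_{\mt-\chunk}}{u_{\mt}}\phat_{\mt-\chunk}$ is $\tfrac{n\pbar_{\mt-\chunk}}{u_{\mt}}$ and that of $\tilde{p}_{\mt}$ is $\tfrac{n\pbar_{\mt}}{u_{\mt}}$; since $\pbar_{\mt}$ is affine in $u_{\mt}$ with slope $\tfrac{q-1}{2qn}$, the former is smaller than the latter by at most $\tfrac{(q-1)\chunk}{2q\,u_{\mt}}$. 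The noise term of $\tfrac{u_{\mt-\chunk}}{u_{\mt}}\phat_{\mt-\chunk}$ is $\tfrac{\erate^2 n^2}{9q^2 u_{\mt-\chunk}u_{\mt}}$ and that of $\tilde{p}_{\mt}$ is $\tfrac{(n-\mt)\erate^2}{9q^2 u_{\mt}}$, so their difference is $\tfrac{\erate^2}{9q^2 u_{\mt}}\bigl(\tfrac{n^2}{u_{\mt-\chunk}}-(n-\mt)\bigr)$; using $u_{\mt-\chunk}\leq\mt\leq n$ together with the elementary bound $\tfrac{n^2}{x}-(n-\mt)\geq\tfrac{n^2-n\mt+\mt^2}{\mt}\geq\tfrac34 n$, and recalling $\chunk=\tfrac{n\erate^2}{9q^2}$, this difference is at least $\tfrac{3\chunk}{4\,u_{\mt}}$. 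Adding the two contributions gives $\tfrac{u_{\mt-\chunk}}{u_{\mt}}\phat_{\mt-\chunk}-\tilde{p}_{\mt}\geq\tfrac{\chunk}{u_{\mt}}\bigl(\tfrac34-\tfrac{q-1}{2q}\bigr)=\tfrac{(q+2)\chunk}{4q\,u_{\mt}}>0$.

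The remaining (boundary) case is $u_{\mt-\chunk}<na$ where $a:=\alpha_q(0)=1-\tfrac{2q}{q-1}\ppp-\tfrac{q}{q-1}\pstar$, so $\phat_{\mt-\chunk}=\tfrac{\erate^2}{9q^2a^2}$ and $\pbar_{\mt-\chunk}<0$. I would use three facts: $a\geq\erate$ (which follows from $2\ppp+\pstar+\erate\leq\tfrac{q-1}{q}$); $n\pbar_{\mt}=\tfrac{q-1}{2q}(u_{\mt}-na)$, which is at most $\tfrac{q-1}{2q}\chunk$ and is non-positive when $u_{\mt}<na$; and $u_{\mt-\chunk}\geq na-\tfrac{n\erate^2}{4}$. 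Substituting these into $\tfrac{u_{\mt-\chunk}}{u_{\mt}}\phat_{\mt-\chunk}\geq\tilde{p}_{\mt}$ and clearing denominators reduces it, in both sub-cases $u_{\mt}\geq na$ and $u_{\mt}<na$, to an inequality of the form $\tfrac1a-O(\erate^2)\geq 1-a+O(\erate^2)$ in which each $O(\erate^2)$ term is bounded by $\tfrac14$ because $\tfrac{\erate^2}{a^2}\leq1$; this then follows from the AM--GM inequality $\tfrac1a+a\geq2>\tfrac32\geq\tfrac{q-1}{2q}+1$.

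The obstacle I expect to be the hardest is the constant-chasing in the last two paragraphs: one must keep the factor $\tfrac{n}{u_{\mt}}=\tfrac1{\alfq[\mt]}$ rather than bounding it crudely by $1$, so that the slack deliberately built into $\tilde{p}_{\mt}$ relative to $\phat_{\mt}$ — namely the summand $\epht[\mt]$ in $\phat_{\mt}$ and the smallness of $\echunk=\tfrac{\erate^2}{9q^2}$ — genuinely dominates both the drift of $\pbar$ over one chunk and the multiplicative loss incurred in rescaling $u_{\mt-\chunk}$ to $u_{\mt}$. Everything else is an immediate consequence of the monotonicity of $e_{\mt}$ and $u_{\mt}$ and the formulas for $\phat$ and $\tilde{p}$.
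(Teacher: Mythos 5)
Your proposal is correct and takes essentially the same route as the paper's own proof: the identical opening step (the error count is monotone over a chunk, giving $p_{t}\geq\frac{t-n\theta-\lambda_{t-n\theta}}{t-\lambda_{t}}\,p_{t-n\theta}>\frac{t-n\theta-\lambda_{t-n\theta}}{t-\lambda_{t}}\,\hat{p}_{t-n\theta}$), followed by the same drift/noise decomposition through the identity $n\alpha_{q}(\bar{p}_{t})=t-\lambda_{t}$, the same one-chunk bound on the drift of $\bar{p}_{t}$, and the same case split according to whether $\hat{p}_{t-n\theta}$ sits in the constant or the formula regime. The only differences are organizational: the paper compares $p_{t}$ and $\tilde{p}_{t}$ separately against $\hat{p}_{t}$ (showing $\hat{p}_{t}-p_{t}<\frac{n\theta}{t-\lambda_{t}}<\hat{p}_{t}-\tilde{p}_{t}$) and treats the boundary regime via monotonicity of $\frac{\bar{p}_{t}}{\alpha_{q}(\bar{p}_{t})}+\frac{\epsilon^{2}}{9q^{2}\alpha_{q}^{2}(\bar{p}_{t})}$, whereas you compare $\frac{t-n\theta-\lambda_{t-n\theta}}{t-\lambda_{t}}\hat{p}_{t-n\theta}$ with $\tilde{p}_{t}$ directly and close the boundary case with $\alpha_{q}(0)\geq\epsilon$ and AM--GM; both versions of the computation check out.
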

		
		\begin{proof}
			For $(\mt-\lat) \in
			\left[ n\left(1-\frac{2q}{q-1}\ppp-\frac{q}{q-1}\pstar\right)+\chunk,
			n\left(1-\frac{q}{q-1}\pstar\right)
			\right]$, we have
			\begin{align}
			\phatt - \ppp_{t} & \leq \phatt - \frac{(\mt-\chunk-\lat[\mt-\chunk])\ppp_{t-\chunk}}{\mt-\lat}
			\nonumber \\
			& < \phatt - \frac{(\mt-\chunk-\lat[\mt-\chunk])\phat_{t-\chunk}}{\mt-\lat}
			\label{eq:ingap-g1} \\
			& = \left(\frac{\pbar_{t}}{\alfq[t]}+\epht\right)
			-
			\frac{\mt-\chunk-\lat[\mt-\chunk]}{\mt-\lat} \left(\frac{\pbar_{t-\chunk}}{\alfq[t-\chunk]} + \frac{\erate^2}{9q^{2}\alpha^2_q(\pbar_{t-\chunk})}\right)
			\label{eq:ingap-phatt} \\
			& = \frac{n}{\mt-\lat}\left(\pbar_{t}-\pbar_{t-\chunk}\right)
			+
			\frac{n^2\erate^2}{9q^{2}}\left(\frac{1}{(\mt-\lat)^2}-\frac{1}{(\mt-\chunk-\lat[\mt-\chunk])(\mt-\lat)}\right)
			\label{eq:ingap-alpha} \\
			& < \frac{n}{\mt-\lat}\left(\pbar_{t}-\pbar_{t-\chunk}\right)
			\nonumber \\
			& = \frac{n}{\mt-\lat}\cdot\frac{q-1}{2q}\echunk
			\label{eq:ingap-pbar} \\
			& < \frac{\chunk}{2(\mt-\lat)}
			\nonumber
			\end{align}
			where \eqref{eq:ingap-g1} follows by using the fact that $\ppp_{n-\chunk}>\phat_{n-\chunk}$, \eqref{eq:ingap-phatt} following by substituting the expression of $\phatt$, \eqref{eq:ingap-alpha} follows by $\alfq[t]=(\mt-\lat)/n$, and \eqref{eq:ingap-pbar} follows by substituting the expression of $\pbar_{t}$.
			
			On the other hand, since $\ptil_{t}=\frac{\pbar_{t}}{\alfq[t]}
			+
			\frac{(n-t)\erate^2}{9q^{2}(\mt-\lat)}
			=
			\phatt
			-
			\frac{n^2\erate^2-(n-t)(\mt-\lat)\erate^2}{9q^{2}(\mt-\lat)^2}$, then
			\begin{align}
			\phatt  - \ptil_{t} & =  \frac{n^2\erate^2-(n-t)(\mt-\lat)\erate^2}{9q^{2}(\mt-\lat)^2}
			\nonumber \\
			& = \frac{n^2\erate^2}{9q^{2}(\mt-\lat)^2}
			-
			\frac{(2n-t)\erate^2}{9q^{2}(\mt-\lat)}
			+
			\frac{n\erate^2}{9q^{2}(\mt-\lat)}
			\nonumber \\
			& > \frac{n\erate^2}{9q^{2}(\mt-\lat)}
			\label{eq:ingap-n-t} \\
			& \geq \frac{\chunk}{\mt-\lat}
			\nonumber \\
			& > \phatt - \ppp_{t}
			\nonumber
			\end{align}
			where \eqref{eq:ingap-n-t} follows by $n^2>t(2n-t)$.
			Since $\phatt - \ptil_{t} > \phatt - \ppp_{t}$, it follows that $\ppp_{t} > \ptil_{t}$.
			
			To show $\ppp_{t}>\ptil_{t}$ for $(\mt-\lat) \in
			\left[
			n\left(1-\frac{2q}{q-1}\ppp-\frac{q}{q-1}\pstar-\frac{\erate^2}{4}\right)+\chunk,
			n\left(1-\frac{2q}{q-1}\ppp-\frac{q}{q-1}\pstar\right)+\chunk
			\right)
			$, we
			let $f(\mt-\lat)=\frac{\pbar_{t}}{\alfq[t]}+\epht$ for $t\geq n\left(1-\frac{2q}{q-1}\ppp-\frac{q}{q-1}\pstar-\frac{\erate^2}{4}\right)$. As $f(\mt-\lat)$ is monotonically increasing for $(\mt-\lat) \in
			\left[
			n\left(1-\frac{2q}{q-1}\ppp-\frac{q}{q-1}\pstar-\frac{\erate^2}{4}\right),
			n\left(1-\frac{2q}{q-1}\ppp-\frac{q}{q-1}\pstar\right)+\chunk
			\right)
			$, we have $\phatt \geq f(\mt-\lat)$.
			Therefore, 
			\begin{align*}
			\phatt-\ptil_{t} & \geq f(\mt-\lat) - \ptil_{t}
			\\ 
			& = \frac{n^2\erate^2-(n-t)(\mt-\lat)\erate^2}{9q^{2}(\mt-\lat)^2}
			\\
			& > \frac{\chunk}{\mt-\lat}
			\end{align*} for $(\mt-\lat) \in
			\left[
			n\left(1-\frac{2q}{q-1}\ppp-\frac{q}{q-1}\pstar-\frac{\erate^2}{4}\right)+\chunk,
			n\left(1-\frac{2q}{q-1}\ppp-\frac{q}{q-1}\pstar\right)+\chunk
			\right)
			$.
			
			Next, we consider the difference between $\phatt$ and $\ppp_{t}$.
			
			If $(\mt-\lat) \in
			\left[
			n\left(1-\frac{2q}{q-1}\ppp-\frac{q}{q-1}\pstar-\frac{\erate^2}{4}\right)+\chunk,
			n\left(1-\frac{2q}{q-1}\ppp-\frac{q}{q-1}\pstar\right)
			\right)
			$, then $\phatt = \frac{\erate^2}{9q^{2}\alpha^2_q(0)}$, and thus,
			\begin{align*}
			\phatt - \ppp_{t} & < \phatt - \frac{(\mt-\chunk-\lat[\mt-\chunk])\phat_{t-\chunk}}{\mt-\lat}
			\\
			& = \phatt - \frac{(\mt-\chunk-\lat[\mt-\chunk])\phatt}{\mt-\lat}
			\\
			& < \frac{\chunk\phatt}{\mt-\lat}.
			\end{align*}
			If $(\mt-\lat) \in
			\left[
			n\left(1-\frac{2q}{q-1}\ppp-\frac{q}{q-1}\pstar\right),
			n\left(1-\frac{2q}{q-1}\ppp-\frac{q}{q-1}\pstar\right)+\chunk
			\right)
			$, then
			\begin{align*}
			\phatt - \ppp_{t} & < \phatt - \frac{(\mt-\chunk-\lat[\mt-\chunk])\phat_{t-\chunk}}{\mt-\lat}
			\\
			& \leq \phatt - \frac{(\mt-\chunk-\lat[\mt-\chunk])f(\mt-\lat)}{\mt-\lat}
			\\
			& = \left(\frac{\pbar_{t}}{\alfq[t]}+\epht\right)
			-
			\frac{\mt-\chunk-\lat[\mt-\chunk]}{\mt-\lat} \left(\frac{\pbar_{t-\chunk}}{\alfq[t-\chunk]} + \frac{\erate^2}{9q^{2}\alpha^2_q(\pbar_{t-\chunk})}\right)
			\\
			& < \frac{\chunk}{2(\mt-\lat)}.
			\end{align*}
			
			Hence, for any $(\mt-\lat)\in
			\left[
			n\left(1-\frac{2q}{q-1}\ppp-\frac{q}{q-1}\pstar-\frac{\erate^2}{4}\right)+\chunk,
			n\left(1-\frac{2q}{q-1}\ppp-\frac{q}{q-1}\pstar\right)+\chunk
			\right)
			$, we have $\phatt - \ppp_{t} < \frac{\chunk}{\mt-\lat} < \phatt - \ptil_{t}$, and it follows that $\ppp_{t} > \ptil_{t}$.
		\end{proof}
		
		\begin{claim}
			\label{claim:p-tilde-t}
			Let $\suff{p}$ be the portion of symbol errors in the codeword $\mathbf{x}$ with respect to the unerased positions between position $\leftcw+1$ and $n$ for $\zx{\mt-\lat} \in 
			\left[
			n\left(1-\frac{2q}{q-1}\ppp-\frac{q}{q-1}\pstar-\frac{\erate^2}{4}\right),n\left(1-\frac{q}{q-1}\pstar\right)
			\right]$. If $p_t>\tilde{p}_t$, then $\suff{p}<\zx{\frac{q-1}{2q}-\frac{\erate^2}{9q^{2}}-\frac{n\pstar}{2q(n-\mt-n\pstar+\lat)}}$.
		\end{claim}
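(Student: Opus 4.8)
The plan is to track Calvin's error budget. Let $e$ denote the number of symbol errors that Calvin places on the suffix positions $t+1,\dots,n$. Errors can only be placed on unerased positions and Calvin's total error budget is $np$; since the codeword prefix already carries $(\mt-\lat)p_t$ errors (there being $\mt-\lat$ unerased positions up to $t$, by Definition~\ref{def:p-t-app}), it follows that $e\le np-(\mt-\lat)p_t$. I would then convert the hypothesis $p_t>\tilde{p}_t$ into an upper bound on $np-(\mt-\lat)p_t$. Exactly as in the proof of Claim~\ref{claim:list-energy}, substituting Definition~\ref{def:p-bar-t} into the definition of $\alfq[t]$ gives the algebraic identity $n\,\alfq[t]=\mt-\lat$ on the entire range in question (this needs no sign condition on $\pbar_t$). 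Plugging this into Definition~\ref{def:p-tilde-t} yields $(\mt-\lat)\tilde{p}_t=n\pbar_t+\frac{(n-t)\erate^2}{9q^2}$, and since $\mt-\lat>0$ throughout the stated range, $p_t>\tilde{p}_t$ gives $(\mt-\lat)p_t>n\pbar_t+\frac{(n-t)\erate^2}{9q^2}$, hence $e<np-n\pbar_t-\frac{(n-t)\erate^2}{9q^2}$. Substituting $\pbar_t=p+\frac{\pstar}{2}-\frac{q-1}{2q}\bigl(1-\frac{\mt-\lat}{n}\bigr)$ and simplifying, $np-n\pbar_t=\frac{q-1}{2q}(n-\mt+\lat)-\frac{n\pstar}{2}$, so $e<A:=\frac{q-1}{2q}(n-\mt+\lat)-\frac{n\pstar}{2}-\frac{(n-t)\erate^2}{9q^2}$.

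The second step is to pass from the count $e$ to the fraction $\suff{p}$. The suffix has $n-t$ positions, of which at most $n\pstar-\lat$ are erased, so at least $N:=n-\mt-n\pstar+\lat$ of them are unerased; on the range under consideration $N>0$ (for instance at its right endpoint $N=\frac{n\pstar}{q-1}$ when $\pstar>0$, while $N=n-t\ge\chunk$ when $\pstar=0$, using $t\in\setchend$). Since $e\ge0$ and the denominator defining $\suff{p}$ is at least $N$, we get $\suff{p}\le e/N<A/N$. It then suffices to verify $A/N\le\frac{q-1}{2q}-\frac{\erate^2}{9q^2}-\frac{n\pstar}{2qN}$; multiplying through by $N>0$, using $N=(n-\mt+\lat)-n\pstar$ and the identity $\frac{q-1}{2q}\,n\pstar+\frac{n\pstar}{2q}=\frac{n\pstar}{2}$, this inequality collapses to $\frac{\erate^2}{9q^2}\bigl((n-t)-N\bigr)=\frac{\erate^2}{9q^2}(n\pstar-\lat)\ge0$, which holds because $\lat\le n\pstar$. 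Chaining $\suff{p}\le e/N<A/N\le\frac{q-1}{2q}-\frac{\erate^2}{9q^2}-\frac{n\pstar}{2qN}$ gives the claim, with the strict inequality coming from the strict hypothesis $p_t>\tilde{p}_t$.

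The only genuinely delicate point is the erasure bookkeeping: one must charge errors only against unerased positions, lower-bound the number of unerased suffix positions by exactly the quantity $N$ that appears in the target bound (so that the worst case for $\suff{p}$ corresponds to Calvin spending all his remaining erasures in the suffix), and be sure the identity $n\,\alfq[t]=\mt-\lat$ — hence $(\mt-\lat)\tilde{p}_t=n\pbar_t+\frac{(n-t)\erate^2}{9q^2}$ — is applied correctly across both pieces of the range, including where $\pbar_t<0$. Everything else is substitution and the one-line algebraic simplification above.
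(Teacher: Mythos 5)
Your proposal is correct and follows essentially the same route as the paper's proof: bound the remaining suffix errors by $np-(\mt-\lat)p_{\mt}$, use $p_{\mt}>\tilde{p}_{\mt}$ together with the identity $n\alpha_q(\pbar_{\mt})=\mt-\lat$ to substitute Definition~\ref{def:p-tilde-t} and then Definition~\ref{def:p-bar-t}, and finish with the same algebraic comparison that reduces to $n\pstar-\lat\geq 0$. Your version merely makes explicit the worst-case erasure bookkeeping (lower-bounding the unerased suffix positions by $n-\mt-n\pstar+\lat$) that the paper treats as an equality.
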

		
		\begin{proof}
			By the definition of $\ppp_{h}$, we have $\ppp_{h}=\frac{n\ppp-(\mt-\lat)\ppp_{t}}{n-n\pstar-\mt+\lat}$. Since $p_t>\tilde{p}_t$, then
			\begin{align}
			\ppp_{h} & < \frac{n\ppp-(\mt-\lat)\ptil_{t}}{n-n\pstar-\mt+\lat}
			\nonumber \\
			& = \frac{1}{n-n\pstar-\mt+\lat} \left(np-n\pbar_{t}-\frac{(n-t)\erate^2}{9q^{2}}\right)
			\label{eq:claim-bound-ptil} \\
			& \zx{= \frac{1}{n-n\pstar-\mt+\lat}\left(\frac{q-1}{2q}(n-\mt+\lat)-\frac{n\pstar}{2}-\frac{(n-t)\erate^2}{9q^{2}}\right)}
			\label{eq:claim-bound-pbar} \\
			& \zx{< \frac{q-1}{2q}-\frac{\erate^2}{9q^{2}}-\frac{n\pstar}{2q(n-\mt-n\pstar+\lat)}}
			\nonumber
			\end{align}
			where \eqref{eq:claim-bound-ptil} follows by \eqref{eq:p-tilde-t} and $\alfq[t]=(\mt-\lat)/n$ and \eqref{eq:claim-bound-pbar} follows by \eqref{eq:p-bar-t}.
		\end{proof}

		\begin{claim}
			\label{claim-zero-starting}
			Let $k_0=\left\lceil
			\frac{1-2pq/(q-1)-\pstar q/(q-1)-\erate^2/4}{\echunk}+\frac{\lat[\mt_{0}]}{\chunk}
			\right\rceil$ and $t_0=k_0\chunk$. Then for any
			$p_{t_0}\in\left[ 0,\zx{\phat_{\mt_{0}}}\right]  $ \zx{where $\phat_{\mt_{0}}=\frac{\erate^2}{9q^{2}\alpha^2_q(0)}$}, 
			we have 
			\begin{align}
			n\ppp-(\mt_{0}-\lat[\mt_{0}]) \ppp_{t_0}+\frac{(n-t_0)\erate^2}{9q^{2}} \leq \frac{q-1}{2q}(n-n\pstar-\mt_{0}+\lat[\mt_{0}])
			\nonumber
			\end{align}
		\end{claim}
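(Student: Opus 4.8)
The plan is to exploit the fact that the left‑hand side of the claimed inequality is non‑increasing in $\ppp_{t_0}$: the coefficient of $\ppp_{t_0}$ there is $-(\mt_0-\lat[\mt_0])\le 0$, so the left‑hand side is maximized at $\ppp_{t_0}=0$. (In particular I will only ever use $\ppp_{t_0}\ge 0$, and never the hypothesis $\ppp_{t_0}\le\phat_{\mt_0}$.) Together with $n-t_0\le n$, this reduces the goal to the single inequality
\begin{align*}
n\ppp+\frac{n\erate^2}{9q^{2}}\;\le\;\frac{q-1}{2q}\bigl(n-n\pstar-\mt_0+\lat[\mt_0]\bigr),
\end{align*}
so the whole proof comes down to a lower bound on $n-n\pstar-\mt_0+\lat[\mt_0]$, which is exactly what the ceiling in the definition of $k_0$ is designed to give.

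The next step is to unwind that ceiling. Since $t_0=k_0\chunk$, where $k_0$ is the quantity $\frac{1-2\ppp q/(q-1)-\pstar q/(q-1)-\erate^2/4}{\echunk}+\frac{\lat[\mt_{0}]}{\chunk}$ rounded up to the nearest integer, and $\chunk=n\echunk$, we obtain $\mt_0-\lat[\mt_0]<n\bigl(1-\tfrac{2q}{q-1}\ppp-\tfrac{q}{q-1}\pstar-\tfrac{\erate^2}{4}\bigr)+\chunk$. Subtracting this from $n-n\pstar$ and using $\tfrac{q}{q-1}-1=\tfrac{1}{q-1}$ to combine the $\pstar$ terms yields
\begin{align*}
n-n\pstar-\mt_0+\lat[\mt_0]\;>\;n\Bigl(\tfrac{2q}{q-1}\ppp+\tfrac{1}{q-1}\pstar+\tfrac{\erate^2}{4}\Bigr)-\chunk .
\end{align*}
Multiplying by $\tfrac{q-1}{2q}$ and throwing away the nonnegative term $\tfrac{n\pstar}{2q}$, the right‑hand side of the claim is at least $n\ppp+\tfrac{(q-1)n\erate^2}{8q}-\tfrac{(q-1)\chunk}{2q}$.

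It then remains to verify the purely numerical inequality $\tfrac{n\erate^2}{9q^{2}}\le\tfrac{(q-1)n\erate^2}{8q}-\tfrac{(q-1)\chunk}{2q}$. Plugging in the design value $\chunk=\tfrac{n\erate^2}{9q^{2}}$ and dividing through by $n\erate^2/q$, this is equivalent to $\tfrac{1}{9q}\le\tfrac{q-1}{8}-\tfrac{q-1}{18q^2}$, which I would confirm by evaluating at $q=2$ (right‑hand side $=\tfrac19$, left‑hand side $=\tfrac{1}{18}$) and observing that the right‑hand side is increasing while the left‑hand side is decreasing for $q\ge 2$. The only point that requires genuine care is the slack term $\tfrac{(q-1)\chunk}{2q}$ produced by rounding $k_0$ up to an integer: the argument closes only because the chunk length $\chunk=n\erate^2/(9q^2)$ is small enough to be absorbed into the $\tfrac{\erate^2}{4}$ cushion that was built into the benchmarking position $t_0$. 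Everything else is routine algebra, so I expect no real obstacle beyond careful bookkeeping of that term.
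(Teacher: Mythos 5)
Your proposal is correct and follows essentially the same route as the paper's proof: unwind the ceiling in $k_0$ to get $\mt_0-\lat[\mt_0]<n\bigl(1-\tfrac{2q}{q-1}\ppp-\tfrac{q}{q-1}\pstar-\tfrac{\erate^2}{4}\bigr)+\chunk$, drop the nonnegative $\pstar$ and $\ppp_{t_0}$ contributions, and absorb the chunk-rounding slack into the $\erate^2/4$ cushion via the same numerical inequality $\tfrac{q-1}{8}-\tfrac{q-1}{18q^2}\ge\tfrac{1}{9q}$ for $q\ge 2$. The only difference is cosmetic bookkeeping (you discard the $\ppp_{t_0}$ term at the start rather than at the end), so there is nothing further to add.
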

		
		\begin{proof}
			Since $t_{0}=k_{0}\chunk < n\left(1-2pq/(q-1)-\pstar q/(q-1)-\erate^2/4+\echunk\right)+\lat[\mt_{0}]$, then
			\begin{align}
			\frac{q-1}{2q}(n-n\pstar-\mt_{0}+\lat[\mt_{0}]) & > \frac{q-1}{2q}\left(n
			\left(\frac{2pq}{q-1}+\frac{\pstar q}{q-1}+\frac{\erate^2}{4}-\echunk-\pstar
			\right)
			\right)
			\nonumber \\
			& > \frac{q-1}{2q}\left(n
			\left(\frac{2pq}{q-1}+\frac{\erate^2}{4}-\echunk
			\right)
			\right)
			\nonumber \\
			& > np+\frac{n\erate^2}{9q^{2}}
			\nonumber \\
			& > n\ppp-\zx{(\mt_{0}-\lat[\mt_{0}])} \ppp_{t_0}+\frac{(n-t_0)\erate^2}{9q^{2}}.
			\nonumber
			\end{align}
		\end{proof}
		
		\subsection{List decoding properties}
		\begin{claim}
			\label{thm:b-list-decodable}
			Let $\dl>0$ and $S=\zx{\echunk^3/q^2}$.
			Let $(\mt-\lat) \in 
			\left[n \left(1-\frac{2q}{q-1}\ppp-\frac{q}{q-1}\pstar-\frac{\erate^2}{4}\right),
			n\left(1-\frac{q}{q-1}\pstar\right)
			\right]$ and $t=k\chunk\in\setchend$.
			If $(\mt-\lat)\left( 1-H_{q}(\phatt)\right)-\frac{n\erate}{4} \geq nR$, then with probability at least $1-q^{-\dl}$ over code design, the code $\megacode$ is list-decodable for $(\mt-\lat)\phatt$ symbol errors with list size 
			\begin{align*}
			L = \frac{\mt-\lat+\dl}{(\mt-\lat)\left(1-H_q\left(\phatt\right)\right) - nR - n\echunk^2\zx{/q^2}}.
			\end{align*}
		\end{claim}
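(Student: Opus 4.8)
The plan is the standard random-coding / union-bound argument: I would show that a uniformly random $q$-ary code is list-decodable with the claimed radius and list size, by union-bounding over candidate received prefixes and over candidate ``bad'' sublists of $L+1$ messages. The only nonstandard ingredient is keeping track of the stochastic encoder, in which each message carries many prefix codewords indexed by the secret tuple.

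First I would fix the set of $t-\lambda_t$ unerased coordinates among the first $t$ positions. From the ensemble, each $\mathcal{C}_i(m,s)$ is an independent uniform element of $\{0,\ldots,q-1\}^{n\theta}$, so for every message $m\in\msg$ and every secret tuple $\vec{s}=(s_1,\ldots,s_k)\in\secr^{k}$ the prefix codeword $\mathcal{C}_1(m,s_1)\circ\cdots\circ\mathcal{C}_k(m,s_k)$ is uniform on $\{0,\ldots,q-1\}^{t}$, and prefix codewords for distinct pairs $(m,\vec{s})$ are mutually independent. There are $q^{nR}$ messages, $|\secr|^{k}=q^{nSk}$ secret tuples per message, and $nSk=St/\theta=\theta^{2}t/q^{2}\le n\theta^{2}/q^{2}$ since $S=\theta^{3}/q^{2}$, $t=k n\theta$, and $t\le n$. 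I would also record that, in the range of $(t-\lambda_t)$ considered, $\hat{p}_t\le 1-1/q$ (from Definition~\ref{def:p-hat-t}, using $p+\pstar<(q-1)/q$), so the $q$-ary volume bound $|B_q(N,\rho N)|\le q^{NH_q(\rho)}$ is applicable to balls of radius $(t-\lambda_t)\hat{p}_t$.

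Next I would estimate, for a fixed hypothetical received prefix $\mathbf{y}\in\{0,\ldots,q-1\}^{t-\lambda_t}$ on the unerased coordinates, the probability that strictly more than $L$ messages enter the list. For a fixed pair $(m,\vec{s})$ the restriction of its prefix codeword to the $t-\lambda_t$ unerased coordinates is uniform, so it lies within Hamming distance $(t-\lambda_t)\hat{p}_t$ of $\mathbf{y}$ with probability at most $q^{(t-\lambda_t)H_q(\hat{p}_t)}/q^{t-\lambda_t}=q^{-(t-\lambda_t)(1-H_q(\hat{p}_t))}$; summing over the $q^{nSk}$ secret tuples, a fixed message $m$ has \emph{some} witness with probability at most $q^{nSk}\,q^{-(t-\lambda_t)(1-H_q(\hat{p}_t))}$, and these events are independent across distinct messages. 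Union-bounding over the $\binom{q^{nR}}{L+1}$ choices of $L+1$ distinct messages then gives
\[
\Pr\big[\text{more than }L\text{ messages in the list for }\mathbf{y}\big]\;\le\;\binom{q^{nR}}{L+1}\big(q^{\,nSk-(t-\lambda_t)(1-H_q(\hat{p}_t))}\big)^{L+1}\;\le\;q^{\,(L+1)\left(nR+n\theta^{2}/q^{2}-(t-\lambda_t)(1-H_q(\hat{p}_t))\right)},
\]
using $nSk\le n\theta^{2}/q^{2}$. Finally I would sum over the $q^{t-\lambda_t}$ centers $\mathbf{y}$. Setting $D:=(t-\lambda_t)(1-H_q(\hat{p}_t))-nR-n\theta^{2}/q^{2}$, the hypothesis $(t-\lambda_t)(1-H_q(\hat{p}_t))-n\epsilon/4\ge nR$ together with $\theta=\epsilon^{2}/(9q^{2})$ yields $D\ge n\epsilon/4-n\theta^{2}/q^{2}>0$, so $L=(t-\lambda_t+\Delta)/D$ is positive and well-defined with $LD=t-\lambda_t+\Delta$; hence
\[
\Pr\big[\,\mathcal{C}_1\circ\cdots\circ\mathcal{C}_k\text{ not }\big((t-\lambda_t)\hat{p}_t,\,L\big)\text{-list-decodable}\,\big]\;\le\;q^{\,t-\lambda_t}\cdot q^{\,-(L+1)D}\;\le\;q^{\,(t-\lambda_t)-LD}\;=\;q^{-\Delta}.
\]

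I do not expect a real obstacle: this is routine probabilistic coding theory. The care points are bookkeeping rather than conceptual: (i) handling the stochastic encoder, where each message carries $q^{nSk}$ candidate prefix codewords, so the union over these forces exactly the $n\theta^{2}/q^{2}\;(\ge nSk)$ loss in the denominator of $L$; (ii) keeping the center-union tight enough that the numerator is exactly $t-\lambda_t+\Delta$, which works because the erasure \emph{positions} are held fixed here (the further unions over erasure patterns and over $t\in\setchend$ being deferred to Corollary~\ref{coro:b-list-decodable}), so no $2^{t}$ factor appears; and (iii) verifying $\hat{p}_t\le 1-1/q$ so the $q$-ary Hamming-ball volume bound is valid.
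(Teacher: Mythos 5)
Your proposal is correct and follows essentially the same route as the paper's proof: a union bound over the $q^{\,t-\lambda_t}$ possible received prefixes on the unerased positions and over $(L+1)$-tuples of distinct messages, with the multiplicity of at most $q^{nSk}\le q^{nS/\theta}=q^{n\theta^2/q^2}$ prefix codewords per message absorbed into the exponent, the $q$-ary volume bound justified by $\hat{p}_t\le 1-1/q$, and the same algebra yielding $q^{(t-\lambda_t)-L\bigl((t-\lambda_t)(1-H_q(\hat{p}_t))-nR-n\theta^2/q^2\bigr)}=q^{-\Delta}$. One small nit: prefix codewords for distinct pairs $(m,\vec{s})$ that share the same message and some secret coordinates are \emph{not} mutually independent (they share chunks), but this does not affect your argument, since within a message you only use a union bound over secret tuples and the independence you actually invoke is across distinct messages, which does hold.
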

		
		\begin{proof}
			The proof follows ideas in \cite[Thm. 10.3]{guruswami2001list}, and is modified slightly to correspond to stochastic codes. We stress that although the code is stochastic and each message corresponds to several codewords, we analyze the number $L$ of different {\em messages} with codewords that fall into a Hamming ball of limited radius.
			The number of potential codewords in $k$ chunks is $\left( q^{\chunk}\right) ^k=q^{k\chunk}=\zt{q^{\mt}}$.
			As $\phatt \leq 1-1/q$, the number of words of length $(\mt-\lat)$ in a Hamming ball of radius $(\mt-\lat)\phatt$ is at most
			\begin{align*}
			\sum^{(\mt-\lat)\phatt}_{i=0}\binom{\mt-\lat}{i}(q-1)^{i} & <q^{(\mt-\lat)H_q\left( \phatt\right) }.
			\end{align*}
			
			We study the number of different messages corresponding to codewords that may lie in such a ball.
			Each message $m$ corresponds to \zt{at most $q^{nS/\echunk}$} codewords. 
			Since the encoding of each message is independent of other messages, the probability that there exist more than $L$ messages with corresponding codewords of length $(\mt-\lat)$ all of which lie in the Hamming ball of radius $(\mt-\lat)\phatt$ centered at a received word of length $(\mt-\lat)$ is at most
			
			\begin{align}
			\binom{q^{nR}}{L+1}
			\cdot
			\zt{\left(q^{nS/\echunk}\right)}^{L+1}
			\cdot 
			\left( \frac{q^{(\mt-\lat)H_q\left( \phatt\right) }}{q^{(\mt-\lat)}}\right) ^{(L+1)} 
			& < q^{\left( nR+n\echunk^2\zx{/q^2}\right) (L+1)}\left( \frac{q^{(\mt-\lat)H_q\left( \phatt\right) }}{q^{(\mt-\lat)}}\right) ^{(L+1)}
			\nonumber \\
			& = q^{\left[\left( nR+n\echunk^2\zx{/q^2}\right) - (\mt-\lat)\left( 1-H_q\left( \phatt\right) \right)\right](L+1)}.
			\nonumber
			\end{align}
			
			Thus, the probability that the received word of $k$ chunks is list-decoded to a list of size greater than $L$ is at most
			\begin{align}
			q^{(\mt-\lat)} \cdot q^{\left[\left( nR+n\echunk^2\zx{/q^2} \right) - (\mt-\lat)\left( 1-H_q\left( \phatt\right) \right)\right](L+1)}.
			\label{eq:thm-1-prob-large-list}
			\end{align}
			
			To quantify \eqref{eq:thm-1-prob-large-list}, we study
			\begin{align}
			(\mt-\lat) + \left[\left( nR+n\echunk^2\zx{/q^2} \right) - (\mt-\lat)\left( 1-H_q\left( \phatt\right) \right)\right](L+1) < -\dl
			\label{eq:thm-1-prob-large-list-pow}
			\end{align}
			
			Since $(\mt-\lat)\left( 1-H_{q}(\phatt)\right)-\frac{n\erate}{4} \geq nR$, then
			\begin{align}
			(\mt-\lat)\left( 1-H_{q}(\phatt)\right) & \geq nR + \frac{n\erate}{4}
			\nonumber \\
			& > nR+n\echunk^2\zx{/q^2}.
			\nonumber
			\end{align}
			
			Hence, solving \eqref{eq:thm-1-prob-large-list-pow} for $L$ we have
			\begin{align}
			L > \frac{\mt-\lat+\dl}{(\mt-\lat)\left(1-H_q\left(\phatt\right)\right) - nR - n\echunk^2\zx{/q^2}} - 1.
			\label{eq:thm-1-list-size}
			\end{align}
			
			Therefore, if $L$ satisfies \eqref{eq:thm-1-list-size} the code $\megacode$ is $L$-list decodable with probability at least $1-q^{-\dl}$.
		\end{proof}
		
		\begin{corollary}
			\label{coro:b-list-decodable}
			Let $\dl=3\log_q n$. 
			Let $(\mt-\lat) \in 
			\left[n \left(1-\frac{2q}{q-1}\ppp-\frac{q}{q-1}\pstar-\frac{\erate^2}{4}\right),
			n\left(1-\frac{q}{q-1}\pstar\right)
			\right]$ and $t=k\chunk\in\setchend$.
			Then with probability at least $1-\frac{1}{n}$ over code design, for any $\mt$ such that $(\mt-\lat)\left( 1-H_q(\phatt)\right)-\frac{n\erate}{4} \geq nR$, the code $\megacode$ is $L$-list decodable for $(\mt-\lat)\phatt$ symbol errors with list size
			\begin{align*}
			L = \frac{\mt-\lat+3\log_q n}{(\mt-\lat)\left(1-H_q\left( \phatt\right)\right) -nR-n\echunk^2\zx{/q^2} }=\blistord.
			\end{align*}
		\end{corollary}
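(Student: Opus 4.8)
The plan is to derive Corollary~\ref{coro:b-list-decodable} directly from Claim~\ref{thm:b-list-decodable} by fixing $\dl = 3\log_q n$ and taking a union bound over the chunk-end positions $t \in \setchend$. There is no new idea required; the content is entirely in checking that the error probability and the list-size bound survive the union bound uniformly in $t$.

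First I would instantiate Claim~\ref{thm:b-list-decodable} with $\dl = 3\log_q n$, so that $q^{-\dl} = n^{-3}$, and, for a single position $\mt = k\chunk \in \setchend$ satisfying $(\mt-\lat)(1-H_q(\phatt)) - \frac{n\erate}{4} \ge nR$, the prefix code $\megacode$ is $L$-list-decodable for radius $(\mt-\lat)\phatt$ with $L = \frac{\mt - \lat + 3\log_q n}{(\mt-\lat)(1-H_q(\phatt)) - nR - n\echunk^2/q^2}$, except with probability at most $n^{-3}$ over the code distribution. Since $|\setchend| = \chunknum - 1 < n$, a union bound over all $t \in \setchend$ shows that with probability at least $1 - n\cdot n^{-3} = 1 - n^{-2} \ge 1 - \frac1n$ the list-decodability conclusion holds simultaneously for every such $t$. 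Here one should note that the code-design event in Claim~\ref{thm:b-list-decodable} is well defined for each individual $t$ regardless of whether the rate condition holds there, so it is legitimate to union-bound over all of $\setchend$.

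It then remains to check that the resulting $L$ equals $\blistord$ uniformly in $t$. The numerator satisfies $\mt - \lat + 3\log_q n \le n + 3\log_q n = O(n)$. For the denominator, the hypothesis $(\mt-\lat)(1-H_q(\phatt)) - \frac{n\erate}{4} \ge nR$ --- which is exactly condition~\eqref{eq:list-con-c} guaranteed by Claim~\ref{claim:list-energy} --- gives $(\mt-\lat)(1-H_q(\phatt)) - nR \ge \frac{n\erate}{4}$; and since $\echunk = \erate^2/(9q^2) \le 1$ we have $n\echunk^2/q^2 = n\erate^4/(81q^6) < \frac{n\erate}{8}$, so the denominator is at least $\frac{n\erate}{4} - \frac{n\erate}{8} = \frac{n\erate}{8}$. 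Hence $L \le \frac{O(n)}{n\erate/8} = \blistord$, as claimed.

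The main (and only) point requiring care is that the denominator of $L$ stays bounded away from zero \emph{uniformly} over the relevant $t$; this is exactly what the slack term $\frac{n\erate}{4}$ built into the list-decoding condition of Claim~\ref{claim:list-energy} buys us, and it is also what makes $\dl = 3\log_q n$ large enough to absorb the $q^{(\mt-\lat)} \le q^n$ factor in the probability bound of Claim~\ref{thm:b-list-decodable} while remaining $o(n)$ in the numerator of $L$.
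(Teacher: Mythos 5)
Your proposal follows essentially the same route as the paper's proof: instantiate Claim~\ref{thm:b-list-decodable} with $\dl=3\log_q n$ (so each bad event has probability at most $n^{-3}$), union bound over decoding positions, and then use the slack $\frac{n\erate}{4}$ from the list-decoding condition to bound $L$ by $\blistord$; your estimate of the denominator is fine, since $n\echunk^2/q^2=n\erate^4/(81q^6)<n\erate/8$.

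The one place you are slightly too coarse is the union bound. For a fixed chunk end $t$, the erasure count $\lat$ is determined by the adversary and can take up to $n$ different values, and both the effective block length $\mt-\lat$ and the decoding radius $(\mt-\lat)\phatt$ (hence also the value of $L$) change with it; since the corollary is later invoked for arbitrary erasure patterns, the code-design event must hold simultaneously for all pairs $(t,\lat)$, not merely for each $t$ with $\lat$ regarded as given. The paper accordingly unions over $n\cdot n$ events (the $k<n$ chunk ends and the fewer than $n$ possible radii $(\mt-\lat)\phatt<n$), obtaining $1-n\cdot n\cdot n^{-3}=1-\frac{1}{n}$, which is exactly the probability in the statement; your union over $t$ alone yields $1-n^{-2}$ but covers only one value of $\lat$ per position. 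The repair is immediate --- add the extra factor of $n$ --- and the remainder of your argument, including the uniform lower bound $n\erate/8$ on the denominator of $L$, goes through unchanged.
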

		
		\begin{proof}
			By Claim~\ref{thm:b-list-decodable}, with probability $1-q^{-{ 3\log_q n}}$ the code $\megacode$ is $L$-list decodable with list size $L$ being
			\begin{align*}
			\frac{\mt-\lat+{3\log_q n}}{(\mt-\lat)\left(1-H_q\left( \phatt\right)\right) - nR-n\echunk^2\zx{/q^2}}
			\end{align*}
			
			Therefore, the probability that the code is decoded to a list of size greater than $L$ is at most $q^{-{3\log_q n}}=\frac{1}{{n^3}}$.
			
			Since $k<n$ and $(\mt-\lat)\phatt<\mt-\lat<n$, the probability that the code $\megacode$ is $L$-list decodable for any $k$ chunks is at least
			\begin{align*}
			1-n\cdot n\cdot\frac{1}{{n^3}}=1-\frac{1}{{n}}
			\end{align*}
			
			In addition, since $(\mt-\lat)\left( 1-H_q(\phatt)\right)-\frac{n\erate}{4} \geq nR$, we have $(\mt-\lat)\left( 1-H_q(\phatt)\right) - nR-n\echunk^2\zx{/q^2} > n\left(\erate/4-\echunk^2\zx{/q^2}\right)$. 
			Thus, we obtain
			\begin{align*}
			L<\frac{1+O\left( \frac{\log_q n}{n}\right)}{\erate/4-\echunk^2\zx{/q^2}}=\blistord
			\end{align*}
		\end{proof}
		
		\subsection{Utilizing the energy bounding condition}
		Unless otherwise specified, for any $t\in\setchend=\choiceschunk$, integer $k=\frac{\mt}{\chunk}$ is the number of chunks in the prefix of a code (or codeword) with respect to position $t$ and integer $l=\chunknum-\frac{\mt}{\chunk}=\chunknum-k$ is the number of chunks in the suffix of a code (or codeword) with respect to position $t$.
		
		\begin{definition}
			\label{def:dist}
			A codeword suffix, $\rmegacodw$, is of \textit{distance} $d$ from a set of codeword suffixes if the Hamming distance between the suffix $\rmegacodw$ and any suffix in the given set is at least $d$.
		\end{definition}
		
		In what follows we will define properties of our code with respect to a list of {\em codeword suffixes} $\clist$. This list consists of all the {codeword suffixes} corresponding to the $\mlists$ messages in $\mlist$ obtained by Bob in the list decoding phase of his decoding, {\it excluding} the true message $m$ Alice wishes to communicate to Bob, if it is indeed in the list $\mlist$ (it may not be, if $p_t > \phatt$ for the $t$ under consideration). Hence the size $\clists$ of $\clist$ is at most  $q^{nSl} \cdot \mlists$ (if the true message $m \notin \mlist$), and is at most $q^{nSl} \cdot (\mlists-1)$ (if the true message $m \in \mlist$).
		
		\begin{definition}
			\label{def:good-1}
			A code suffix, $\rmegacode$, is \textit{good} with respect to a list $\clist$ of codeword suffixes, a message $m$, and a sequence of $l$ secrets $\left( s_{k+1},s_{k+2},\cdots,s_{\chunknum}\right) $, if the codeword suffix, $\rmegacodw$, is of distance more than $\frac{(n-t)(q-1)}{q}-\zx{\frac{(n-t)2\erate^2}{9q^{3}}}$ from the list $\clist$.
		\end{definition}
		
		\begin{definition}
			\label{def:good-2}
			A code suffix, $\rmegacode$, is \textit{$\sigma$-good} with respect to a list $\clist$ of codeword suffixes and a message $m$, if the code suffix, $\rmegacode$, is good with respect to the message $m$, the list $\clist$, and a $(1-\sigma)$ portion of sequences of $l$ secrets in the set $\secr^{l}$.
		\end{definition}
		
%
		
		\begin{claim}
			\label{claim:good-1} 
			Let $(s_{k+1},s_{k+2},\cdots,s_{\chunknum})\in\secr^{l}$ be a sequence of $l=\chunknum-k$ secrets.
			With probability greater than $1-q^{-\delta(n-t)}$ over code design, a code suffix, $\rmegacode$, is good with respect to message $m$, the list $\clist$, and the secrets $(s_{k+1},s_{k+2},\cdots,s_{\chunknum})$, where $\delta=\zx{\echunk^2/q^2}$ and $S=\zx{\echunk^3/q^2}$.
		\end{claim}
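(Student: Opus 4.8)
The plan is a single Chernoff concentration estimate followed by a union bound over the (bounded) list $\clist$. Fix the message $m$, the $l=\chunknum-k$ suffix secrets $(s_{k+1},\dots,s_{\chunknum})$, and the list $\clist$. First I would record the relevant cardinality: by construction every suffix in $\clist$ is a suffix of an encoding of some message other than $m$, so $\clists=|\clist|\le\mlists\cdot q^{nSl}$; and since $l\,\chunk=n-t$ and $S=\theta^3/q^2$ we have $nSl=\tfrac{\theta^2(n-t)}{q^2}=\delta(n-t)$, hence $\clists\le\mlists\,q^{\delta(n-t)}$, with $\mlists=\blistord$ constant.

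Next I would pin down the randomness. Write $\mathbf{x}_h:=\rmegacodw$ for the codeword suffix in question. Over the code distribution, each chunk $\mathcal{C}_i(m,s_i)$, $i=k+1,\dots,\chunknum$, is a uniformly random element of $\mathcal{X}^{\chunk}$, and these are mutually independent because $\mathcal{C}_{k+1},\dots,\mathcal{C}_{\chunknum}$ are i.i.d.; therefore $\mathbf{x}_h$ is uniformly distributed on $\mathcal{X}^{n-t}$. Moreover every element of $\clist$ is obtained by evaluating the sub-codes at inputs $(m',\cdot)$ with $m'\ne m$, which are disjoint from the inputs $(m,s_i)$ defining $\mathbf{x}_h$; since each $\mathcal{C}_i$ is a uniformly random function, its values on disjoint input sets are independent, so $\clist$ is independent of $\mathbf{x}_h$ and may be treated as a fixed set of at most $\clists$ vectors in $\mathcal{X}^{n-t}$.

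For a single fixed $\mathbf{c}\in\clist$, $\hmd{\mathbf{x}_h}{\mathbf{c}}=\sum_{i=1}^{n-t}\mathbf{1}[x_{h,i}\ne c_i]$ is a sum of $n-t$ independent $\mathrm{Bernoulli}(\tfrac{q-1}{q})$ indicators with mean $\tfrac{(q-1)(n-t)}{q}$. Since $\theta=\erate^2/(9q^2)$, the slack $\frac{(n-t)2\erate^2}{9q^{3}}=\tfrac{2\theta(n-t)}{q}$ is an order-$\theta$ fraction of the $n-t$ coordinates, and I claim
\begin{align*}
\mathbb{P}\!\left[\hmd{\mathbf{x}_h}{\mathbf{c}}\le\frac{(n-t)(q-1)}{q}-\frac{(n-t)2\erate^2}{9q^{3}}\right]\ <\ \frac{q^{-2\delta(n-t)}}{\mlists}
\end{align*}
for all large $n$. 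This follows from a Chernoff bound in relative-entropy form: for a deviation of a $\tfrac{2\theta}{q}$-fraction below the mean the exponent is $\Theta\!\left(\tfrac{\theta^2(n-t)}{q-1}\right)$ in natural logarithm, i.e.\ $\Theta\!\left(\tfrac{\theta^2(n-t)}{(q-1)\ln q}\right)$ as an exponent of $q$, and since $\tfrac{1}{(q-1)\ln q}>\tfrac{1}{q^2}$ for every $q\ge2$ this strictly exceeds $2\delta(n-t)+\log_q\mlists=\tfrac{2\theta^2(n-t)}{q^2}+O(\log\tfrac{1}{\erate})$ once $n$ is large (using $n-t\ge\chunk=\Omega(n)$ and $\mlists=\blistord$). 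A union bound over the at most $\mlists\,q^{\delta(n-t)}$ elements of $\clist$ then gives that, except with probability less than $q^{-\delta(n-t)}$ over the code design, $\hmd{\mathbf{x}_h}{\mathbf{c}}>\frac{(n-t)(q-1)}{q}-\frac{(n-t)2\erate^2}{9q^{3}}$ holds simultaneously for all $\mathbf{c}\in\clist$ — which is precisely goodness of $\rmegacode$ with respect to $m$, $\clist$, and $(s_{k+1},\dots,s_{\chunknum})$.

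The only conceptually delicate ingredient is the independence of $\mathbf{x}_h$ from $\clist$; for this claim it is immediate from the structure of the random code (disjoint inputs to the random sub-code functions), and in the follow-up claims (\ref{claim:good-2}, \ref{claim:good-3}) it is where Calvin's causality is invoked, namely that the list $\mlist$ is determined at position $t$ with no knowledge of $s_{k+1},\dots,s_{\chunknum}$, so that an adversarially produced $\clist$ is still independent of $\mathbf{x}_h$. The one quantitative point to watch is that the Chernoff bound must be taken sharp enough (relative entropy rather than Hoeffding) for its exponent to beat $2\delta(n-t)$ \emph{uniformly in $q$} — which is exactly why the parameters $\theta=\erate^2/(9q^2)$, $S=\theta^3/q^2$, $\delta=\theta^2/q^2$ are chosen as they are; everything else is the routine computation sketched above.
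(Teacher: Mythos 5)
Your proposal is correct and is essentially the paper's own argument in different clothing: the paper bounds the volume of the ``forbidden region'' (union of radius-$r$ balls around the list) via the Taylor expansion of $H_q$ near $1-1/q$ and uses uniformity of the codeword suffix, which is exactly your per-center Chernoff tail (with exponent $\tfrac{2\echunk^2}{(q-1)\ln q}(n-t)$ in base $q$) followed by a union bound over $\clists\le \mlists\, q^{\delta(n-t)}$ centers, with the same final comparison $\tfrac{2\echunk^2}{(q-1)\ln q} > 2\delta + \tfrac{\log_q \mlists}{n-t}$. Your explicit remarks on the independence of the suffix from $\clist$ and on needing the variance-aware (rather than plain Hoeffding) tail to work uniformly in $q$ are both accurate and consistent with the paper's parameter choices.
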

		
		\begin{proof}
			Let $\cwlist$ be the list $\clist$ of codeword suffixes.
			Note that $\clists=q^{nSl} \cdot \blistord$.
			Define the forbidden region with respect to the list $\clist$ as
			\begin{align*}
			F_{\clist}=\bigcup_{i=1}^{{\clists}} B\left( \mathbf{x}_i,r \right) 
			\end{align*}
			where $B\left( \mathbf{x}_i,r \right)$ is the Hamming ball with center $\mathbf{x}_i$ and radius $r=\frac{(n-t)(q-1)}{q}-\zx{\frac{(n-t)2\erate^2}{9q^{3}}}$. 
			We depict the notion of the forbidden region in Figure~\ref{fig:forbid-1}.
			\begin{figure}[htbp]
				\centering
				\includegraphics[scale=0.7]{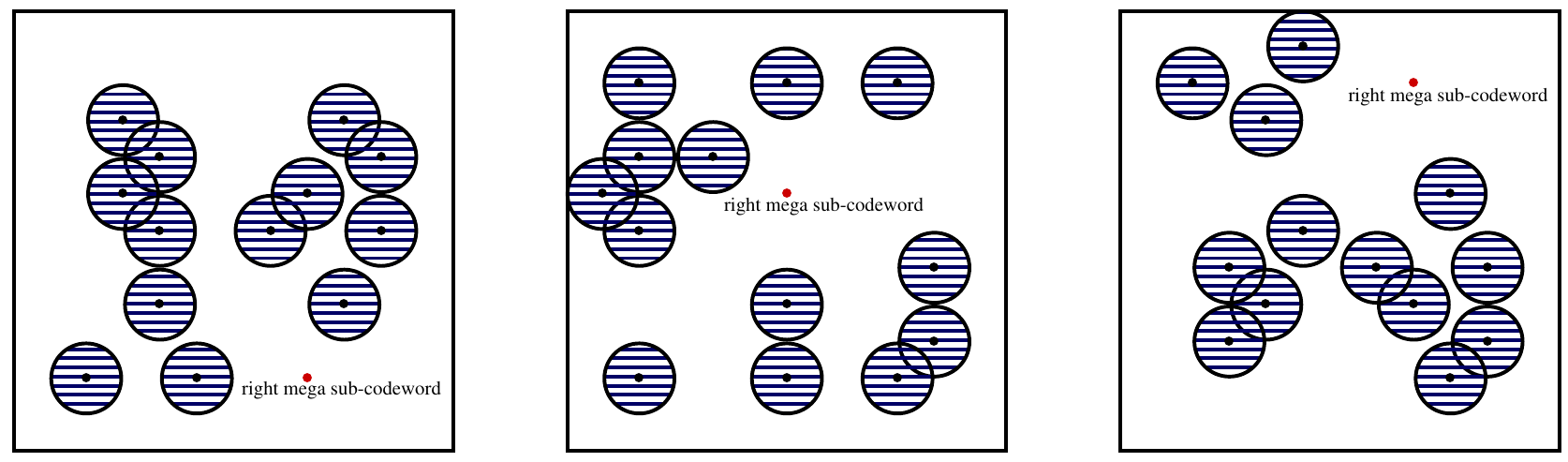}
				\caption[Three realizations of forbidden regions]{Three realizations of forbidden regions: In each realization, shaded disks correspond to the forbidden region and the isolated red point is a codeword suffix outside the forbidden region.}
				\label{fig:forbid-1}
			\end{figure}
			
			Since the size of the list $\clist$ is $\clists$, the number of words of length $(n-t)$ in the forbidden region $F_{\clist}$ can be determined as
			\begin{align}
			\clists\sum_{i=0}^{r}\binom{n-t}{i}(q-1)^i & < \clists q^{(n-t) H_q\left( \frac{q-1}{q}-\zx{\frac{2\erate^2}{9q^{3}}}\right) }
			\nonumber \\
			& < \clists q^{(n-t) \left( 1-\zx{\frac{2\echunk^2}{(q-1)\ln{q}}} \right) }
			\label{eq:size-forbidden-region2} \\
			&\zt{ = q^{(n-t)\left( \frac{\log_q \clists}{n-t} + \left( 1-\zx{\frac{2\echunk^2}{(q-1)\ln{q}}} \right) \right)}}
			\label{eq:size-forbidden-region3}
			\end{align}
			where \eqref{eq:size-forbidden-region2} follows from the Taylor series of the $q$-ary entropy function in a neighborhood of $1-1/q$, i.e., $H_q(x)=1-\frac{q-1}{2q\ln{q}}\sum\limits_{i=1}^{\infty}\frac{(q-1)^{2i-1}+1}{(2i-1)i}\left(1-\frac{q}{q-1}x\right)^{2i}$, and substitution of $\echunk=\frac{\erate^2}{9q^{2}}$.
			
			For sufficiently large $n$ and $S=\zx{\echunk^3/q^2}$, we have for some constant $c$ that
			$$
			\zt{\frac{2\echunk^2}{(q-1)\ln{q}}}-\frac{\log_q \clists}{n-t} = \zt{\frac{2\echunk^2}{(q-1)\ln{q}}}-\frac{S}{\echunk} -\frac{\log_q (c/\epsilon)}{n-t} > \zx{\echunk^2/q^2} = \delta.
			$$
			It follows that
			\begin{align}
			\frac{\log_q{\clists}}{n-t}+\left(1-\zt{\frac{2\echunk^2}{(q-1)\ln{q}}}\right) < 1-\delta 
			\label{eq:size-forbidden-dl}
			\end{align}
			
			Substituting \eqref{eq:size-forbidden-dl} into \eqref{eq:size-forbidden-region3}, we have
			\begin{align}
			\clists \sum_{i=0}^{r}\binom{n-t}{i}(q-1)^i < q^{(n-t)(1-\delta)}
			\label{eq:size-forbidden-region}
			\end{align}
			
			Let $\rmegacodw$ be a codeword suffix corresponding to message $m$. If the codeword suffix is not in the region $F_{\clist}$, then by Definition~\ref{def:good-1}, the code suffix $\rmegacode$ is good with respect to the message $m$, the list $\clist$, and the secrets $(s_{k+1},s_{k+2},\cdots,s_{\chunknum})$. Therefore, the probability over $\rmegacode$ that codeword suffix $\rmegacodw$ does not lie in the forbidden region $F_{\clist}$ is 
			\begin{align}
			\mathbb{P}\left[ \rmegacodw\notin F_{\clist}\right] & > \frac{q^{n-t}-q^{(n-t)(1-\delta)}}{q^{n-t}}
			\nonumber \\
			& = 1-q^{-(n-t)\delta}
			\nonumber
			\end{align}
		\end{proof}
		
		\begin{claim}
			\label{claim:good-2}
			With probability larger than $1-q^{-n^2}$ over code design, a code suffix $\rmegacode$ of length $l=\chunknum-k$ is $\sigma$-good with respect to message $m$ and the list $\clist$, where $\sigma=\egood$.
		\end{claim}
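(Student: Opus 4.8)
The plan is to lift the ``for each fixed secret sequence'' statement of Claim~\ref{claim:good-1} to the ``for a $(1-\sigma)$ fraction of secret sequences'' statement via a concentration (bounded-differences) argument, since the target failure probability $q^{-n^{2}}$ is too small to be reached by a plain Markov step. Fix the message $m$ and the list $\clist$ of codeword suffixes (recall these are treated as fixed here; the union over $m$ and $\clist$ happens later in Claim~\ref{claim:good-3}). For each sequence $\mathbf{s}=(s_{k+1},\ldots,s_{\chunknum})\in\secr^{l}$ of $l=\chunknum-k$ secrets, let $X_{\mathbf{s}}$ be the indicator, over the random code suffix $\rmegacode$, of the event that $\rmegacodw$ is \emph{not} good with respect to $m$, $\clist$, and $\mathbf{s}$ (Definition~\ref{def:good-1}), i.e.\ that $\rmegacodw$ lies in the forbidden region $F_{\clist}$ of the proof of Claim~\ref{claim:good-1}. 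By Claim~\ref{claim:good-1}, $\mathbb{E}[X_{\mathbf{s}}]<q^{-\delta(n-t)}$ with $\delta=\echunk^{2}/q^{2}$, so by linearity the number of ``bad'' secret sequences $B=\sum_{\mathbf{s}\in\secr^{l}}X_{\mathbf{s}}$ satisfies $\mathbb{E}[B]<|\secr|^{l}q^{-\delta(n-t)}$. Being $\sigma$-good (Definition~\ref{def:good-2}) is exactly the event $B\le\sigma|\secr|^{l}$, so it suffices to show $\mathbb{P}[B>\sigma|\secr|^{l}]<q^{-n^{2}}$.

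The key structural observation is that $B$ depends only on the $l\,|\secr|$ chunk-codewords $\{\mathcal{C}_{j}(m,s)\colon k+1\le j\le\chunknum,\ s\in\secr\}$, and since the sub-codes $\mathcal{C}_{k+1},\ldots,\mathcal{C}_{\chunknum}$ are i.i.d.\ draws from the uniform distribution $\Phi$ over stochastic codes, these chunk-codewords are mutually independent and uniform over $\mathcal{X}^{\chunk}$. Changing a single chunk-codeword $\mathcal{C}_{j}(m,s^{*})$ affects only the $|\secr|^{l-1}$ indicators $X_{\mathbf{s}}$ with $s_{j}=s^{*}$, each by at most $1$, so $B$ has the bounded-differences property with Lipschitz constant $|\secr|^{l-1}$ in each of its $l\,|\secr|$ independent coordinates. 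McDiarmid's inequality then gives, for any $u>0$,
\begin{align*}
\mathbb{P}\bigl[B\ge\mathbb{E}[B]+u\bigr]\;\le\;\exp\!\left(\frac{-2u^{2}}{l\,|\secr|\cdot|\secr|^{2(l-1)}}\right)\;=\;\exp\!\left(\frac{-2u^{2}}{l\,|\secr|^{2l-1}}\right).
\end{align*}

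What remains is parameter bookkeeping. First I would verify $\mathbb{E}[B]\le\tfrac12\sigma|\secr|^{l}$: for $l\ge1$ the suffix has length $n-t=l\chunk\ge\chunk=n\echunk$, so with $\delta=\echunk^{2}/q^{2}$ and $\sigma=\egood$ one gets $\delta(n-t)\ge n\echunk^{3}/q^{2}>n\echunk^{4}=-\log_{q}\sigma$ for all sufficiently small $\echunk$ (indeed $\echunk=\frac{\erate^2}{9q^2}<\frac1{q^2}$), hence $\mathbb{E}[B]<|\secr|^{l}q^{-\delta(n-t)}<\tfrac12\sigma|\secr|^{l}$ once $n$ is large. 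Taking $u=\tfrac12\sigma|\secr|^{l}$, and using $|\secr|=q^{nS}$ with $S=\echunk^{3}/q^{2}$ and $l\le\chunknum=1/\echunk$, the McDiarmid bound collapses to $\exp\!\bigl(-\sigma^{2}|\secr|/(2l)\bigr)\le\exp\!\bigl(-\tfrac{\echunk}{2}\,q^{\,n(\echunk^{3}/q^{2}-2\echunk^{4})}\bigr)$, and since $\echunk^{3}/q^{2}-2\echunk^{4}>0$ for small $\echunk$ this is $\exp(-q^{\Omega(n)})$, far below $q^{-n^{2}}$ for all sufficiently large $n$. Combining, $\mathbb{P}[B>\sigma|\secr|^{l}]\le\mathbb{P}[B\ge\mathbb{E}[B]+u]<q^{-n^{2}}$, which is the claim. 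The main obstacle is that the required failure probability is super-exponential in $n$ (a plain Markov bound on $B$ would only give $q^{-\Omega(n)}$, which is too weak for the later union bound over messages and lists in Claim~\ref{claim:good-3}); overcoming it forces one to exploit the independence of the chunk-codewords across messages and secrets, and the slightly delicate point is that the bounded-difference constants $|\secr|^{l-1}$ are themselves huge, so one must check that the ratio $u^{2}/(l\,|\secr|^{2l-1})$ still grows like $q^{\Omega(n)}$.
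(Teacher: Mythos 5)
Your proposal is correct, and it takes a genuinely different route from the paper. The paper's proof partitions the secret space $\secr^{l}$ into $|\secr|^{l-1}$ ``diagonal'' classes $\secr_{\mathbf{s^*}}=\{(a,s^*_{k+2}+a,\dots,s^*_{\chunknum}+a)\}$, within which any two secret tuples differ in every coordinate, so that the corresponding codeword suffixes of $m$ are mutually independent; it then bounds the probability that some $\sigma$-fraction of a single class is simultaneously bad (independence gives $q^{-(n-t)\delta\sigma q^{nS}}$, a union over the $\binom{q^{nS}}{\sigma q^{nS}}$ subsets costs $2^{-2\sigma q^{nS}\log\sigma}$), and finally unions over the classes, using the observation that $\sigma$-goodness within every class implies $\sigma$-goodness globally. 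You instead count the bad secret sequences $B$ directly and apply McDiarmid's bounded-differences inequality to $B$ as a function of the $l\,|\secr|$ i.i.d.\ chunk-codewords $\mathcal{C}_j(m,s)$; this exploits the same underlying independence (values of a uniformly random map at distinct inputs) but sidesteps the diagonal partition entirely, and your parameter bookkeeping checks out: $\mathbb{E}[B]<\tfrac12\sigma|\secr|^l$ since $\delta(n-t)\ge n\echunk^3/q^2>n\echunk^4$, and the exponent $\sigma^2|\secr|/(2l)\ge\tfrac{\echunk}{2}q^{n(\echunk^3/q^2-2\echunk^4)}$ is positive because $\echunk=\erate^2/(9q^2)<1/(2q^2)$, so the failure probability is $\exp(-q^{\Omega(n)})$, even smaller than the $q^{-n^2}$ the claim asks for and more than enough for the later union bound in Claim~\ref{claim:good-3}. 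What the paper's elementary counting buys is self-containedness (no concentration inequality, and an explicit identification of which secret families are independent); what your route buys is a shorter argument that avoids the mildly delicate step of verifying that suffixes indexed by all-coordinates-distinct tuples are independent, since McDiarmid only needs independence of the underlying per-chunk, per-secret codeword values, plus the trivial fact that changing one such value moves $B$ by at most $|\secr|^{l-1}$. Like the paper, you correctly treat $m$ and $\clist$ as fixed here (the forbidden region depends only on other messages' codewords, which are independent of $m$'s suffix chunks) and defer the union over messages, lists, and chunk ends to Claim~\ref{claim:good-3}.
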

		
		\begin{proof}
			Let $\secr=\left[ q^{nS} \right]$ be the set of integers between 0 and $q^{nS}-1$. We start by considering a partition of the set of codeword suffixes corresponding to message $m$ into $\secr^{l-1}$ disjoint subsets.  Specifically, we partition the set of secrets $\secr^l$ into $\secr^{l-1}$ disjoint sets. 
			Each set is indexed by an element $(s_{k+2},\dots,s_{\chunknum})$ in $\secr^{l-1}$.
			The set $\secr_\mathbf{s^*}$ corresponding to $\mathbf{s^*}=(s^*_{k+2},\dots,s^*_{\chunknum})$ equals:
			$$
			\secr_\mathbf{s^*}=
			\left\lbrace
			\mathbf{s}=(a,s^*_{k+2}+a,\dots,s^*_{\chunknum}+a) \mid a \in \left[ q^{nS} \right]
			\right\rbrace
			$$
			where addition is done modulo $q^{nS}$.
			It holds that
			$$
			\secr^l=\bigcup_{\mathbf{s^*} \in \secr^{l-1}}\secr_\mathbf{s^*}.
			$$
			Let $\mathbf{s^*} \in \secr^{l-1}$. In our analysis below we use the fact that any two $l$-tuples $\mathbf{s}=(s_{k+1},s_{k+2},\dots,s_{\chunknum})$ and $\mathbf{s'}=(s'_{k+1},s'_{k+2},\dots,s'_{\chunknum})$ in $\secr^l$ that appear in $\secr_\mathbf{s^*}$ have the property that all their coordinates differ. Namely that $s_{k+1}\ne s'_{k+1}, \dots, s_{\chunknum} \ne s'_{\chunknum}$.
			
			Now consider the set of $q^{nS}$ codeword suffixes $\rmegacodw$ corresponding to $l$-tuples $\mathbf{s}=(s_{k+1},s_{k+2},\dots,s_{\chunknum})$ from a certain set $\secr_\mathbf{s^*}$ in the partition specified above. 
			Each such codeword suffix consists of $l$ chunks.
			By our construction, the set of $q^{nS}$ 
			codeword suffixes corresponding to $\mathbf{s}=(s_{k+1},s_{k+2},\dots,s_{\chunknum}) \in \secr_\mathbf{s^*}$ are independent and uniformly distributed.
			This follows directly from our code construction and the property of  $\secr_\mathbf{s^*}$ discussed above.
			Thus, for $\mathbf{s}=(s_{k+1},s_{k+2},\dots,s_{\chunknum})$ and $\mathbf{s'}=(s'_{k+1},s'_{k+2},\dots,s'_{\chunknum})$ in $\secr_\mathbf{s^*}$, the event that a code suffix $\rmegacode$ is \textit{not} good with respect to message $m$, the list $\clist$, and the secrets $(s_{k+1},s_{k+2},\cdots,s_{\chunknum})$ is independent from the event that a code suffix $\rmegacode$ is \textit{not} good with respect to message $m$, the list $\clist$, and the secrets $(s^{\prime}_{k+1},s^{\prime}_{k+2},\cdots,s^{\prime}_{\chunknum})$.
			
			From Claim~\ref{claim:good-1}, a code suffix $\rmegacode$ is \textit{not} good with respect to message $m$, the list $\clist$, and a sequence of secrets $(s_{k+1},s_{k+2},\cdots,s_{\chunknum})$ with probability less than $q^{-(n-t)\delta}$. Thus, the probability that a code suffix $\rmegacode$ is \textit{not} good with respect to message $m$, the list $\clist$, and a certain $\sigma$ portion of sequences of $l$ secrets in the set $\secr_\mathbf{s^*}$ is less than
			\begin{align*}
			\left( q^{-(n-t)\delta}\right) ^{\sigma q^{nS}} = q^{-(n-t)\delta\sigma q^{nS}}.
			\end{align*}
			
			The number of all possible $\sigma$-portions of the set $\secr_\mathbf{s^*}$ is 
			\begin{align}
			\binom{q^{nS}}{\sigma q^{nS}} & < 2^{q^{nS}H_2(\sigma)}
			\nonumber \\
			& < 2^{q^{nS} \cdot (-2\sigma\log{\sigma})}.
			\label{eq:cl-gd-2-larger-3}
			\end{align}
			where \eqref{eq:cl-gd-2-larger-3} follows by $H_2(\sigma)<-2\sigma\log{\sigma}$ for $\sigma<1/2$.
			
			We say that a code suffix $\rmegacode$ is \textit{$\sigma$-good} with respect to message $m$, the list $\clist$ of codeword suffixes, and a secret set $\secr_\mathbf{s^*}$, if the code suffix $\rmegacode$ is good with respect to the message $m$, the list $\clist$, and a $(1-\sigma)$ portion of sequences of secrets in the set $\secr_\mathbf{s^*}$.
			So the probability over code design that a code suffix $\rmegacode$ is \textit{not} $\sigma$-good with respect to message $m$, list $\clist$, and secrets $\secr_\mathbf{s^*}$ is
			\begin{align}
			\mathbb{P}\left[ \rmegacode\text{ is not }\sigma\text{-good w.r.t.} \ m,\clist,\secr_\mathbf{s^*}\right] 
			& \leq q^{-(n-t)\delta \cdot \sigma q^{nS}} \cdot 2^{q^{nS} \cdot (-2\sigma\log{\sigma})}
			\nonumber
			\\
			& = q^{\sigma q^{nS} \left(-(n-t)\delta-2\log{\sigma}\log_q{2}\right)}
			\nonumber
			\\
			& \leq q^{\sigma q^{nS}\left( -\chunk \delta-2\log{\sigma}\log_q{2}\right) }
			\nonumber
			\\
			& = q^{q^{\left(n\zx{\echunk^3/q^2}-n\echunk^4\right)}\left( -n\zx{\echunk^3/q^2} + 2n\echunk^4 \right)}
			\label{eq:cl-gd-2-larger-4} 
			\\
			& < q^{-n^3}
			\label{eq:cl-gd-2-larger-5}
			\end{align}
			where \eqref{eq:cl-gd-2-larger-4} follows by substituting $\delta=\zx{\echunk^2/q^2}$, $S=\zx{\echunk^3/q^2}$, and $\sigma=q^{-n\echunk^4}$, and \eqref{eq:cl-gd-2-larger-5} follows for sufficiently large $n$.
			
			Now union bounding over all sets $\secr_\mathbf{s^*}$ in the partition of $\secr^l$, we get for sufficiently large $n$ that 
			\begin{align*}
			\mathbb{P}\left[ \exists \mathbf{s^*}:\ \rmegacode\text{ is not }\sigma\text{-good w.r.t.} \ m,\clist,\secr_\mathbf{s^*}\right]  
			\leq q^{-n^3}\cdot q^{nS(l-1)} < q^{-n^2}.
			\end{align*}
			
			Finally, we notice that being \textit{$\sigma$-good} with respect to a message $m$, a list $\clist$ of codeword suffixes, and any secret set $\secr_\mathbf{s^*}$ in the partition of $\secr^l$ implies being $\sigma$-good with respect to message $m$ and list $\clist$.
			Hence, the probability over code design that a code suffix $\rmegacode$ is $\sigma$-good with respect to message $m$ and list $\clist$ is
			\begin{align*}
			\mathbb{P}\left[ \rmegacode\text{ is }\sigma\text{-good w.r.t.}\ m, \clist \right] > 1-q^{-n^2}.
			\end{align*}
		\end{proof}
		
		\begin{remark}
			\label{re:coro-good-2}
			The goodness of a code suffix is what guarantees that the consistency check in the decoding process succeeds. Specifically, if a code is good with respect to a certain list and a certain message $m$; and in addition the codeword suffix received has few errors; then if message $m$ is in the list it will be (w.h.p.) the unique element that passes the consistency checking phase of Bob, and if it is not in the list the consistency checking phase of Bob will not return any message (w.h.p.).
		\end{remark}
		
		\begin{claim}
			\label{claim:good-3}
			Let $\sigma=\egood$. With probability greater than $1-q^{-n}$ over code design, for every message $m$, every list $\clist$, and every chunk end $t\in\setchend$, a code suffix is \textit{$\sigma$-good} with respect to message $m$ and list $\clist$.
		\end{claim}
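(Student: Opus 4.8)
The plan is to derive the claim from Claim~\ref{claim:good-2} by a single union bound over the three objects being quantified: the message $m$, the list $\clist$ of codeword suffixes, and the chunk end $t\in\setchend$. Each of these ranges over a set whose size I can control. There are $|\msg|=q^{nR}\le q^{n}$ messages, and $|\setchend|=\chunknum-1<1/\echunk$ chunk ends, a constant depending only on $\erate$ and $q$. The count that matters is the number of lists $\clist$ that can actually arise in Bob's decoder: by definition $\clist$ consists of all codeword suffixes of the $\mlists$ messages in Bob's list $\mlist$ with the transmitted message deleted, and by Corollary~\ref{coro:b-list-decodable} we have $\mlists=\blistord$, a constant. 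Hence every such $\clist$ is the union of the codeword suffixes of at most $\mlists$ messages, and so it is determined by a choice of at most $\mlists$ elements of $\msg$; the number of such choices is at most $\sum_{j=0}^{\mlists}\binom{q^{nR}}{j}<(\mlists+1)\,q^{nR\mlists}\le(\mlists+1)\,q^{n\mlists}$. This is the one genuinely important step: because the list size $\mlists$ is a constant, the number of relevant lists is only \emph{singly} exponential in $n$, which is precisely what lets the union bound survive against the $q^{-n^2}$ failure probability supplied by Claim~\ref{claim:good-2}.

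With these counts in hand I would apply Claim~\ref{claim:good-2} to each fixed triple $(m,\clist,t)$: over the random code design, the code suffix $\rmegacode$ fails to be $\sigma$-good with respect to $m$ and $\clist$ (where $\sigma=\egood$) with probability below $q^{-n^2}$. For this invocation it is legitimate to treat $\clist$ as fixed independently of the code suffix of $m$, because $\clist$ is built only from codeword suffixes of messages other than the transmitted one, and distinct messages use independent sub-codes; thus when we range over all pairs consisting of a message $m$ and an at-most-$\mlists$-element defining set for $\clist$, every pairing encountered is exactly of the form analyzed in Claim~\ref{claim:good-2}.

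Combining the estimates, the probability that some code suffix fails to be $\sigma$-good with respect to some message and some list at some chunk end is at most
\begin{align*}
q^{n}\cdot(\mlists+1)\,q^{n\mlists}\cdot\frac{1}{\echunk}\cdot q^{-n^2}
= \frac{\mlists+1}{\echunk}\,q^{\,n(\mlists+1)-n^2},
\end{align*}
which, since $\mlists$ and $\echunk$ do not depend on $n$, is below $q^{-n}$ for all sufficiently large $n$; this is exactly the assertion of the claim. The main obstacle is really just the bookkeeping in the first paragraph --- recognizing that the relevant lists are parameterized by a constant-size set of messages rather than by an exponential-size set of codeword suffixes; once that is in place, everything is a routine union bound dominated by the very strong per-triple bound of Claim~\ref{claim:good-2}.
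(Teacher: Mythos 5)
Your proposal is correct and follows essentially the same route as the paper's own proof: a union bound over messages, over lists (parameterized by a constant-size, $O(1/\erate)$, set of messages, hence only singly exponential in $n$ in number), and over chunk ends, against the $q^{-n^2}$ per-triple failure probability from Claim~\ref{claim:good-2}. The only differences are cosmetic bookkeeping (e.g., summing over lists of size at most $\mlists$ rather than exactly $c/\erate$), which do not affect the argument.
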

		
		\begin{proof}
			The number of possible lists that can be obtained at a certain chunk end position $t$ depends on a set of messages of size $c/\erate$ for some constant $c$ and is thus at most of size 
			\begin{align}
			\label{eq:list-upper-bound}
			\binom{q^{nR}}{c/\erate} \leq q^{cnR/\erate}
			\end{align}
			
			From Claim~\ref{claim:good-2} we know that for $\sigma=\egood$, the probability that a code suffix $\rmegacode$ is $\sigma$-good with respect to all message $m$, any list $\clist$, and every chunk end position $\mt$ is at least 
			\begin{align*}
			1-q^{nR} \cdot q^{cnR/\erate} \cdot \chunknum \cdot q^{-n^2} & > 1-\zt{q^{-n^2+3cn/\erate}}
			\\
			& > 1-q^{-n}
			\end{align*} 
			for sufficiently large $n$.
		\end{proof}
		
		\subsection{Summary}
		
		\begin{claim}
			\label{claim:prob-good-exist}
			With probability at least $1-\frac{1}{n}-q^{-n}$ over code design, there exists a good code $\mcode$ such that the following properties are satisfied
			\begin{itemize}
				\item For any adversarial error \zt{and erasure} patterns, there exists a position $\tstar=\kstar\chunk$ such that the code prefix with respect to position $\tstar$, $\megacode[\kstar]$, is list decodable \zt{for $(\tstar-\lat[\tstar])\phat_{\tstar}$ errors} with list size $L=\blistord$ and that the transmitted message $m$ is in $\mlist$.
				Let $\clist$ be the list of codeword suffixes corresponding to $\mlist \setminus\{m\}$.
				
				\item \mikel{For any adversarial error and erasure patterns} and any position $t$ for which $t_0\leq t\leq\tstar$, the received word suffix with respect to position $t$ has \mikel{a total amount of erasures plus twice the amount of errors bounded by above by $(n-t)\left(\frac{q-1}{q}-\mikel{\frac{2\erate^2}{9q^{2}}}\right)$,  a total amount of errors bounded by $(n-t-n\pstar+\lat[t])\left(\frac{q-1}{2q}-\frac{\erate^2}{9q^{2}}\right)-\frac{n\pstar}{2q}$}, 
				and moreover the code suffix, $\rmegacode$, is $\sigma$-good with respect to the transmitted message $m$ and the list $\clist$ where $\sigma=\egood$.
			\end{itemize}
		\end{claim}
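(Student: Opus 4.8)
The plan is to treat this statement as the bookkeeping summary it is: first intersect the two probabilistic guarantees already established --- list decodability of \emph{every} code prefix (Corollary~\ref{coro:b-list-decodable}) and $\sigma$-goodness of \emph{every} code suffix with respect to every message and list (Claim~\ref{claim:good-3}) --- into one ``good code'' event, and then argue that for a code in that event both displayed properties hold \emph{deterministically}, i.e.\ for every adversarial error/erasure pattern, by feeding the trajectory claims the right chunk end $t$. Concretely, I would call $\mcode$ \emph{good} when every prefix $\megacode$ (for $t$ in the range of Claim~\ref{claim:list-energy}) is $\blistord$-list decodable for $(\mt-\lat)\phatt$ errors and every suffix $\rmegacode$ is $\egood$-good with respect to every message and every list of codeword suffixes. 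The hypothesis \eqref{eq:list-con-c} needed by Corollary~\ref{coro:b-list-decodable} is supplied, for each such $t$, by Claim~\ref{claim:list-energy} (which is a statement about trajectories, not about the random code), so that corollary gives failure probability $\le 1/n$ for the first requirement, and Claim~\ref{claim:good-3} gives failure probability $\le q^{-n}$ for the second; a union bound yields probability $\ge 1-\frac1n-q^{-n}$. Fix a good $\mcode$ for the rest.

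For the first bullet, goodness already gives, at every admissible $t$, a list $\mlist$ of size $\blistord$; and the transmitted $m$ lies in $\mlist$ exactly when $p_t\le\phatt$ (its prefix then sits inside the radius-$(\mt-\lat)\phatt$ ball). I would split on the trajectory type (Definition~\ref{def:tr-type}). For a Low Type trajectory, set $\tstar=t_0$ as in Definition~\ref{def:t-star}(i); since $p_{t_0}<\phat_{t_0}$ there, $m\in\mlist$. For a High Type trajectory, Claim~\ref{claim-above} shows $(\mt-\lat)\phatt\ge n\ppp\ge(\mt-\lat)p_t$ once $\mt-\lat=n-\frac{q}{q-1}n\pstar-\chunk$, so $p$ drops at or below $\phat$ by the second-to-last chunk; take $\tstar$ to be the first chunk end with $p_{t-\chunk}>\phat_{t-\chunk}$ and $p_t\le\phat_t$ (Definition~\ref{def:t-star}(ii)), whence again $m\in\mlist$. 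In either case $\tstar$ lies in the admissible range, so the list-decodability statement applies at $\tstar$, and I set $\clist$ to be the list of suffixes of $\mlist\setminus\{m\}$.

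For the second bullet, the $\sigma$-goodness of $\rmegacode$ with respect to $m$ and $\clist$ at each $t$ is literally the second half of ``good''. For the two numerical bounds, fix $t_0\le t\le\tstar$ and let $p_h=\suff{p}$ be the error fraction of the suffix on unerased positions. When $\tstar=t_0$ (Low Type) Claim~\ref{claim-zero-starting} bounds the suffix's error count outright; when $t_0<t\le\tstar$ (High Type), the fact that $\tstar$ is the \emph{first} down-crossing forces $p_{t'}>\phat_{t'}$ for all $t_0\le t'<\tstar$, hence $p_{t-\chunk}>\phat_{t-\chunk}$, so Claim~\ref{claim-ingap} gives $p_t>\ptil_t$ and Claim~\ref{claim:p-tilde-t} then gives $p_h<\frac{q-1}{2q}-\frac{\erate^2}{9q^{2}}-\frac{n\pstar}{2q(n-\mt-n\pstar+\lat)}$; multiplying through by the $n-\mt-n\pstar+\lat$ unerased suffix positions produces the stated error bound, and the case $t=t_0$ of a High Type trajectory is covered because then $m\notin\mlist$ and the whole residual error budget already lies below that bound by the chain of inequalities inside the proof of Claim~\ref{claim-zero-starting}. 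The second numerical bound then follows by adding the $\le n\pstar-\lat$ suffix erasures to twice the \emph{unsimplified} error bound $\frac{q-1}{2q}(n-\mt+\lat)-\frac{n\pstar}{2}-\frac{(n-t)\erate^2}{9q^2}$ and using $\lat\bigl(\frac{q-1}{q}-1\bigr)\le0$ together with $2\ppp+\pstar+\erate\le\frac{q-1}{q}$.

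The only genuinely delicate point --- and the step I expect to be the main obstacle --- is the ``first crossing'' bookkeeping in the High Type branch: one must confirm that the earliest chunk end $\tstar$ with $p_{t-\chunk}>\phat_{t-\chunk}$ and $p_t\le\phat_t$ both exists (which is the content of Claim~\ref{claim-above}) and forces $p_{t'}>\phat_{t'}$ throughout $[t_0,\tstar)$, because only then do the hypotheses of Claims~\ref{claim-ingap} and \ref{claim:p-tilde-t} hold at \emph{every} $t\in[t_0,\tstar]$, which is what the second bullet demands. Once that is pinned down, the rest --- the union bound and the purely algebraic conversions of the energy-bounding inequality into the two suffix bounds --- is routine.
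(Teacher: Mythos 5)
Your overall route is the same as the paper's: union-bound Corollary~\ref{coro:b-list-decodable} (failure $\le 1/n$) with Claim~\ref{claim:good-3} (failure $\le q^{-n}$), split Calvin's trajectory into Low/High Type, take $\tstar$ as in Definition~\ref{def:t-star} (existence of the down-crossing from Claim~\ref{claim-above}), get $m\in\mlist$ from $p_{\tstar}\le\phat_{\tstar}$, and for $t_0<t\le\tstar$ use the first-crossing property to feed $p_{t-\chunk}>\phat_{t-\chunk}$ into Claim~\ref{claim-ingap} and then Claim~\ref{claim:p-tilde-t}; your conversion of the error bound into the ``erasures plus twice errors'' bound via the unsimplified quantity $\frac{q-1}{2q}(n-\mt+\lat)-\frac{n\pstar}{2}-\frac{(n-t)\erate^2}{9q^2}$ is fine (indeed cleaner than the paper's own simplified computation).

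The one genuine flaw is your treatment of the endpoint $t=t_0$. You claim that for Low Type ($\tstar=t_0$), and likewise for High Type at $t=t_0$, the error-count bound follows ``outright'' from Claim~\ref{claim-zero-starting} (respectively from the chain of inequalities inside its proof). It does not: Claim~\ref{claim-zero-starting} only yields suffix errors $\le \frac{q-1}{2q}\left(n-n\pstar-t_0+\lat[\mt_0]\right)-\frac{(n-t_0)\erate^2}{9q^2}$, which exceeds the bound stated in the claim, $\left(n-t_0-n\pstar+\lat[\mt_0]\right)\left(\frac{q-1}{2q}-\frac{\erate^2}{9q^2}\right)-\frac{n\pstar}{2q}$, by roughly $\frac{n\pstar}{2q}-\frac{\erate^2}{9q^2}\left(n\pstar-\lat[\mt_0]\right)$, a positive quantity whenever $\pstar>0$ and $\erate$ is small --- i.e.\ exactly in the error--erasure regime this paper is about. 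The missing $-\frac{n\pstar}{2q}$ term only appears after substituting the exact form of $\pbar_{t}$, which is what Claim~\ref{claim:p-tilde-t} does; so the endpoint must also be routed through $p_{t_0}\ge\ptil_{t_0}$ and Claim~\ref{claim:p-tilde-t}, as the paper does (for Low Type via Claim~\ref{claim-zero-starting} together with Definition~\ref{def:p-tilde-t}; for High Type directly from $p_{t_0}\ge\phat_{t_0}>\ptil_{t_0}$, the latter inequality being established inside the proof of Claim~\ref{claim-ingap}). Also, your remark that in the High Type case at $t_0$ one has $m\notin\mlist$ is both unnecessary and not quite right (High Type allows $p_{t_0}=\phat_{t_0}$), and in any case irrelevant: the second bullet's numerical bounds and $\sigma$-goodness with respect to $\clist$ (which excludes $m$) must hold at $t_0$ regardless of whether $m$ is in the list.
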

		
		\begin{proof}
			We consider all possible error \zt{and erasure} patterns of the adversary by analyzing all of Calvin's possible trajectories. \zt{More precisely, given any erasure pattern, we analyze Calvin's possible behaviors $\ppp_{\mt}$ on the $(\mt-\lat)$ unerased symbol positions.} As mentioned above, all possible trajectories of Calvin can be classified into two types, the High Type Trajectory and the Low Type Trajectory.
			
			For any Low Type Trajectory, we have $\ppp_{t_0}<\phat_{t_0}$. Let $t_0=k_0\chunk$ for some integer $k_0$. Notice that by our choice of $\phatt$, the list-decoding condition \eqref{eq:list-con-c} is always satisfied. Therefore, by Corollary~\ref{coro:b-list-decodable}, with list decoding radius $(\mt_{0}-\lat[\mt_{0}])\phat_{t_{0}}$, the code prefix, $\megacode[k_{0}]$, is list decodable \zt{for errors} with list size $\blistord$ with probability $1-\frac{1}{n}$ over code design. In addition, since $(\mt_{0}-\lat[\mt_{0}])p_{t_0}<(\mt_{0}-\lat[\mt_{0}])\phat_{t_0}$, we have $m\in\mlist$. So far the first property stated in the claim is satisfied for any Low Type Trajectory.
			
			By Claim~\ref{claim-zero-starting}, \zx{$\ppp_{\mt_{0}}$ satisfies the energy bounding condition \eqref{eq:energy-con-c} and by Definition~\ref{def:p-tilde-t}, we have $\ppp_{\mt_{0}}\geq\ptil_{\mt_{0}}$. Then by Claim~\ref{claim:p-tilde-t}} the received word suffix with respect to position $\mt_{0}$ has no more than \mikel{a fraction of} $\zx{\frac{q-1}{2q}-\frac{\erate^2}{9q^{2}}-\frac{n\pstar}{2q(n-\mt_{0}-n\pstar+\lat[\mt_{0}])}}$ of \zt{its unerased symbols in error}. 
			\zx{Moreover, since there are \mikel{at most} $n\pstar-\lat[\mt_{0}]$ erasures in the received word suffix, we have \mikel{that the total amount of erasures and twice the amount of errors in the suffix is} $n\pstar-\lat[\mt_{0}]+(n-\mt_{0}-n\pstar+\lat[\mt_{0}])\left(\frac{q-1}{q}-\frac{2\erate^2}{9q^{2}}-\frac{n\pstar}{q(n-\mt_{0}-n\pstar+\lat[\mt_{0}])}\right)
			 < (n-\mt_{0})\left(\frac{q-1}{q}-\mikel{\frac{2\erate^2}{9q^{2}}}\right)$.}
			By Claim~\ref{claim:good-3}, the code suffix $\rmegacode[k_{0}]$ is $\sigma$-good with respect to message $m$ and list $\clist$ with probability $1-q^{-n}$ over code design. Hence, for any Low Type Trajectory, our code design possesses the two properties stated in the claim. Moreover, in this case we have $\tstar=t_0$.
			
			For any High Type Trajectory, we have $p_{t_0} \geq \phat_{t_0}$. By Claim~\ref{claim-above}, given any trajectory $\pref{\ppp}$ of High Type, the trajectory $\pref{\ppp}$ always intersects with $\phatt$ no later than the position $\mt = \lat+n-\frac{q}{q-1}n\pstar-\chunk$. 
			Let $\tstar$ be the chunk end immediately after the intersection point, at which $\ppp_{\tstar}\leq\phat_{\tstar}$ (which implies $\ppp_{\tstar-\chunk}\geq\phat_{\tstar-\chunk}>\ptil_{\tstar-\chunk}$). 
			Let $t=k\chunk\leq\tstar$. Then at any position $t$, by Corollary~\ref{coro:b-list-decodable}, with list decoding radius $\zt{(\mt-\lat)}\phat_{t}$, the code prefix $\megacode[k]$ is list decodable \zt{for errors} with list size $\blistord$ with probability $1-\frac{1}{n}$ over code design. Also, for $\tstar$, since $(\tstar-\lat[\tstar]) p_{\tstar}<(\tstar-\lat[\tstar])\phat_{\tstar}$, the transmitted message $m$ is in the list $\mlist$. 
			
			Since \mikel{$\ppp_{\tstar-\chunk}>\phat_{\tstar-\chunk}$}, then by Claim~\ref{claim-ingap} we have $\ppp_{\tstar}>\ptil_{\tstar}$, and further, by Claim~\ref{claim:p-tilde-t}, for any trajectory $\ppp_{t}$ of High Type, if $t \leq \tstar$ then the received word suffix with respect to position $t$ has no more than \zt{a fraction of} $\zx{\frac{q-1}{2q}-\frac{\erate^2}{9q^{2}}-\frac{n\pstar}{2q(n-\mt_{0}-n\pstar+\lat[\mt_{0}])}}$ of its \zt{unerased symbols in error}. 
			\zx{\mikel{As above}, we have $n\pstar-\lat+(n-\mt-n\pstar+\lat)\left(\frac{q-1}{q}-\frac{2\erate^2}{9q^{2}}-\frac{n\pstar}{q(n-\mt-n\pstar+\lat)}\right) < (n-\mt)\left(\frac{q-1}{q}-\mikel{\frac{2\erate^2}{9q^{2}}}\right)$}. By Claim~\ref{claim:good-3} the code suffix with respect to position $t$, $\rmegacode[k]$, is $\sigma$-good with respect to message $m$ and list $\clist$ with probability $1-q^{-n}$ over code design. Thus far, for any High Type Trajectory, both the properties in the claim are also satisfied by our code design.
			
			In conclusion, the probability that the code $\mcode$ possesses the two properties is at least $1-\frac{1}{n}-q^{-n}$.
		\end{proof}
		
		\begin{remark}
			\label{rem:decode}
			Note that, using the code from Claim~\ref{claim:prob-good-exist}, the position $\tstar$ can found by Bob through an iterative decoding process starting from the position $t_{0}$, and therefore, the decoding process of Bob can stop at some $\tstar$ correctly. More precisely, Claim~\ref{claim:prob-good-exist} ensures that every time Bob obtains a list of codewords, then no matter if the transmitted message $m$ is in the list $\mlist$ or not, the code suffix with respect to position $t \leq \tstar$ is $\sigma$-good with respect to message $m$ and the list $\clist$ of codeword suffixes. In other words, if $t$ is strictly smaller than $\tstar$ then the consistency decoding of Bob will not return any message, and when $t=\tstar$ the consistency decoding will return the correct message (all with high probability over the randomness of Alice). 
			Thus, Bob can correctly determine whether to continue the decoding process or not.
		\end{remark}
		
		\begin{claim}
			\label{claim:prob-decode}
			Let $\alfq=\alfexpq$ where $\pbar\in\left[0,\ppp\right]$. Let $$C=\capopq$$ and $R=C-\erate$. 
			For any message $m\in\msg$ and its corresponding encoding $\mathbf{x}\in\mathcal{X}^{n}$ using the code established in Claim~\ref{claim:prob-good-exist} and the encoder of Section~\ref{sec:lower-bound}, the decoding procedures described in Section~\ref{sec:lower-bound} allows Bob to correctly decode the message $m$ with probability at least $1-n\egood$ over the random secrets $s\in\secr$ available to Alice. 
		\end{claim}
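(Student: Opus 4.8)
The plan is to fix once and for all a code $\mcode$ satisfying the two properties of Claim~\ref{claim:prob-good-exist} (such a code exists with probability $\geq 1-\frac1n-q^{-n}$ over the ensemble), fix a message $m$ with its stochastic encoder, and fix an arbitrary error-and-erasure behavior of Calvin --- which may depend on the transmitted codeword, hence on Alice's secret vector $s=(s_1,\dots,s_{\chunknum})$, and which we may even strengthen by handing Calvin $m$, Bob's list, and non-causal access to the symbols past $\tstar$. From Claim~\ref{claim:prob-good-exist} this code provides: a chunk end $\tstar=\kstar\chunk$ at which $m\in\mlist$ (list size $\blistord$, by Corollary~\ref{coro:b-list-decodable}); and, at every chunk end $t$ with $t_0\le t\le\tstar$, the bounds $(n-t-n\pstar+\lat)\bigl(\frac{q-1}{2q}-\frac{\erate^2}{9q^{2}}\bigr)-\frac{n\pstar}{2q}$ on the number of symbol errors and $(n-t)\bigl(\frac{q-1}{q}-\frac{2\erate^2}{9q^{2}}\bigr)$ on erasures-plus-twice-errors in the received word suffix, together with the fact that the code suffix $\rmegacode$ is $\sigma$-good ($\sigma=\egood$) with respect to $m$ and the list $\clist$ of codeword suffixes of $\mlist\setminus\{m\}$. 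The proof then traces Bob's iterative consistency decoder from $t_0$ to $\tstar$ and shows the only source of failure is a ``bad'' draw of suffix secrets, whose probability is governed by $\sigma$.

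First I would pin down the decoding outcome at a fixed chunk end $t=k\chunk$. Call the suffix secrets $(s_{k+1},\dots,s_{\chunknum})$ \emph{good at $t$} when the transmitted codeword suffix of $m$ with respect to $t$ is of Hamming distance more than $\frac{(n-t)(q-1)}{q}-\frac{(n-t)2\erate^2}{9q^{3}}$ from every suffix in $\clist$. When the secrets are good at $t$: by the error bound the received word suffix agrees with the transmitted $m$-suffix on all but few unerased positions, and the consistency radius of the decoder of Section~\ref{sec:model} is calibrated (this is the role of the radii in Definition~\ref{def:good-1} and of the energy-bounding constant, as summarized in Remark~\ref{re:coro-good-2}) precisely so that the triangle inequality puts the received word suffix outside every consistency ball centred at a suffix of $\clist$ while keeping it inside its own consistency ball around the $m$-suffix. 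Hence the consistency decoder at $t$ returns $m$ if $m\in\mlist$ and returns nothing otherwise; it never returns a wrong message. Applying this at $t=\tstar$ (where $m\in\mlist$) shows Bob halts at $\tstar$ with the correct message --- or halts earlier at some $t<\tstar$, which by the same dichotomy can only be with $m$ as well. So if the suffix secrets are good at every chunk end in $[t_0,\tstar]$, Bob decodes $m$ correctly.

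It remains to bound the probability over $s$ that the suffix secrets are bad at some chunk end. The crux is Calvin's causality: for a fixed chunk end $t=k\chunk$, Bob's list $\mlist$ at position $t$ --- and hence $\clist$ --- is a deterministic function of the (fixed) code, of $m$, of Calvin's own randomness, and of the prefix secrets $(s_1,\dots,s_k)$ alone, since the secrets $(s_{k+1},\dots,s_{\chunknum})$ have not yet been drawn by Alice at time $t$ and are invisible to Calvin there. Thus, conditioning on $(s_1,\dots,s_k)$ freezes $\clist$ while $(s_{k+1},\dots,s_{\chunknum})$ stays uniform, and the $\sigma$-goodness of $\mcode$ from Claim~\ref{claim:prob-good-exist} (Definition~\ref{def:good-2}) bounds by $\sigma$ the conditional probability that they are not good at $t$; averaging over $(s_1,\dots,s_k)$ preserves the bound $\sigma$. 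The strengthening that lets Calvin corrupt the suffix non-causally is harmless, since we used only the \emph{count} of suffix errors (not their locations) and $\clist$ was already frozen before any suffix symbol is touched. A union bound over the at most $1/\echunk$ chunk ends in $\setchend$ gives total failure probability at most $\sigma/\echunk\le n\,\egood$ for large $n$ (as $1/\echunk=9q^2/\erate^2$ is constant); on the complementary event Bob decodes $m$ correctly, which is the claim. The step I expect to be the main obstacle is exactly this independence argument: cleanly identifying the ``free'' secret block at each intermediate chunk end and checking that none of the ways we have strengthened Calvin reintroduces a dependence between $\clist$ and that block --- $\sigma$-goodness is a property of the code and the list that must already be in force before Calvin acts on the suffix.
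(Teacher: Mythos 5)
Your proof is correct and follows essentially the same route as the paper's: invoke the two properties of Claim~\ref{claim:prob-good-exist}, use goodness of the suffix secrets (with the radii calibrated as in Remarks~\ref{re:coro-good-2} and~\ref{rem:decode}) to show the consistency decoder returns either $m$ or nothing at each chunk end $t_{0}\le t\le\tstar$ and returns $m$ at $\tstar$, then union bound the $\sigma$-bad suffix-secret draws over chunk ends to obtain $1-n\egood$. Your explicit conditioning on the prefix secrets to freeze $\clist$, and your per-$t$ dichotomy (which avoids the paper's stronger assertion that $m\notin\mlist$ for $t<\tstar$), amount to a slightly more careful write-up of the same causality/independence argument the paper relies on.
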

		
		\begin{proof}
			A decoding error occurs if the consistency decoder fails to return a single message or if the decoder returns a message that is not equal to the transmitted message. 
			For all $t$ strictly less than $\tstar$ of Claim~\ref{claim:prob-good-exist}, we have by property (2) of Claim~\ref{claim:prob-good-exist}, Remark~\ref{rem:decode}, and by the definition of Step (3) of our decoding procedure that the consistency check in the decoding process will not return any message (with probability $1-\sigma$ over the randomness of the encoding). 
			\zx{More precisely, by Definition~\ref{def:t-star} and the definition of our iterative decoding process, for any $\mt$ strictly less that $\tstar$, we have $\ppp_{\mt}>\phatt$. Then since our list-decoding radius is $\mt\phatt<\mt\ppp_{\mt}$, the list we obtain from the list-decoding phase will not include the transmitted message and the consistency decoder will not return any message with high probability.}
			In addition, for $t=\tstar$, with the same probability, the consistency check of the decoding process will return the correct message.
			\zx{Specifically, for $\mt=\tstar\neq\mt_{0}$, by Claim~\ref{claim-ingap} we have $\ppp_{\tstar}\geq\ptil_{\tstar}$. 
			For $\mt=\tstar=\mt_{0}$, by Claim~\ref{claim-zero-starting}, we have the energy bounding condition satisfied by $\ppp_{\mt_{0}}$, and by Definition~\ref{def:p-tilde-t}, we have $\ppp_{\tstar}\geq\ptil_{\tstar}$. 
			As the energy bounding condition is satisfied at $\tstar$ and $\ppp_{\tstar}\geq\ptil_{\tstar}$, we have by Claim~\ref{claim:p-tilde-t}, the amount of errors in the codeword suffix is bounded, and therefore, by the definition of our consistency decoder and Claim~\ref{claim:good-3}, the consistency decoder will return the correct message with high probability.}
			In both cases, the success probability is obtained by the probability that the sequence of $l$ secrets used in the codeword suffix is not chosen from the particular $\sigma$ portion of $\secr^{l}$ that may cause a decoding failure. 
			
			From Claim~\ref{claim:good-3}, we have $\sigma=\egood$.
			Therefore, the probability of successful decoding is at least
			\begin{align*}
			1-n\sigma=1-n\egood.
			\end{align*}
		\end{proof}
		
		\begin{theorem}
			\label{thm:capacity}
			The capacity $C$ of $q$-ary causal adversarial channels with symbol errors and erasures is
			\begin{align}
			\capopq
			\label{eq:cap-q-ary}
			\end{align} where $\alfq=\alfexpq$.
		\end{theorem}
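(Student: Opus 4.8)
The plan is to derive Theorem~\ref{thm:capacity} by assembling the converse of Appendix~\ref{sec:upper-bound} with the achievability machinery of Appendix~\ref{sec:lower-bound}; at this point essentially no new mathematical content is required, and the task reduces to matching the $\erate$, $\beta$, $\xi$ quantifiers in the definition of capacity and to invoking the converse at the babble parameter $\pbar$ that realizes the minimization defining $C=\capopq$. We work in the range $\ppp\le\frac{q-1}{2q}$, $\pstar\le\frac{q-1}{q}$, $\ppp+\pstar\le\frac{q-1}{q}$ underlying this part of the paper; outside it the claimed capacity is $0$, which follows from the $q$-ary Plotkin bound (Theorem~\ref{thm:plotkin}) together with the symmetrization argument already used in Appendix~\ref{sec:upper-bound}.

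For the achievability direction I would fix $\beta>0$ and $\xi>0$, choose $\erate<\beta$ small enough that the standing assumptions of Appendix~\ref{sec:lower-bound} hold (notably $2\ppp+\pstar+\erate\le\frac{q-1}{q}$, which keeps $\alpha_q(\cdot)$ strictly positive on the relevant interval), and set $R=C-\erate$. Claim~\ref{claim:prob-good-exist} shows that a code $\mcode$ drawn from the chunked random ensemble of Section~\ref{sec:model} is ``good'' --- for every adversarial error/erasure pattern there is a chunk end $\tstar$ at which the prefix list-decodes to a list of size $\blistord$ containing the transmitted message $m$, and the corresponding code suffix is $\sigma$-good with $\sigma=\egood$ --- with probability at least $1-\tfrac1n-q^{-n}>0$ over code design; hence for all sufficiently large $n$ a good code of rate $R=C-\erate$ exists. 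Fixing any such code, Claim~\ref{claim:prob-decode} (which invokes Remark~\ref{rem:decode} to certify that Bob's iterative decoder halts precisely at $\tstar$) bounds the probability of error for \emph{every} message $m$, hence the probability of error of $\mcode$, defined as the maximum over $m$, by $n\egood\to 0$. Taking $n$ large makes this below $\xi$, and since $q^{nR}=q^{n(C-\erate)}\ge q^{n(C-\beta)}$, the rate $C$ is achievable; therefore the channel capacity is at least $C$.

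For the converse I would let $\pbar_{\mathrm{opt}}\in[0,\ppp]$ be a minimizer of $\alpha_q(\pbar)\bigl(1-H_q(\pbar/\alpha_q(\pbar))\bigr)$ over $\pbar\in[0,\ppp]$, so that the minimum value equals $C$, and apply Theorem~\ref{thm:converse} with $\pbar=\pbar_{\mathrm{opt}}$. Since $\pbar_{\mathrm{opt}}\le\ppp$, the ``babble-and-push'' strategy is a legitimate causal attack in this regime: Calvin spends at most $n\ppp$ errors during the babble phase, and by Claims~\ref{cl:distinct-m-hmd}--\ref{cl:out-of-budget} his remaining error budget together with his $\pstar n$ erasure budget suffices to push Alice's codeword halfway toward Calvin's chosen codeword. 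Theorem~\ref{thm:converse} then asserts that any stochastic-encoding code of rate at least $C+\erate'$ (deterministic codes being the single-secret special case) has average --- hence maximum --- error probability at least $(\erate')^{O(1/\erate')}$, a positive constant independent of $n$. Thus if a rate $R>C$ were achievable, then with $\erate'=\beta=(R-C)/2$ and any $\xi<(\erate')^{O(1/\erate')}$ there would be, for large $n$, a code with $q^{n(R-\beta)}=q^{n(C+\erate')}$ messages and error probability at most $\xi$, contradicting Theorem~\ref{thm:converse} applied with parameter $\erate'$ and babble parameter $\pbar_{\mathrm{opt}}$. Hence the capacity is at most $C$, and combining the two directions it equals $\capopq$.

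I expect the only mildly delicate points to be (i) invoking the converse at the minimizing $\pbar_{\mathrm{opt}}$ rather than at an arbitrary $\pbar$, so that the attack pins the capacity to the claimed expression rather than to some smaller rate, and (ii) passing between the \emph{average} error probability produced by Theorem~\ref{thm:converse} and the \emph{maximum} error probability used in the definition of achievability and bounded in Claim~\ref{claim:prob-decode}; both are immediate from $(\text{max error})\ge(\text{average error})$ and the monotonicity of achievability in the rate. Neither is a substantive obstacle, so the proof of Theorem~\ref{thm:capacity} should be short, amounting to a careful citation of Claims~\ref{claim:prob-good-exist} and~\ref{claim:prob-decode} for the lower bound and Theorem~\ref{thm:converse} for the upper bound.
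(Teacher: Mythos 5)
Your proposal is correct and follows essentially the same route as the paper: achievability by combining Claim~\ref{claim:prob-good-exist} and Claim~\ref{claim:prob-decode} at rate $R=C-\erate$ (with the $\erate$-to-$\beta$ bookkeeping), and the converse by invoking Theorem~\ref{thm:converse} from the babble-and-push analysis. Your explicit remarks about choosing the minimizing $\pbar$ and about average versus maximum error probability are fine and only make explicit what the paper leaves implicit.
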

		
		\begin{proof}
			Let $\xi>0$ and $\beta>0$.
			The converse is proven in Section~\ref{sec:upper-bound}. Namely, for any code $\mcode$ with stochastic encoding of rate $R=C+\beta$, the average error probability is lower bounded by $\beta^{O\left(\frac{1}{\beta}\right)}$.
			The achievability proof follows from Claim~\ref{claim:prob-decode} in Section~\ref{sec:lower-bound}. Specifically, for sufficiently large $n$ it holds by Claim~\ref{claim:prob-decode} that the decoding error is bounded above by $\xi$.
			In addition, for sufficiently small $\epsilon$, by the continuity of the $q$-ary entropy function, the code rate $R=C-\erate$ of Claim~\ref{claim:prob-decode} is at least $C - \beta$.
			Therefore, for sufficiently large $n$, $q^{nR}=q^{n\left(C-\beta\right)}$ distinct messages can be reliably transmitted over our channel with error probability at most $\xi$. Hence, the channel capacity of $q$-ary causal adversarial channels with symbol errors and erasures is $C$.
		\end{proof}
		
		\section{Discussion of Special Cases}
		In this section, we discuss several special cases of $q$-ary causal adversarial channels.
		\subsection{Symbol Error Channel}
		For $q$-ary causal adversarial channels with symbol errors only, the above analysis can get modified by setting $\pstar=0$ and $\lat=0$ to obtain the corresponding \zx{capacity:
		\begin{align*}
		\capopq
		\end{align*} where $\alfq=1-\frac{2q}{q-1}\left(p-\bar{p}\right)$.}
		\subsection{Symbol Erasure Channel}
		For $q$-ary causal adversarial channels with erasures only, there is no need for a decoding reference trajectory $\phatt$ since erasures are visible. The corresponding list-decoding condition becomes
		\begin{align}
		\label{eq:list-con-e}
		\mt-\lat-\frac{n\erate}{4} \geq nR.
		\end{align}
		It can be shown that there exists $\mt\in\setchend$ and $(\mt-\lat)\in
		\left[ n\left(1-\frac{q}{q-1}\pstar-\frac{\erate^2}{4}\right),n\left(1-\frac{q}{q-1}\pstar-\frac{\erate^2}{9(q-1)}\right)\right]$ such that the following energy-bounding condition is satisfied.
		\begin{align}
		\label{eq:energy-con-e}
		n\pstar-\lat+\frac{(n-t)\erate^2}{9q^2}\leq\frac{q-1}{q}(n-\leftcw)
		\end{align}
		\zt{With these modified conditions, the decoder Bob can pin-point the value of $\tstar$ for which the modified conditions are satisfied, and therefore, Bob is also able to determine his list decoding radius to be $\lat[\tstar]$.} The corresponding capacity is $$1-\frac{q}{q-1}\pstar.$$
		\subsection{Large Alphabet}
		For sufficiently large $q$, we have $\alfq\approx 1-2(p-\pbar)-\pstar$ and $H_{q}\left(\frac{\pbar}{\alfq}\right)\approx\frac{\pbar}{\alfq}$. Then we obtain
		\begin{align*}
		C& = \capopq\\
		& \approx \min_{\pbar\in\left[0,p\right]}\left[\alfq\left(1-\frac{\pbar}{\alfq}\right)\right]\\
		& = \min_{\pbar\in\left[0,p\right]}\left[\alfq-\pbar\right]\\
		& \approx \min_{\pbar\in\left[0,p\right]}\left[1-2p-\pstar+\pbar\right]\\
		& = 1-2p-\pstar
		\end{align*}
		\zx{Hence, for sufficiently large alphabets}, if the adversary has no erasure \zx{budget}, i.e., $\pstar=0$, the capacity is $1-2\ppp$, which matches the one given in \cite{dey2013codes}. On the other hand, if the adversary only has erasure \zx{budget}, i.e., $\ppp=0$, the capacity is $1-\pstar$.
		
		\zt{We also depict some of the special cases discussed above in Figure~\ref{fig:cap-era-err}, and a comparison of the binary online setting with other bounds in Figure~\ref{fig:bounds}.}
		
		\begin{figure}[htb]
			\centering
			\begin{subfigure}{.5\textwidth}
				\centering
				\includegraphics[width=1\linewidth]{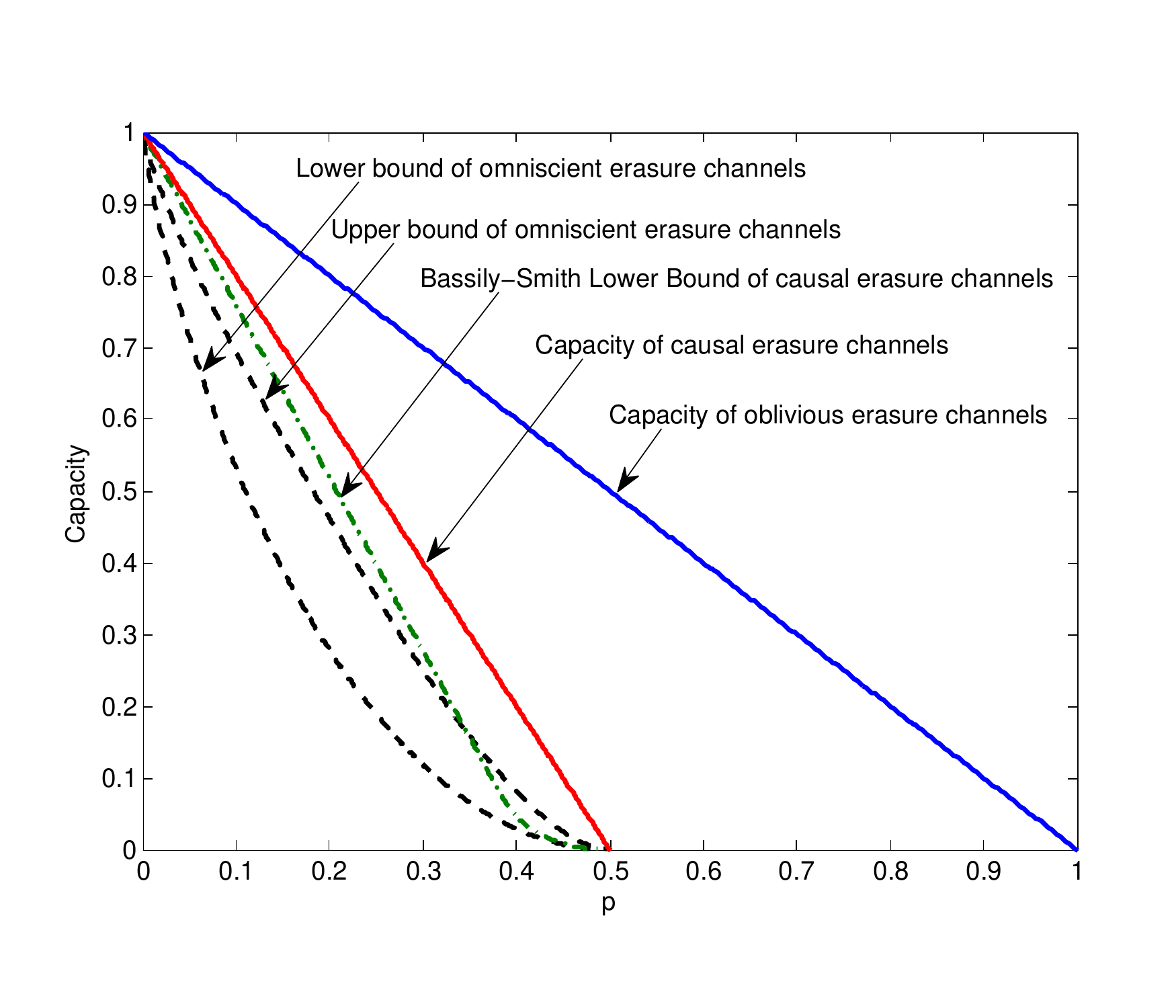}
				\caption{Binary adversarial erasure channels: The bound of $1-p$ (in blue) corresponds to the capacity of binary oblivious erasure channel. The MRRW bound and the GV bound (both in dotted black) are the best known upper and lower bounds for binary omniscient erasure channels. The lower bound for binary causal erasure channels by Bassily and Smith~\cite{bassily2014causal} is plotted in green.} 
				\label{fig:bound-bec}
			\end{subfigure}%
			\begin{subfigure}{.5\textwidth}
				\centering
				\includegraphics[width=1\linewidth]{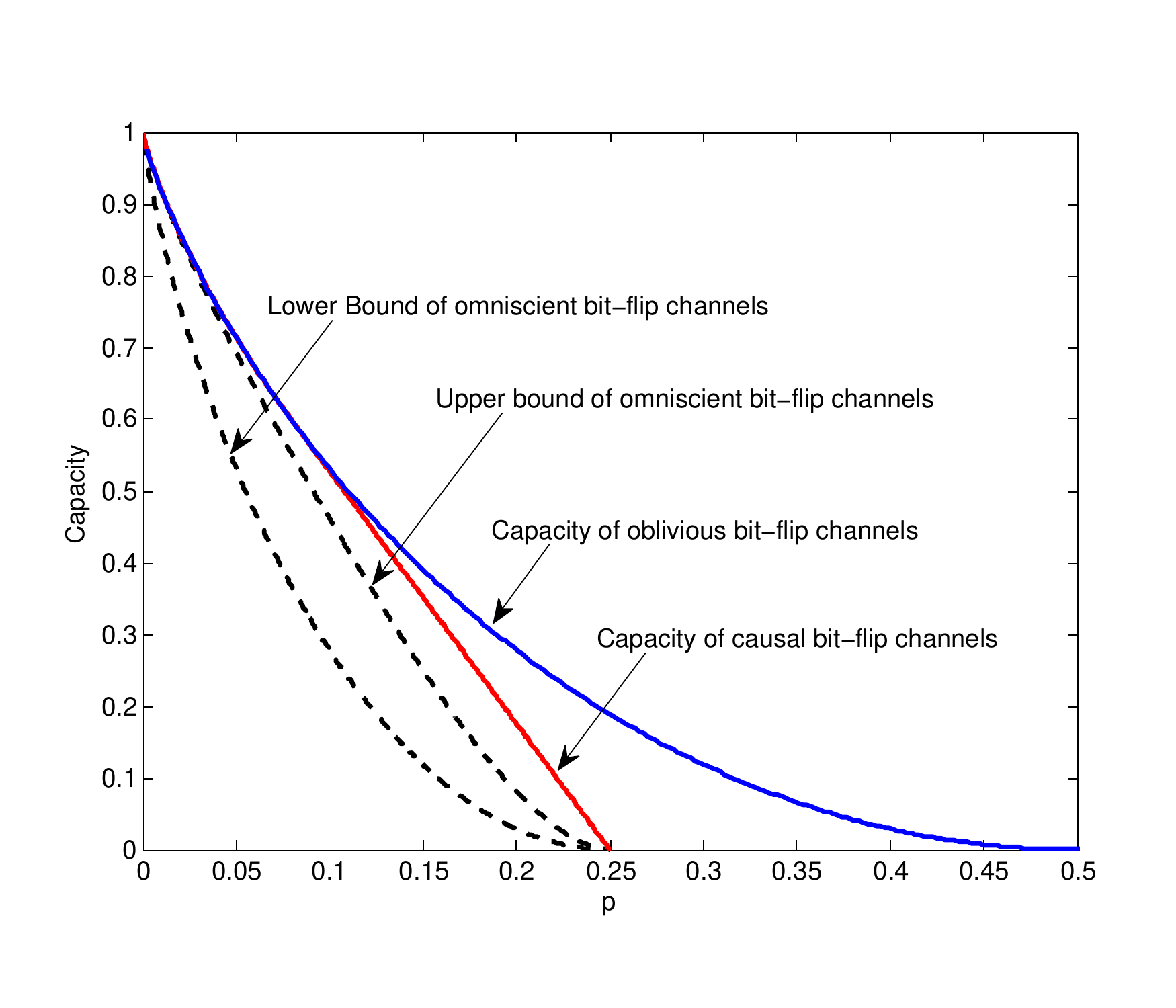}
				\caption{Binary adversarial bit-flip channels: The bound of $1-H(p)$ (in blue) corresponds to binary oblivious bit-flip channel. The MRRW bound and the GV bound are the best upper and lower bounds (both in dotted black) for binary omniscient bit-flip channels. For binary causal bit-flip channels, the previous lower bound by Haviv and Langberg~\cite{haviv2011beating} is a slight improvement over the GV bound.} 
				\label{fig:bounds-bsc}
			\end{subfigure}
			\caption[Bounds on the capacity of binary online adversarial channels]{Bounds on the capacity of binary online adversarial channels}
			\label{fig:bounds}
		\end{figure}
		
		\begin{figure}[htb]
			\centering
			\begin{subfigure}{.5\textwidth}
				\centering
				\includegraphics[width=1\linewidth]{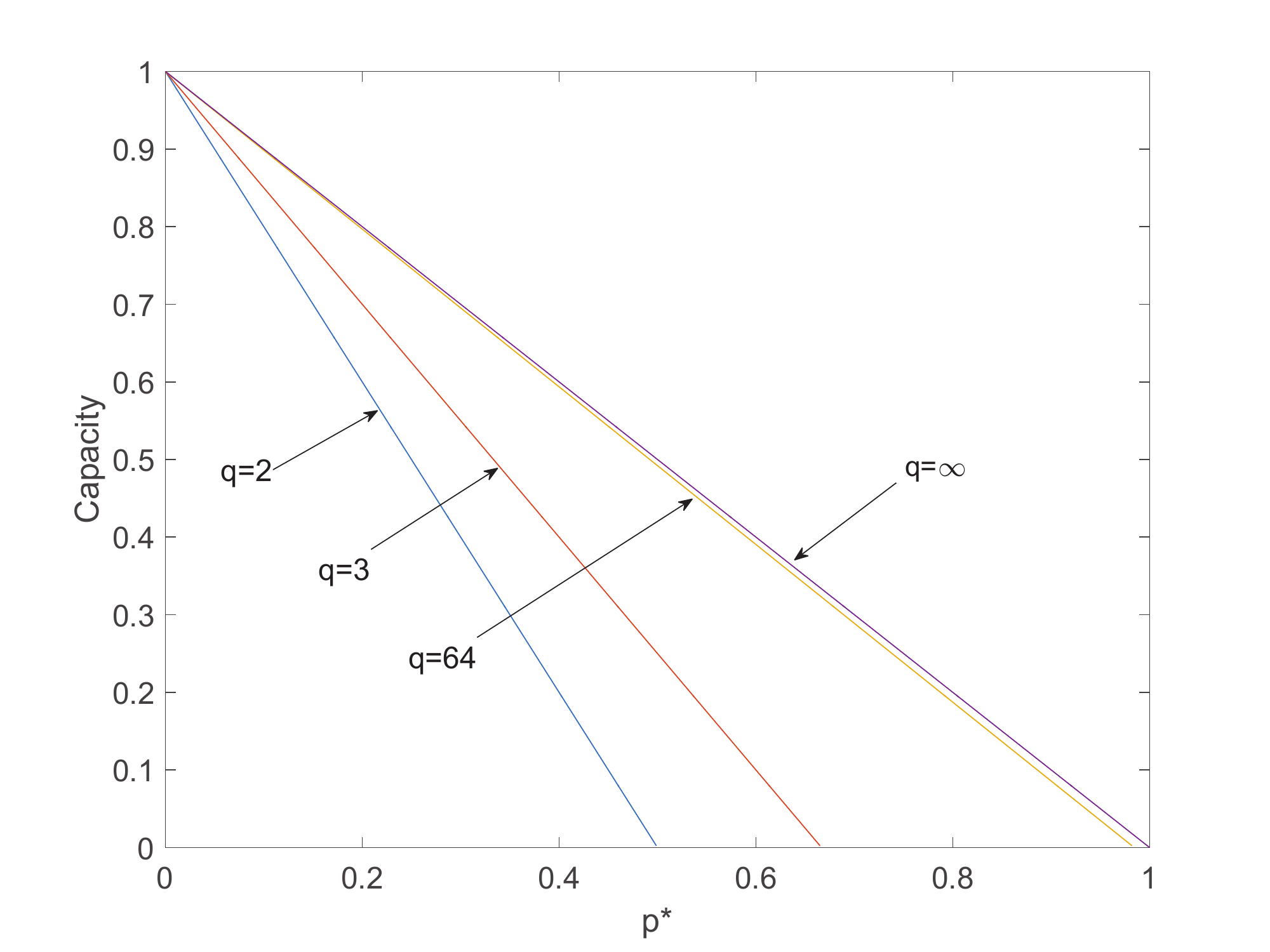}
				\caption{Online $q$-ary erasure channels} 
				\label{fig:cap-erasures}
			\end{subfigure}%
			\begin{subfigure}{.5\textwidth}
				\centering
				\includegraphics[width=1\linewidth]{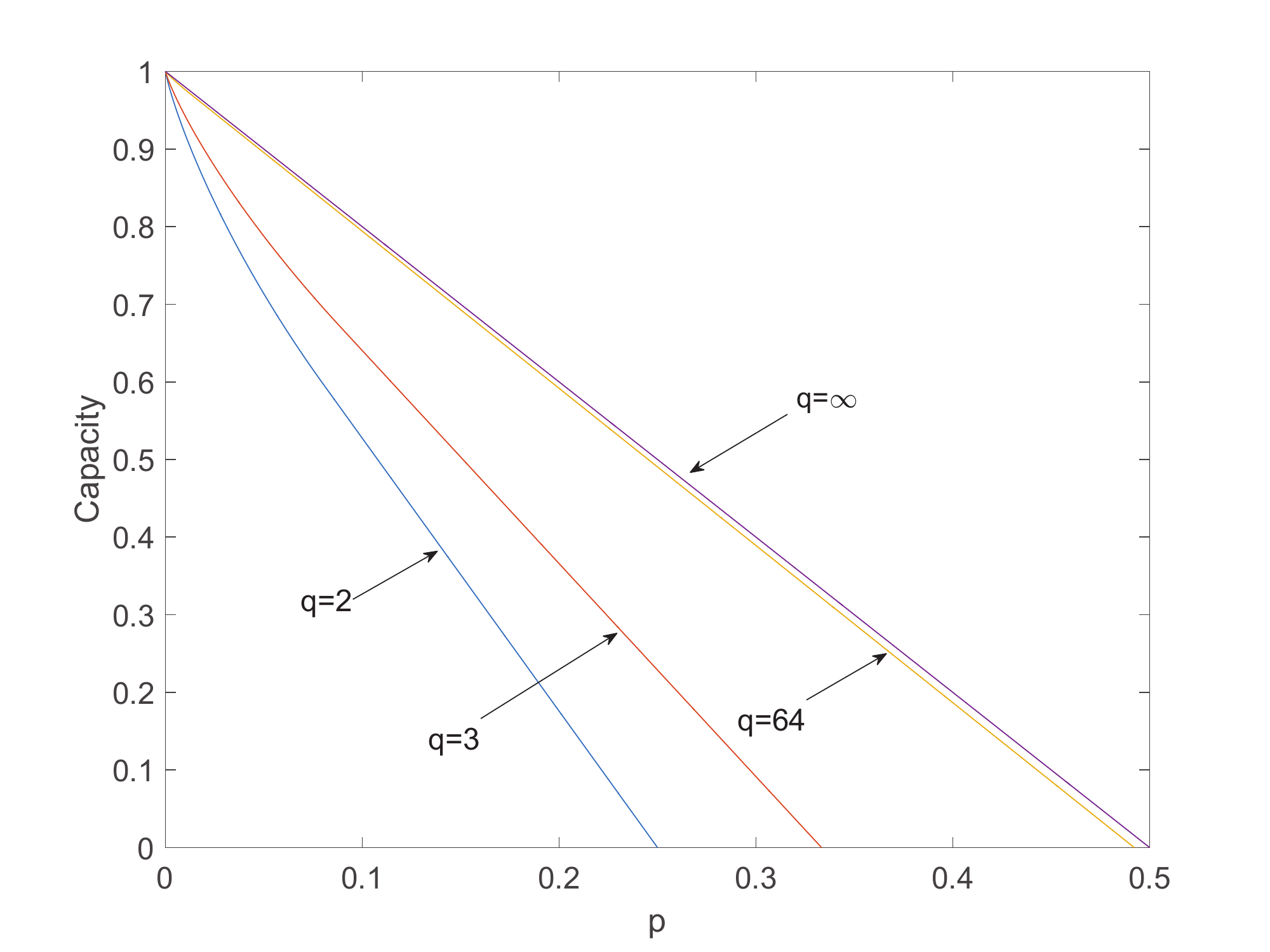}
				\caption{Online $q$-ary error channels}
				\label{fig:cap-errors}
			\end{subfigure}
			\begin{subfigure}{.5\textwidth}
				\centering
				\includegraphics[width=1\linewidth]{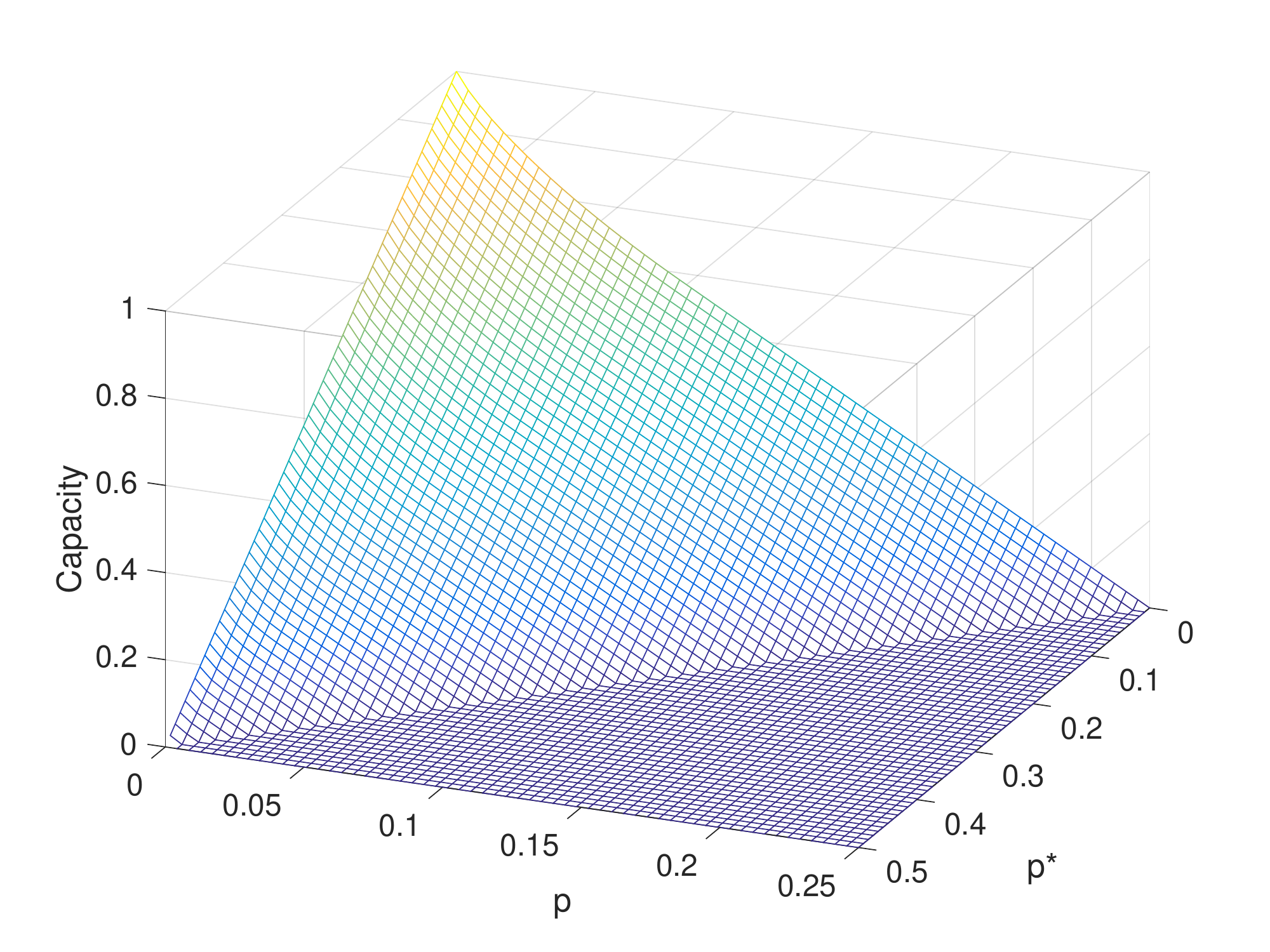}
				\caption{Online binary error-erasure channels} 
				\label{fig:cap-error-erasure-2}
			\end{subfigure}%
			\begin{subfigure}{.5\textwidth}
				\centering
				\includegraphics[width=1\linewidth]{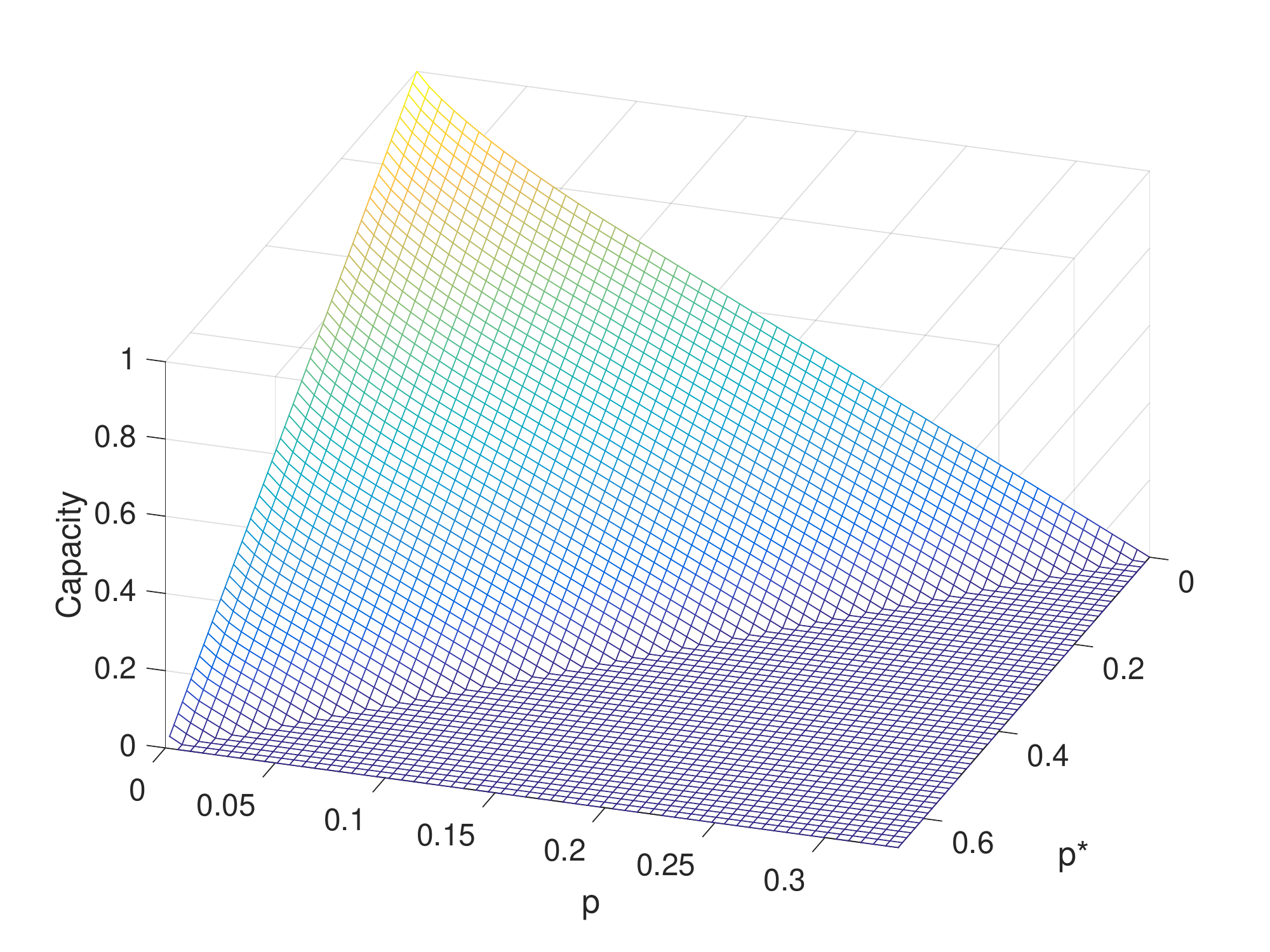}
				\caption{Online ternary error-erasure channels}
				\label{fig:cap-errors-erasure-3}
			\end{subfigure}
			\caption[Capacity of $q$-ary channels]{Capacity for a number of online $q$-ary channels}
			\label{fig:cap-era-err}
		\end{figure}
		
		\newpage
		\begin{table}[htbp]
			\renewcommand{\arraystretch}{1.4}
			\centering
			\caption{Table of Parameters}
			\begin{tabular}{|c|l|c|}
				\hline
				symbol & description & equality/range \\ \hline
				
				$C$ & capacity & $\eqref{eq:cap-q-ary}$ \\ \hline
				
				$n$ & block length & \\ \hline
				$\ppp$ & fraction of a codeword that can be changed & $\left(\intverr\right)$ \\ \hline
				$\pstar$ & fraction of a codeword that can be erased & $\left(\intvera\right)$ \\ \hline
				$\echunk$ & ``quantization'' parameter & $\frac{\erate^2}{9q^2}$ \\ \hline
				
				$R$ & code rate & $C-\erate$ \\ \hline
				$S$	& private secret rate & $\zx{\echunk^3/q^2}$ \\ \hline
				
				$\msg$ & message set & $\msg=\left[q^{nR}\right]$\\ \hline
				$\secr$ & secret set & $\secr=\left[q^{nS}\right]$ \\ \hline
				$\mathcal{X}$ & input alphabet & $\qaryset$ \\ \hline
				$\mathcal{Y}$ & output alphabet & $\qaryset\cup\{\Lambda\}$ \\ \hline
				
				$\mathcal{T}$ & set of chunk ends & $\choiceschunk$ \\ \hline

				$\mbfu$ & random variable of input message & \\ \hline
				$\mbfx$ & random variable of input codeword & \\ \hline
				$\mbfy$ & random variable of output word & \\ \hline
				
				$m$ & message & $m\in\msg$\\ \hline
				$\mathbf{x}$ & codeword & $\mathbf{x}\in\mathcal{X}^{n}$ \\ \hline
				$s$ & secret & $s\in\secr$\\ \hline
				$\mathbf{s}$ & secret & $\mathbf{s}\in\secr^{n}$\\ \hline
				
				$\mt$ & length of prefix & $t\in\setchend$ \\ \hline
				$\lat$ & number of erasures up to position $\mt$ & \\ \hline
				$k$ & number of chunks in the prefix w.r.t. position $t$ & $k=\frac{t}{\chunk}$\\ \hline
				$l$ & number of chunks in the suffix w.r.t. position $t$ & $l=\chunknum-k$\\ \hline
				
				$\pref{\ppp}$ & adversary's trajectory & \\ \hline
				$\pbar_{t}$ & guess of random noise & \eqref{eq:p-bar-t} \\ \hline
				$\phatt$ & decoding reference trajectory & \eqref{eq:p-hat-t} \\ \hline
				$\tilde{p}_{t}$ & energy bounding trajectory & \eqref{eq:p-tilde-t} \\ \hline

				$\mlist$ & a list of messages & \\ \hline
				$\clist$ & a list of codeword suffixes excluding suffixes corresponding to $m$ & \\ \hline
				$\mlists$ & list size of $\mlist$ & $\blistord$ \\ \hline
				$\clists$ & list size of $\clist$ & $q^{nSl} \cdot \blistord$ \\ \hline
				
			\end{tabular}
			\label{tab:q-params}
		\end{table}
\end{appendices}
	
\end{document}